\documentclass[11pt,a4paper]{article}

\pdfoutput=1

\usepackage[utf8]{inputenc}
\usepackage[T1]{fontenc}
\usepackage{lmodern}
\usepackage{microtype}

\usepackage[margin=1in]{geometry}

\usepackage{amsmath,amssymb,amsthm,mathtools}

\usepackage{graphicx}
\usepackage{float}

\usepackage{xcolor}
\usepackage[colorlinks=true,linkcolor=blue!70!black,citecolor=green!50!black,urlcolor=blue!60!black]{hyperref}
\usepackage{url}
\usepackage[nameinlink,capitalize]{cleveref}

\usepackage{array}

\usepackage{listings}

\newtheorem{theorem}{Theorem}[section]
\newtheorem{lemma}[theorem]{Lemma}
\newtheorem{corollary}[theorem]{Corollary}

\newtheorem{definition}[theorem]{Definition}
\newtheorem{example}[theorem]{Example}
\newtheorem{remark}[theorem]{Remark}

\newcommand{\D}{\texttt{\textbf{d}}}
\newcommand{\eps}{\varepsilon}
\newcommand{\R}{\mathbb{R}}
\renewcommand{\SS}{\mathbb{S}^1}
\newcommand{\Sh}{\mathcal{S}}
\newcommand{\F}{\mathcal{F}}
\newcommand{\Rm}{\mathcal{R}_m}
\newcommand{\dir}{\mathsf{d}}
\newcommand{\RoC}{\mathsf{RoC}}
\newcommand{\DFT}{\mathsf{DFT}}

\newcommand{\Rot}{\mathsf{Rot}}
\newcommand{\ms}{\ell^2(\mathbb{Z}_m)}
\newcommand{\cms}{\mathbb{Z}_m}

\graphicspath{{figures/}}

\title{Rotation-Invariant Vectorized Shape Representations}

\author{%
  Hamid Shafieasl\thanks{University of Utah, USA. Email: \texttt{h.shafieasl@utah.edu}. \url{https://hamidmath.github.io}} \and
  Jeff M. Phillips\thanks{University of Utah, USA. Email: \texttt{jeffp@cs.utah.edu}. \url{http://www.cs.utah.edu/~jeffp}}%
}
\date{}

\begin{document}

\maketitle

\begin{abstract}
We introduce a rotation-invariant representation of planar shapes.  In particular, this representation encodes shapes as vectors such that the Euclidean distance between them serves as a valid shape distance.  For standardized, star-shaped objects, we can deterministically create a sketched vector of dimension $O(1/\eps)$ in $O((1/\eps) \log (1/\eps))$ time that approximates this shape distance to within $\eps$.  
Moreover, because the representation is a standard Euclidean vector, we can directly and efficiently perform various data analyses, such as nearest neighbor search and clustering, in shape space, inherently invariant to the rotation of the shapes.  We demonstrate this through a series of simple experiments.  

The key technical contribution operates on functions over $\mathbb{S}^1$, which we use to encode standardized objects.  The most general rotation-invariant representation of these functions works through a map to an infinite-dimensional function space, parameterized by an offset parameter.  By analyzing special discretized cases of these functions, we show that the representation is strictly injective up to the desired rotation and a mirror-flip-type operation we call \emph{reverse of complement} (RoC).  While RoC status can be controlled by how the function is defined, it is inherent to the representation and required to be handled in the analysis.  Regardless, the vectorized representation is robust to small shape perturbations, and hence discretizing the angles leads to the efficient approximation and algorithm.    
\end{abstract}

\section{Introduction}
\label{sec:intro}

The \emph{shape} of an object (a compact set $X \subset \R^d$) refers to its properties modulo position and orientation.  A core challenge when working in shape spaces~\cite{dryden98,kendall2009shape} is defining a proper distance between shapes, notably, this distance must not depend on the specific positional embedding of the shape. That is, the comparison between two shapes should return the same value under any allowable transformation of one of the shapes.  

There have been three major approaches to this challenge.  
First, and perhaps most popular in computational geometry~\cite{huttenlocher1992dynamic,veltkamp2001state}, is to compute a shape distance $\D(X_1, X_2) = \min_{t \in \R^d, R \in \mathbb{SO}(d)} \D'(X_1, R(X_2 + t))$ where $t$ is a translation~\cite{ahsw-hdtpd-03,ben2014minimum,chan2025linear}, $R$ a rotation~\cite{agarwal2006bipartite,cheng2013shape,chan2024convex}, and $\D'$ is a traditional distance between sets (e.g., Hausdorff~\cite{chew1997geometric} or largest polygon placement~\cite{agarwal1998largest}) which relies on the positional embedding of sets $X_1$ and $X_2$.  For some $L_2$ variants where an alignment is known, the Procrustes approach~\cite{horn1987closed,arun1987least,hanson1981analysis,faugeras1983} offers closed-form solutions. However, the minimization over choices of $t$ and $R$ is typically computationally challenging, although it has been carefully studied and formalized as in the citations above.  
The second approach is to compute intrinsic shape characteristics that do not depend on positional information, and define a distance between these characteristics. Such approaches include, for instance, the turning function between curves~\cite{arkin1991efficiently}, the Gromov-Hausdorff distance~\cite{memoli2008gromov,memoli2012some,schmiedl2017computational}, and the Persistent Homology Transform~\cite{arya2025sheaf,curry2022many}.  These distances again typically require optimizing an alignment between the characteristic-bearing parts of the shapes, making them computationally demanding.  
The third category comprises machine learning-driven heuristics~\cite{xiao2023unsupervised} that compute rotation-invariant representations~\cite{osada2002shape}. Classic examples include those based on spherical harmonics~\cite{kazhdan2003rotation}, while modern methods often use the PointNet architecture~\cite{qi2017pointnet,qi2017pointnet++}. While these representations are empirically robust and translation/rotation invariant, they lack formal analysis regarding which metric properties they preserve and how fine a resolution is needed to distinguish different objects.  A major advantage of these heuristic approaches is that they encode the shape as a vector~\cite{phillips2020sketched}, allowing for fast downstream computation without the need to optimize an alignment.

\paragraph*{This paper} introduces a new representation for planar shapes that is: 
\begin{enumerate}
\item rotation invariant; 
\item a vector, and the shape distance is Euclidean distance between the sketched vectors; 
\item up to a controlled level of discretization, the Euclidean vector distance is a metric between the base shapes, modulo rotation and an inherent $\RoC$ operation; and
\item can $\eps$-approximate standardized star shapes in $O(\frac{1}{\eps})$ dimensions in $O(\frac{1}{\eps} \log \frac{1}{\eps})$ time.  
\end{enumerate}

\paragraph*{Power of vectors.}
We reiterate that the second property means we can embed shapes as (high-dimensional) vectors, and then no longer need to worry about optimizing an alignment between them.  The ability to directly use the Euclidean distance means that this can immediately plug into all of the amazing infrastructure built for this setting.  This includes recent advances in fast nearest neighbor search~\cite{aumuller2020ann} using NHSW structures~\cite{malkov2018efficient,douze2025faiss} (or LSH~\cite{andoni2015practical} if one prefers stronger guarantees).  Moreover one can create mean shape representations using the sketch representation, and thus directly apply $k$-means clustering.  Further this representation is the standard input for any machine learning library for classification or regression.  
While a full exploration of this power is beyond the scope of this paper, we do demonstrate the stability and convenience of this representation through some simple retrieval and clustering applications on $2$-dimensional shapes.

Another important aspect of such sketched representations is the computational cost to construct them.  As with any Euclidean vectors, the dimensionality can be reduced to $O((1/\eps^2) \log n)$ dimensions to preserve relative $(1+\eps)$ error among $n$ shapes, and our method allows us to directly compute such a randomized approximate embedding akin to Rahimi-Recht random Fourier features (RFFs)~\cite{RahimiRecht2007} for the kernel methods~\cite{hofmann2008kernel}.  However, more interestingly, we show how to deterministically create representations efficiently with a fast Fourier transform~\cite{CooleyTukey1965}. This leads to an $\eps$-approximate representation in $O((1/\eps)\log(1/\eps))$ time.

\paragraph*{Rotationally discretizing shapes.}
The core analysis of our method operates in $\SS$, and in fact a discrete version of it $\SS_m$.     
We convert a shape into a function on $\SS_m$ by considering \emph{standardized star shapes}, which are objects where every point in the shape can be connected to the origin by a line segment fully contained within its interior.  We discretize shapes by their polar coordinates uniformly in $m$ wedges, and each distance maps to one value in $\SS_m$; this mapping is common in shape analysis~\cite{marr1977,belongie2002,Mumford1993,Srivastava2016}.

\paragraph*{Sketching discrete $\SS$.}
The core technical result is a rotation-invariant sketch for functions $f$ on $\SS_m$ mapped to $V_f \in \R^m$.  In this setting we can index $f : [m] \to \R$ (more generally $f : [m] \to \R^w$) by an integer $j \in [m] = \{0,1, 2, \ldots, m-1\}$.  
The $k$th coordinate of $V_f$, denoted $V_f(k) \in \R$, encapsulates all pairs of entries in $f$ with a difference of $k$ in their index.  We set $V_f(k) = \sum_{j=1}^m \Phi(f(j) - f(j+k))$ where typically $\Phi(z) = \exp(-z)$.  The characteristic nature of $\Phi$ ensures that the finite sum of the terms in $V_f(k)$ must originate from a unique set $\Delta_f(k)$ of distances.  Moreover, since this only compares offsets, it is invariant to rotation.  The challenging aspect is showing that in whole (using each $V_f(k)$ for $k \in [m]$) this sketch is \emph{only} invariant up to rotation, and no other transformations;  so difficult that it is actually not strictly true!  It also permits a constant offset operation, as well as a natural new sort of mirror-flip transformation we call \emph{reverse-of-complement (RoC)}.  However, these operations are fairly simple, and the offset can be controlled by setting the scale in $V_f$ (based on how we derive them from shapes).

\section{Preliminaries: Families of Shapes and Functions Considered}
\label{sec:prelim}

To make our theorems precise, we describe some specific families of shapes and functions.  

\paragraph*{Shape families.}
We consider \emph{standardized shapes} $\Sh$ which are compact sets $X$ in $\R^d$ which have been mean centered so $\bar x = \int_{x \in X} x$ is the origin $\textbf{0} = (0,\ldots, 0)$, and has bounded radius of the shape: $\max_{x \in X} \|x\| \leq 1$.  Any shape can be converted into this form by a simple standardization process: (1) compute its mean point $\bar x$ and subtract it from all coordinates, and (2) compute its maximum radius $\tau$ and divide all coordinates by $\tau$.  
As a further restriction, we sometimes also consider \emph{standardized star shapes} $\Sh_\star$.  These are standardized shapes $X \in \Sh$ that are also star-shaped with respect to the origin. This means all points $x \in X$ can be connected to the origin $\textbf{0}$ with a line segment that does not leave the shape.  

Note that by considering standardized shapes, the choice of translation (and scaling) is removed.  This allows us to focus on the rotational-invariance property.  

We focus on $d=2$ dimensions, and consider a discrete version of rotation; for some positive integer parameter $m$, let $\mathcal{R}_m = \mathbb{SO}(2)_m$ be rotations at $j \frac{2 \pi}{m}$ radians for $j \in [m] = \{0, 1, 2, \ldots, m-1\}$.  We pair this rotational restriction with shape families $\Sh^m$ which are assumed to be uniform within $m$ wedges around the origin.  Specifically, any ray $r_\theta$ from the origin, with angle of $\theta \in \Theta_j = \left[j \frac{2 \pi}{m}, (j+1) \frac{2\pi}{m}\right)$ from the horizontal direction, has the same intersection with $X$.  That is, let $I(r_\theta \cap X)$ describe a subset of $[0,1]$ (recall standardized shapes ensure the extent of the intersection is at most a distance $1$; and for shape in $\Sh_\star$ this subset is an interval $[0, a_\theta]$ for some $a_\theta \in (0,1]$), and for two $\theta, \theta' \in \Theta_j$ then $I(r_\theta \cap X) = I(r_{\theta'} \cap X)$.  This restriction ensures that if two shapes $X_1, X_2 \in \Sh^m$ are identical under some rotation $R \in \mathbb{SO}(2)$ (i.e., $X_1 = R(X_2)$) then, this rotation must be in the discrete rotation set $\Rm$.  And for any shape $X \in \Sh^m$ and rotation $R \in \Rm$, then we also have $R(X) \in \Sh^m$ (it forms a group).  

Moreover, let us consider \emph{$\eta$-Lipschitz} star-shaped standardized shapes $\Sh_{\star,\eta}$; these $X \in \Sh_{\star,\eta}$ ensure that for two rays $r_\theta, r_{\theta'}$ from the origin that define intersections $I(r_\theta \cap X) = [0,a_\theta]$ and $I(r_{\theta'} \cap X) = [0,a_{\theta'}]$ satisfy that $|a _\theta- a_{\theta'}| \leq \eta \frac{1}{2\pi}|\theta - \theta'|$.  
For such  \emph{$\eta$-Lipschitz} shapes $X \in \Sh_{\star,\eta}$, then we will be able to show that for any error parameter $\eps$ and a distance $\D$, when $m = O(1/\eps)$ we can deterministically construct an $X' \in \Sh_\star^m$ such that $\D(X, X') \leq \eps$.

\paragraph*{Function families on $\SS$.}
As previewed, we convert these shapes in a rotational invariant way into functions on $\SS$.  
We restrict to a discrete version of $\SS$, denoted $\SS_m$ which is indexed by $m$ values in $[m] = \{0, 1, 2, \ldots, m-1\}$.  
We again restrict the family $\F$ of $2\pi$-periodic functions, so $f \in \F$ is  $f : \SS \to \mathbb{A}$, where $\mathbb{A}$ is the set of real algebraic numbers.  
We also consider two easy to analyze families.  
Let $\F^m_{\neq}$ be the family where all values $f(j)$ are distinct, and moreover all pairwise differences $f(j) - f(j')$ for $j\neq j' \in [m]$ are also distinct.  
Second, let $\F^m_\sigma$ be the family where $f : [m] \to [m]$ and $f(j) \neq f(j')$ for $j \neq j'$; this describes all permutations of the values in $[m]$.  These families lead to different types of analysis: $\F^m_{\neq}$ leverages uniqueness, while $\F^m_{\sigma}$ contains many tied difference values but possesses structural symmetry that aids case analysis.

\section{Sketching $\SS$}
\label{sec:sketch-S1}

In this section, we describe how to sketch functions $f \in \F$ on $\SS$ to a function space $\psi_f \in \Psi$, which can be (approximately) compactly represented as vectors $V_f \in \R^t$.  In particular, these representations will be invariant to rotations in $\SS$.  

\paragraph*{The Sketch.}
We will first define in the continuous setting, with infinite dimensional functional representation, but then focus on the easier to work with discrete case:
For parameter $\alpha \in [0, 2\pi)$ define 
\[
 \psi_f(\alpha) = \frac{1}{2\pi} \int_{\theta \in [0,2\pi)} \Phi(f(\theta) - f(\theta+\alpha)) \dir \theta.
\]
The mapping function $\Phi$ we consider will primarily be $\Phi(z) = \exp(-z)$.   However it will also be interesting to consider kernel functions such as the Laplace $\Phi(z) = \exp(-|z|)$ and Gaussian $\Phi(z) = \exp(-z^2)$. 
This can be interpreted as considering all function $f$ values offset by $\alpha$, processing them through a characteristic function $\Phi$, and averaging over all $\theta$.  

When considering $f \in \F^m$ and only rotations in $\Rm$, then there are only $m$ choices for $\theta$ and for $\alpha$ so the mapping simplifies to a result in a vector $V_f \in \R^m$ defined so the $j$th coordinate (representing angle $\theta = j \frac{2 \pi}{m}$), for $j \in [m]$ is 
\[
V_f(k) = \frac{1}{m} \sum_{j \in [m]} \Phi(f(j) - f(j + k))
\]
Recall that since $f \in \F^m$ is described on indexes in $[m]$, but the input $j+k$ of the second term might can be in $[2m]$, that $f$ takes inputs modulo $m$, so $f(m+i) = f(i)$.

\subsection{Invariances of the Sketch}

We first provide a straightforward observation regarding invariance to rotations and shifts.  

\begin{lemma}[Rotation and Shift Invariance]\label{rot_shift_invariance}
    For $f \in \F$, any rotation $\beta \in [0,2\pi)$ and any constant $c \in \R$, define $g \in \F$ as
    $
    g(\alpha) = f(\alpha + \beta) + c.  
    $
    Then the signature map satisfies
    $\psi_g(\alpha) = \psi_f(\alpha)$  for all  $\alpha \in [0,2\pi)$.
\end{lemma}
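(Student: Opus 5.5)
The plan is to substitute the definition of $g$ directly into the signature integral and use two facts: the additive constant $c$ cancels inside the difference, and the rotation $\beta$ can be absorbed by a change of variables that exploits $2\pi$-periodicity.

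First, fix $\alpha \in [0,2\pi)$ and compute the argument of $\Phi$ inside $\psi_g(\alpha)$:
\[
g(\theta) - g(\theta+\alpha) = \bigl(f(\theta+\beta)+c\bigr) - \bigl(f(\theta+\alpha+\beta)+c\bigr) = f(\theta+\beta) - f(\theta+\beta+\alpha).
\]
Hence $\psi_g(\alpha) = \frac{1}{2\pi}\int_{[0,2\pi)} \Phi\bigl(f(\theta+\beta)-f(\theta+\beta+\alpha)\bigr)\,\dir\theta$. This step uses nothing about $\Phi$; the shift invariance is purely algebraic.

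Second, substitute $u = \theta+\beta$. The integral becomes an integral of $h(u) := \Phi(f(u)-f(u+\alpha))$ over $[\beta, 2\pi+\beta)$. Since $f$ is $2\pi$-periodic, so is $h$. The integral of a periodic function over any interval of length $2\pi$ equals its integral over $[0,2\pi)$: split $[\beta,2\pi+\beta)$ into $[\beta,2\pi)$ and $[2\pi,2\pi+\beta)$, and translate the latter back by $2\pi$ to $[0,\beta)$. The result is exactly $\psi_f(\alpha)$.

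The only step needing care is implicit well-definedness: we need $h$ integrable whenever $\psi_f$ is defined, which holds because $h$ is the same integrand appearing in $\psi_f$. The identical argument, with sums over $j\in[m]$ and reindexing $j \mapsto j+b \bmod m$ in place of the integral, gives the discrete analogue $V_g = V_f$ for $\beta = b\frac{2\pi}{m}$. I do not expect any real obstacle here, since the statement is essentially a direct computation.
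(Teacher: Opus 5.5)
Your proposal is correct and follows the paper's proof essentially step for step: cancel the constant $c$ inside the difference, substitute $\gamma = \theta+\beta$, and use $2\pi$-periodicity to move the integral over $[\beta, 2\pi+\beta)$ back to $[0,2\pi)$. Your explicit splitting of that interval only spells out the periodicity step, which the paper asserts in one line.
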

\begin{proof}
By definition, for all $\alpha\in [0,2\pi)$ we have
\begin{align*}
\psi_g(\alpha) 
  = 
  \frac{1}{2\pi} \int_0^{2\pi} \Phi\Bigl(g(\theta) - g(\theta+\alpha)\Bigr)d\theta 
& = 
\frac{1}{2\pi}\int_0^{2\pi} \Phi\Bigl(f(\theta + \beta) + c - \bigl(f(\theta + \alpha + \beta) + c\bigr)\Bigr)d\theta\\
& = 
 \frac{1}{2\pi}\int_0^{2\pi} \Phi\Bigl(f(\theta + \beta) - f(\theta + \alpha + \beta)\Bigr)d\theta\\
\{\text{Substitute } \gamma = \theta+\beta\}\qquad 
& = 
\frac{1}{2\pi}\int_{\beta}^{2\pi+\beta} \Phi\Bigl(f(\gamma) - f(\gamma + \alpha)\Bigr)d\gamma\\
\{f \text{ is a } 2\pi\text{-periodic function}\}\  \qquad 
& = 
\frac{1}{2\pi}\int_{0}^{2\pi} \Phi\Bigl(f(\gamma) - f(\gamma + \alpha)\Bigr)d\gamma= \psi_f(\alpha). \qedhere
\end{align*}
\end{proof}

\paragraph*{Reverse of Complement.} 
We next define a less intuitive function operation, which we call \emph{reverse of complement (RoC)}.  Let $\F_M$ be the set of functions with $f(\theta) \in [0,M]$.  Then $f' = \RoC(f)$ satisfies $f'(\theta) = M - f(2\pi - \theta)$.  In fact, this operation can use any value for $M$ (including $M=0$), but will be useful later to define this way when considering $f \in \F_M$.

\begin{lemma}[RoC, rotation and shift invariance]
\label{lem:involution_shift_invariance}
    For $f \in \F$, any rotation $\beta' \in [0,2\pi)$ and any constant $c' \in \R$, define $g \in \F$ as
    $
    g(\alpha) = \RoC(f(\alpha + \beta')) + c'.  
    $
    Then the signature map satisfies
    $\psi_g(\alpha) = \psi_f(\alpha)$  for all  $\alpha \in [0,2\pi]$.
\end{lemma}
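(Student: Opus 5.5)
We reduce to showing that $\psi_{\RoC(f)} = \psi_f$ for every $f \in \F$. Once this single identity is available, the full statement follows from \cref{rot_shift_invariance}. The reason is that $g$ is obtained from $h := \RoC(f)$ (or from $\RoC$ applied to a rotated copy of $f$) by a rotation and an additive constant. Either order of composition works, for two reasons. First, $\RoC(f(\cdot+\beta'))$ evaluated at $\theta$ equals $M - f(2\pi-\theta+\beta')$, which is $\RoC(f)$ evaluated at $\theta-\beta'$, so it is again a rotation of $\RoC(f)$. Second, the constant $M$ in the definition of $\RoC$ is itself just a shift.

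Concretely, write $g(\alpha) = h(\alpha - \beta') + c'$ with $h = \RoC(f)$. Applying \cref{rot_shift_invariance} with rotation $-\beta'$ (taken mod $2\pi$) and constant $c'$ gives $\psi_g = \psi_h$. It then remains to show $\psi_h = \psi_f$.

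For the key identity, expand the definition directly:
\[
\psi_h(\alpha) = \frac{1}{2\pi}\int_0^{2\pi} \Phi\bigl( (M - f(2\pi-\theta)) - (M - f(2\pi-\theta-\alpha)) \bigr)\, d\theta = \frac{1}{2\pi}\int_0^{2\pi} \Phi\bigl( f(2\pi-\theta-\alpha) - f(2\pi-\theta) \bigr)\, d\theta .
\]
The constants $M$ cancel. Now substitute $\gamma = 2\pi - \theta - \alpha$, so $d\gamma = -d\theta$ and the reversed orientation flips the limits back. The integral becomes
\[
\frac{1}{2\pi}\int_{-\alpha}^{2\pi-\alpha} \Phi\bigl(f(\gamma) - f(\gamma+\alpha)\bigr)\, d\gamma .
\]
By $2\pi$-periodicity of $f$, we may shift the interval of integration to $[0,2\pi)$, exactly as in the proof of \cref{rot_shift_invariance}. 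This yields $\psi_f(\alpha)$.

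The mechanism is that the complement ($M - \cdot$) swaps the sign of the difference inside $\Phi$, and the reversal ($\theta \mapsto 2\pi-\theta$) swaps the order of the two sample points. Together the two swaps cancel, so no symmetry of $\Phi$ is needed. This matters because $\Phi(z)=\exp(-z)$ is not even.

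The only step requiring care is the bookkeeping of the composition order in $\RoC(f(\alpha+\beta'))$, i.e., verifying that a rotation applied before $\RoC$ becomes a rotation by $-\beta'$ afterwards. The same computation works verbatim in the discrete setting for $V_f$, with the sum over $j \in [m]$ reindexed by $j \mapsto -j-k \bmod m$.
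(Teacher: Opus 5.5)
Your proof is correct and is essentially the paper's argument: an orientation-reversing change of variables, in which the complement's sign flip and the reflection's swap of the two sample points cancel, followed by $2\pi$-periodicity to restore the domain of integration. The differences are only organizational: you factor out the rotation and additive constant via Lemma~\ref{rot_shift_invariance} (and explicitly cover both readings of $\RoC(f(\alpha+\beta'))$, whereas the paper reads it as $M - f(2\pi-\alpha-\beta')$) and use the single substitution $\gamma = 2\pi-\theta-\alpha$, while the paper absorbs everything into $\beta = 2\pi-\beta'$, $c = c'+M$ and uses $\zeta = -\phi+\beta$ followed by a separate shift $\zeta \mapsto \zeta+\alpha$.
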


\begin{proof}
First observe that we can simplify 
\[
g(\alpha) = \RoC(f(\alpha + \beta')) + c' = M - f(2 \pi - \alpha - \beta') + c'.  
\]
Now setting $\beta = 2 \pi - \beta'$ and $c = c' + M$ we have $g(\alpha) = -f(-\theta + \beta) + c$.  Hence
\[
\begin{aligned}
\psi_g(\alpha)
&= \frac{1}{2\pi}\int_0^{2\pi}\Phi\bigl(g(\phi)-g(\phi+\alpha)\bigr)\,d\phi \\
&= \frac{1}{2\pi}\int_0^{2\pi}\Phi\bigl(-f(-\phi+\beta)+c -\bigl(-f(-(\phi+\alpha)+\beta)+c\bigr)\bigr)\,d\phi\\
&= \frac{1}{2\pi}\int_0^{2\pi}\Phi\bigl(-f(-\phi+\beta) + f(-\phi-\alpha+\beta)\bigr)\,d\phi\\
&= \frac{1}{2\pi}\int_0^{2\pi}\Phi\bigl(f(-\phi-\alpha+\beta) - f(-\phi+\beta)\bigr)\,d\phi.
\end{aligned}
\]

Make the change of variables \(\zeta := -\phi+\beta\). Then \(d\zeta = -\,d\phi\). As \(\phi\) runs over \([0,2\pi)\), $\zeta$ runs over an interval of length \(2\pi\) (specifically \([\beta,\beta-2\pi)\)), and by \(2\pi\)-periodicity we may replace that interval by \([0,2\pi)\). Applying the change of variables,
\[
\begin{aligned}
\psi_g(\alpha)
&= \frac{1}{2\pi}\int_{\beta}^{\beta-2\pi}\Phi\bigl(f(\zeta-\alpha)-f(\zeta)\bigr)(-d\zeta)
= \frac{1}{2\pi}\int_{\beta-2\pi}^{\beta}\Phi\bigl(f(\zeta-\alpha)-f(\zeta)\bigr)\,d\zeta\\
&= \frac{1}{2\pi}\int_0^{2\pi}\Phi\bigl(f(\zeta-\alpha)-f(\zeta)\bigr)\,d\zeta.
\end{aligned}
\]
Finally, relabel the dummy integration variable (or shift \(\zeta\mapsto\zeta+\alpha\)) to obtain
\[
\psi_g(\alpha)=\frac{1}{2\pi}\int_0^{2\pi}\Phi\bigl(f(\theta)-f(\theta+\alpha)\bigr)\,d\theta = \psi_f(\alpha).
\]
This proves the claim.
\end{proof}

Note that, when we also consider shift and rotational invariance, it is equivalent to think of $\RoC$ as a form of \emph{involution} $\mathsf{Inv}(g(\alpha)) = -f(-\alpha)$.  
Moreover, the above proofs work for $\F^m$ over $\Rm$; for the other direction which we describe next we restrict to this setting.  

\paragraph*{Periodic convolution.}   
A periodic convolution of two functions $p,q$  on $\SS$ is defined $(p * q)(\alpha) = \int_0^{2\pi} p(\theta)q(\alpha-\theta) \dir \theta$.  We show  $\psi(\alpha)$ can be decomposed this way.  

\begin{lemma}\label{lem:convolution}
With $\Phi(z) = \exp(-z)$, then $\psi(\alpha)$ is a periodic convolution of $p_f(\alpha) = \exp(-f(-\alpha))$ and $q_f(\alpha) = \exp(f(\alpha))$.  
\end{lemma}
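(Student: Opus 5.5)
The plan is a direct computation from the definition of $\psi_f$. With $\Phi(z)=\exp(-z)$ the integrand factors, since $\exp(-(f(\theta)-f(\theta+\alpha))) = \exp(-f(\theta))\exp(f(\theta+\alpha))$. Thus
\[
\psi_f(\alpha) = \frac{1}{2\pi}\int_0^{2\pi} \exp(-f(\theta))\,\exp(f(\theta+\alpha))\,\dir\theta .
\]
This is a correlation rather than a convolution. The reflection built into $p_f(\alpha)=\exp(-f(-\alpha))$ is what converts one into the other.

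Next I substitute $\theta = -\phi$, so $\dir\theta = -\dir\phi$. As $\theta$ ranges over $[0,2\pi)$, $\phi$ ranges over an interval of length $2\pi$. By $2\pi$-periodicity of $f$ (exactly as in the proof of \cref{rot_shift_invariance}) this interval can be replaced by $[0,2\pi)$. This gives
\[
\psi_f(\alpha) = \frac{1}{2\pi}\int_0^{2\pi} \exp(-f(-\phi))\,\exp(f(\alpha-\phi))\,\dir\phi = \frac{1}{2\pi}\int_0^{2\pi} p_f(\phi)\,q_f(\alpha-\phi)\,\dir\phi = \frac{1}{2\pi}(p_f * q_f)(\alpha).
\]
Finally I note that the factor $\frac{1}{2\pi}$ is a harmless normalization, so up to this constant (or with the convolution normalized by $\frac{1}{2\pi}$) $\psi_f$ is the periodic convolution of $p_f$ and $q_f$. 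I would state this constant explicitly so that the identity is exact.

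There is no real obstacle. The only care needed is bookkeeping: the orientation reversal from $\dir\theta=-\dir\phi$ cancels against swapping the integration limits, and the shifted interval is handled by periodicity. I would also remark that the same manipulation works verbatim in the discrete setting $\F^m$, with sums over $j\in[m]$ and indices taken mod $m$. There $V_f(k) = \frac1m\sum_j p_f(j)q_f(k-j)$ is a cyclic convolution, which is what later enables the FFT-based computation.
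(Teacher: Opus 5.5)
Your proof is correct and is essentially the paper's argument. The paper runs the same computation in the opposite direction: it expands $(p_f*q_f)(\alpha)$, substitutes $\theta'=-\theta$, and uses periodicity to reach $(p_f*q_f)(\alpha)=2\pi\,\psi_f(\alpha)$, which is exactly your constant (and your version avoids the paper's intermediate slip of writing $f(\theta)$ where $f(-\theta)$ is meant).
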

\begin{proof}
\begin{align*} \label{eq:convolution}
(p_f * q_f)(\alpha) 
 &= 
\int_{\theta = 0}^{2\pi} p_f(\theta) q_f(\alpha -\theta) \dir \theta 
= 
\int_{\theta = 0}^{2\pi} \exp(-f(-\theta)) \exp(f(\alpha - \theta)) \dir \theta
\\ & = 
\int_{\theta = 0}^{2\pi} \exp(-(f(\theta) - f(\alpha-\theta))) \dir \theta 
\\ & = 
\int_{\theta' = 0}^{2\pi} \exp(-(f(\theta') - f(\alpha + \theta'))) \dir \theta' 
= 
2\pi \cdot \psi_f(\alpha). 
\end{align*}
The second to last equality can be seen by substituting $\theta' = -\theta$, which does not change the periodic $[0,2 \pi)$ domain.  
\end{proof}

This allows us to analyze $\psi_f$ using the powerful tools of convolution theory, such as the Convolution Theorem \cite[Sections~3.4~\&~8.2]{Oppenheim2009}.  
And in particular, implies that for any $f \in \F$ if we define $g(\alpha) = -f(-\alpha)$ then $\psi_f = \psi_g$; this implies the invariance to $\RoC$, by interpreting it as an involution $\mathsf{Inv}$.

\subsection{Injectivity of the Sketch}
We next show that for a discrete class of functions $\F^m$ that the sketch is injective in that it is \emph{only} invariant up to the rotation, constant offsets, and $\RoC$ operations.  

For a function $f \in \F^m$, let $\Delta_f(k) = \{ f(j) - f(j+k) \mid j \in [m]\}$ be the \emph{multiset} of function differences offset by $k \in [m]$.  Recall that we can define $V_f(k) = \frac{1}{m}\sum_{z \in \Delta_f(k)} \Phi(z)$.  We show that these $\{\Delta_f(k)\}_{k \in [m]}$ are an isomorphic representation of the sketch $V_f$.  

\begin{lemma}\label{lem:exp-iso}
    Consider two multisets $W, W'$ of algebraic real numbers with $|W|=|W'|=m$, a rational number $\lambda \ne 0$ and fix $p\in \mathbb{Z}$.   If $\sum_{w_i \in W} \exp(\lambda w_i^{2p+1}) = \sum_{w'_i \in W'} \exp(\lambda {w'}_i^{2p+1})$, then $W = W'$. 
\end{lemma}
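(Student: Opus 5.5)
The plan is to reduce the statement to the Lindemann--Weierstrass theorem in its Baker form. That theorem says: if $\beta_1,\dots,\beta_n$ are \emph{distinct} algebraic numbers and $c_1,\dots,c_n$ are algebraic numbers with $\sum_i c_i e^{\beta_i}=0$, then all $c_i=0$. Equivalently, exponentials of distinct algebraic numbers are linearly independent over the algebraic numbers.

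\textbf{Step 1: the exponents are algebraic.} First observe that $\lambda w^{2p+1}$ is a real algebraic number whenever $w\in W\cup W'$. Algebraic numbers form a field and $\lambda$ is rational, so this holds for $p\ge 0$. For $p<0$ the power is a reciprocal, so we need $w\neq 0$ (implicitly assumed when $p<0$); it is then still algebraic.

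\textbf{Step 2: the map is injective.} Observe that $w\mapsto \lambda w^{2p+1}$ is injective on the reals, because an odd power is strictly monotone on its domain and $\lambda\neq 0$. For $p<0$, strict monotonicity holds on each half-line, and the map is still injective since it preserves sign. Consequently distinct elements of $W\cup W'$ give distinct exponents.

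\textbf{Step 3: collect terms.} Let $u_1,\dots,u_n$ be the distinct values in $W\cup W'$. Let $a_i$ and $a'_i$ be the multiplicities of $u_i$ in $W$ and $W'$. The hypothesis rewrites as
\[
\sum_{i=1}^n (a_i-a'_i)\exp(\lambda u_i^{2p+1})=0 .
\]
The coefficients $a_i-a'_i$ are integers, hence algebraic. By Step 2 the exponents $\lambda u_i^{2p+1}$ are distinct algebraic numbers. Lindemann--Weierstrass then forces $a_i=a'_i$ for every $i$, so $W=W'$ as multisets. The condition $|W|=|W'|=m$ is not actually needed beyond ensuring finiteness.

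\textbf{Main obstacle.} The substantive step is invoking Lindemann--Weierstrass correctly, in particular making sure its hypotheses hold.
\begin{itemize}
\item The exponents must be distinct. This is exactly why the exponent is odd ($2p+1$) rather than arbitrary: an even power would identify $w$ with $-w$ and break the argument.
\item Everything must be algebraic. This is why $\lambda$ is rational and the multisets consist of algebraic numbers.
\end{itemize}
A minor care point is the case $p<0$, where $0$ must be excluded. I would note this as an implicit assumption, or restrict to $p\ge 0$, which covers the case $\Phi(z)=\exp(-z)$ used in the paper ($\lambda=-1$, $p=0$).
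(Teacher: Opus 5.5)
Your proposal is correct and follows the same route as the paper: rewrite the hypothesis as an integer linear relation $\sum_\alpha \delta_\alpha \exp(\lambda\alpha^{2p+1})=0$ using multiplicity differences, then apply Lindemann--Weierstrass to force all $\delta_\alpha=0$. You are slightly more careful than the paper in two places: the paper never states that $\alpha\mapsto\lambda\alpha^{2p+1}$ is injective, although this is needed for the exponents to be distinct, and it does not mention that $0$ must be excluded when $p<0$ (its appendix version restricts to $p\ge 0$).
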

\begin{proof}
    Let $\mathcal{A}$ be the set of algebraic numbers. For $\alpha\in\mathcal{A}$, let $W(\alpha)=\{w\in W\ | w = \alpha\}$ (resp. $W'(\alpha)$).  Then define $\delta_\alpha = |W(\alpha)| - |W'(\alpha)|$.  Let $L=\{\delta_\alpha\ |\ \delta_\alpha\ne 0, \alpha\in\mathcal{A}\}$. The equality condition in our lemma ensures that $\sum_{\alpha\in\mathcal{A}} |W(\alpha)|\exp(\lambda\alpha^{2p+1}) = \sum_{\alpha\in\mathcal{A}} |W'(\alpha)|\exp(\lambda\alpha^{2p+1})$. Therefore, 
    $\sum_{\alpha\in\mathcal{A}} \delta_\alpha \exp(\lambda\alpha^{2p+1}) = 0$.  Our goal is to show that $L = \emptyset$.
    
    We now invoke a classical result from transcendental number theory, the
    \emph{Lindemann--Weierstrass theorem} (\cite[Theorem~1.4]{baker1975}),
    which states that if $\alpha_1, \ldots, \alpha_n$ are distinct algebraic numbers, 
    then $e^{\alpha_1}, \ldots, e^{\alpha_n}$ are linearly independent over~$\mathbb{Q}$.
    
    Therefore, the finite set 
    \(
    \{\, \exp(\lambda \alpha^{2p+1}) : \delta_\alpha \neq 0 \,\}
    \)
    is linearly independent, which forces $\delta_\alpha = 0$ for all such $\alpha$. 
    Hence $L = \emptyset$.
\end{proof}

\begin{corollary}  \label{cor:Delta(k)}
    For two functions $f,g \in \F^m$ with algebraic real values, then when using $\Phi(z) = \exp(-z)$, then $V_f = V_g$ is equivalent to have $\Delta_f(k) = \Delta_g(k)$ for all $k \in [m]$.
\end{corollary}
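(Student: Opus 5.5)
The equivalence is proved one coordinate $k \in [m]$ at a time, since $V_f(k)$ depends only on the multiset $\Delta_f(k)$. The reverse direction is immediate. If $\Delta_f(k) = \Delta_g(k)$ as multisets, then
\[
V_f(k) = \frac{1}{m}\sum_{z \in \Delta_f(k)} \exp(-z) = \frac{1}{m}\sum_{z \in \Delta_g(k)} \exp(-z) = V_g(k),
\]
because the sum over a multiset does not depend on how its elements are indexed. Doing this for every $k$ gives $V_f = V_g$.

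For the forward direction, fix $k$ and suppose $V_f(k) = V_g(k)$. Multiplying by $m$ gives
\[
\sum_{z \in \Delta_f(k)} \exp(-z) = \sum_{z' \in \Delta_g(k)} \exp(-z').
\]
We then apply \cref{lem:exp-iso} with $W = \Delta_f(k)$, $W' = \Delta_g(k)$, $\lambda = -1$ and $p = 0$. With these choices $\lambda w^{2p+1} = -w$, so the hypothesis of the lemma is exactly the displayed equality, and the lemma yields $\Delta_f(k) = \Delta_g(k)$.

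The lemma's other hypotheses have to be checked. Each multiset has exactly $m$ elements, one for each $j \in [m]$, counted with multiplicity. Each element is algebraic: $f(j)$ and $f(j+k)$ are real algebraic numbers (indices taken mod $m$), and the real algebraic numbers form a field, so the difference $f(j) - f(j+k)$ is again real algebraic. The same holds for $g$. Repeating the argument for every $k \in [m]$ proves the corollary.

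There is no serious obstacle; all the depth is in \cref{lem:exp-iso}, that is, in Lindemann--Weierstrass. The step that needs care is the closure of the algebraic numbers under subtraction, which is what keeps the entries of $\Delta_f(k)$ algebraic and lets the lemma apply. It is also worth stating explicitly that the conclusion is equality of multisets, which is the notion used in defining $\Delta_f(k)$.
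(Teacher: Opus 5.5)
Your proof is correct and matches the paper's argument. As in the paper (see the appendix corollary and the preceding theorem), the reverse direction follows because $V_f(k)$ is a sum over the multiset $\Delta_f(k)$, and the forward direction applies \cref{lem:exp-iso} (Lindemann--Weierstrass) coordinate by coordinate with $\lambda=-1$, $p=0$; your explicit check that differences of real algebraic numbers stay algebraic is a detail the paper leaves implicit.
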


We say two functions $f,g \in \F^m$ so that $\Delta_f(k) = \Delta_g(k)$ for all $k \in [m]$ are \emph{lag-homometric}.  Since $\F^m, \Rm$ is a special case of $\F,\mathbb{SO}(2)$ (as elaborated on in Appendix \ref{app:lag-homomorph}), then Lemma \ref{lem:involution_shift_invariance} established that lag-homometric functions are in the same equivalence class up to rotations, offsets, and potential $\RoC$ operations.

\paragraph*{Canonical representations.}
We first establish a representation of a function $f \in \F^m$ fixing the choice of rotation and constant offsets.  
We say a $f \in \F$ is \emph{canonical} if its smallest value is $0$, and this value is in the first position so $f(0) =0$.  We can enforce the first condition by identifying $\alpha = \min_j f(j)$, and then, through a constant offset, subtracting $\alpha$ from each entry in $f$ so $f(j) \leftarrow f(j) - \alpha$.  If $j_{\min} = \arg\min_j f(j) \neq 0$, then we need to perform a rotation.   We set $f(j) \leftarrow f(j -_m j_{\min})$ where $-_m$ indicates subtraction modulo $m$.  

\begin{lemma}\label{lem:canonical}
    Let $f,g \in \F^m_{\neq}$ be lag-homometric (which includes e.g., $g = \RoC(f)$).  If $\tilde f, \tilde g$ are the canonical forms of $f,g$, then
    \[
    \max_{j\in [m]} \tilde f(j) = \max_{j \in [m]} \tilde g(j) 
    \;\;\;\; \text{ and } \;\;\;\;
    \arg\max_{j\in [m]} \tilde f(j) = \arg\max_{j \in [m]} \tilde g(j).
    \]
\end{lemma}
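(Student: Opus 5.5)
The plan is to show that both quantities in the statement can be read off from the family of multisets $\{\Delta_f(k)\}_{k\in[m]}$ alone. Since $f$ and $g$ share this family, the conclusion then follows.

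\emph{Step 1: canonicalization preserves the hypotheses.} The canonical form $\tilde f$ arises from $f$ by a constant offset and a cyclic rotation of indices. A constant offset $c$ cancels in every difference $f(j)-f(j+k)$. A rotation $j\mapsto j-_m j_{\min}$ only permutes the index $j$ inside each multiset $\Delta_f(k)$. This is the discrete analogue of Lemma~\ref{rot_shift_invariance}, and it holds at the level of the multisets themselves. Hence $\Delta_{\tilde f}(k)=\Delta_f(k)=\Delta_g(k)=\Delta_{\tilde g}(k)$ for all $k$, so $\tilde f$ and $\tilde g$ are lag-homometric. The same two operations preserve distinctness of the values and of the pairwise differences, so $\tilde f,\tilde g\in\F^m_{\neq}$.

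\emph{Step 2: the maximum.} As $j$ and $k$ range over $[m]$, the differences $\tilde f(j)-\tilde f(j+k)$ cover all ordered pairs of positions. Therefore the largest element $D$ of the union $U=\bigcup_k \Delta_{\tilde f}(k)$ equals $\max_j \tilde f(j)-\min_j\tilde f(j)=\max_j\tilde f(j)$, using $\min\tilde f=0$. The union $U$ is the same for $\tilde g$, so $\max\tilde f=D=\max\tilde g$.

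\emph{Step 3: the argmax.} This is the step that uses $\F^m_{\neq}$ most essentially, and I expect it to be the main obstacle.
\begin{itemize}
\item Because all values are distinct, the maximum and minimum of $\tilde f$ are attained at unique positions: $j^*=\arg\max\tilde f$ and position $0$.
\item Because all pairwise differences are distinct, the value $D$ arises from exactly one ordered pair, namely $(j^*,0)$.
\item Consequently $D$ lies in exactly one multiset $\Delta_{\tilde f}(k^*)$, and it does so with multiplicity one.
\item Writing $0=j^*+k^* \pmod m$ gives $j^*=-k^* \bmod m$.
\end{itemize}
The same reasoning for $\tilde g$ shows that $\arg\max\tilde g=-k'\bmod m$. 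Here $k'$ is the unique index with $D\in\Delta_{\tilde g}(k')$. Since $\Delta_{\tilde g}(k)=\Delta_{\tilde f}(k)$ for every $k$, we get $k'=k^*$, and hence the two argmaxes coincide.

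The only care needed is in justifying the uniqueness of the pair realizing $D$. In particular, $D$ cannot also occur as some other difference, and the degenerate lag $k=0$, which contributes only zeros, must be excluded. Both points are handled directly by the definition of $\F^m_{\neq}$. The parenthetical case $g=\RoC(f)$ is covered because lag-homometry of $f$ and $\RoC(f)$ follows from Lemma~\ref{lem:involution_shift_invariance} together with Corollary~\ref{cor:Delta(k)}.
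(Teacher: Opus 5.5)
Your proof is correct and follows essentially the same route as the paper: the largest element of the shared difference set is $\max\tilde f-\min\tilde f=\max\tilde f$, and because it is realized only by the pair (argmax, position $0$), the lag $k^*\equiv -j^*$ at which it occurs is common to $\tilde f$ and $\tilde g$ and fixes the common argmax. Your direct multiset argument for Step 1 is slightly cleaner than the paper's appeal to the signature and Lindemann--Weierstrass; in Step 3, distinctness of the values alone already suffices, which is why the paper notes that the same argument also covers permutations.
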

\begin{proof}
    Let the maximum value $\beta = \max_{j\in [m]} \tilde f(j)$, and $j_{\max} = \arg\max_{j\in [m]} \tilde f(j)$.  Since $\tilde f(0) = 0$ then $\beta = \tilde f(j_{\max}) - \tilde f(0) \in \Delta_f(m-j_{\max})$.  Since $f,g$ are lag-homomorphic, which by Lemma \ref{lem:involution_shift_invariance} and Lemma \ref{lem:exp-iso} implies so are $\tilde f, \tilde g$, since they only change by rotation and shift.  Now because $\tilde g(0) = 0$, and since $\beta \in \Delta_g(m-j_{\max})$, there must exist some $j$ with $\tilde g(j) - \tilde g(j+m-j_{\max}) = \beta$.  Since $\beta$ is the global maximum difference across all shifts, and all differences of $\tilde g$ are at most $\max \tilde g - \min \tilde g = \max \tilde g$ (because $\min \tilde g = 0$), we must have $\max \tilde g \geq \beta$.  But $\beta$ must also be the maximum difference in $g$ (since all multisets are shared), so $\max \tilde g = \beta$ and $j_{\max} = \arg\max_{j \in [m]} \tilde g(j)$.
\end{proof}

This argument on $\F^m_{\neq}$ only used the uniqueness of the maximum.  Thus it also applies to the other function family we consider on permutations $\F^m_{\sigma}$.  From here, the arguments needed to prove that $V_f = V_g$ implies that $f$ matches $g$ up to rotation, shifts, and $\RoC$ differs for this permutation family.  These arguments are technical and leverage the strong symmetry of the sets of differences.  We defer showing this invariance of $\F^m_\sigma$ to Appendix \ref{app:permutations-inv}.

Next we resolve the choice of $\RoC$ operation.  Let $\Delta_f = \bigcup_{k \in [m]} \Delta_f(k)$.
Let $\beta_2$ be the second largest values in $\Delta_f$ (recall $\beta$ is the largest).  
We say $f$ is in \emph{$\RoC$-canonical form} (so $f(0) =0$) then let $j_1 = \arg\max_{j \in [m]} f(j)$ and $j_2 = \arg\max_{j \in [m], j \neq j_1} f(j)$, then $f(j_1) = \beta$ (by Lemma \ref{lem:canonical}) and $f(j_2) = \beta_2$.  

\begin{lemma}\label{lem:RoC-canon}
    If $f\in \F^m_{\neq}$ we can always transform it into $\bar f$ in $\RoC$-canonical form using shift, rotation, and $\RoC$ operations.
\end{lemma}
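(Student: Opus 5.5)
The plan is to first put $f$ in canonical form using \cref{lem:canonical}, and then decide, by comparing two candidate differences, whether one further $\RoC$ operation is needed. The key observation is that $\Delta_f$ (and hence $\beta$ and $\beta_2$) is unchanged by shifts, rotations, and $\RoC$. This follows from \cref{rot_shift_invariance,lem:involution_shift_invariance} together with \cref{cor:Delta(k)}. So $\beta,\beta_2$ are invariants of the whole orbit, and we only need to reach some member of the orbit whose second-largest value equals $\beta_2$.

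\textbf{Step 1 (canonicalize).} Shift by $-\min_j f(j)$ and rotate so that the minimizer sits at index $0$. This gives $\tilde f$ with $\tilde f(0)=0=\min \tilde f$. By \cref{lem:canonical}, $\max \tilde f = \beta$, attained at a unique index $j_1$; uniqueness holds because values are distinct in $\F^m_{\neq}$. Let $s_{\max}$ denote the second-largest value of $\tilde f$ and $s_{\min}$ its second-smallest value. These are well defined for $m\ge 3$; for $m\le 2$ the statement is trivial.

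\textbf{Step 2 (identify $\beta_2$).} I claim that $\beta_2=\max\{s_{\max},\, \beta - s_{\min}\}$. Take any difference $\tilde f(a)-\tilde f(b)$ other than $\beta-0$. If $a\neq j_1$, then the difference is at most $s_{\max}-0$. If $b\neq 0$, then it is at most $\beta-s_{\min}$. Both bounds are themselves realized as differences, so they lie in $\Delta_{\tilde f}$. Since all pairwise differences in $\F^m_{\neq}$ are distinct, exactly one of the two candidates is $\beta_2$, unless $s_{\max}=\beta-s_{\min}$. That equality would mean $\tilde f(j_2)-\tilde f(0)=\tilde f(j_1)-\tilde f(j')$ for two distinct pairs, which is excluded by distinctness. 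The only exception is the degenerate coincidence $j_2=j'$ with the pairs differing; this still gives equal differences from different index pairs and is ruled out by the definition of $\F^m_{\neq}$.

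\textbf{Step 3 (case split).} If $s_{\max}=\beta_2$, then $\tilde f$ is already in $\RoC$-canonical form. Otherwise $\beta_2=\beta-s_{\min}$, and we apply $\RoC$ with $M=\beta$: $h(\theta)=\beta-\tilde f(-\theta)$, so $h$ takes values in $[0,\beta]$. The reversal $\theta\mapsto -\theta$ fixes index $0$, so $h(0)=\beta$ is now the maximum. The old maximizer $j_1$ maps to a zero of $h$, and the old second-smallest value $s_{\min}$ becomes the second-largest value $\beta-s_{\min}=\beta_2$ of $h$. Re-canonicalizing $h$ by a rotation (no shift is needed, since $\min h=0$) yields $\bar f$. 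By construction $\bar f(0)=0$, $\max\bar f=\beta$, and the second-largest value of $\bar f$ is $\beta_2$; \cref{lem:canonical} again confirms $\max \bar f = \beta$ at the expected index.

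The main obstacle is Step 2: carefully justifying that $\beta_2$ must be one of the two candidates and that the two can never tie. The distinct-differences hypothesis of $\F^m_{\neq}$ is exactly what handles the tie. Step 3 is then bookkeeping about how $\RoC$ exchanges the roles of the extreme values.
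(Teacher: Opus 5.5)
Your proof is correct and follows the paper's argument. You canonicalize, note that $\beta_2$ is either $s_{\max}$ or $\beta - s_{\min}$, and in the latter case apply $\RoC$ with $M=\beta$ and rotate the new zero back to index $0$. You also justify the two-candidate claim and rule out ties, which the paper leaves implicit (cf.\ Lemma~\ref{lem:2secondlargest}); your reversal $j\mapsto -j$ differs from the paper's $j\mapsto (m-1)-j$ only by a rotation.
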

\begin{proof}
    First transform $f$ into canonical form $\tilde f$, and define $\beta$, $\beta_2$ with respect to $\tilde f$.  Now the second largest difference $\beta_2$ can either be between $f(0) = 0$ and the second largest value of $f$ (at $j_2$), or between the largest value $f(j_1)$ and the second smallest value of $f$ at $j'_2$.  If it is the first case then $\beta_2 = f(j_2) - f(0) = f(j_2)$ and we are in $\RoC$-canonical form.  If it is the second case, then we apply an $f' = \RoC_\beta(\tilde f)$ operation, and then make $f'$ canonical as $\tilde f'$.  This $\RoC_\beta$ operation makes $f'((m-1)-j_1) = 0$, $f'(m-1) = \beta$, and $f'((m-1)-j'_2)$ becomes the second largest value.  Importantly, it preserves the second largest difference $\beta_2 = f'((m-1)-j'_2) - f((m-1)-j_1) = f'((m-1)-j'_2)$, satisfying $\RoC$ canonical form (after rotation to canonical form) since $f'((m-1)-j_1) = 0$.
\end{proof}

\paragraph*{Inductive argument.}
We are now ready for the inductive argument to show that once lag-homometric functions $f,g \in \F^m_{\neq}$ are put in $\RoC$-canonical form they must be equal.  
For this argument we introduce some more notation.  Sort the values in $f$ and in $g$ and represent their sorted order as $\hat f, \hat g$; this format implies $\hat f(j) < \hat f(j+1)$ for all $j \in [m-1]$, $\hat f(j) = f(j')$ for some $j,j' \in [m]$, and $\hat f(0) = 0$; and the same for $\hat g$.  

Now we define a subset of differences $\Delta_f^{s,t} \subset \Delta_f$.  It excludes all differences formed by the $j \leq s$ smallest values and the $j' \leq t$ largest values of $f$ (it can pair two large or two small values).  When we have established the location of the $s$ smallest and $t$ largest values in both $f$ and $g$, then we have resolved these differences, and can examine the largest difference remaining in $\Delta_f^{s,t} = \Delta_g^{s,t}$.  Note that by Lemma \ref{lem:canonical} and the definition of $\RoC$-canonical form, when we have this form for $f,g$ then we have resolved the $s=1$ smallest value and $t=2$ largest distance.  

We now state the most technical lemma, proved in Appendix \ref{app:unique-inv} as Lemma \ref{lem:Dst-induction-app}.  

\begin{lemma} \label{lem:Dst-induction}
    Consider $f,g \in \F^m_{\neq}$ which are lag-homometric and in $\RoC$-canonical form.  Assume we have resolved the locations of the $s \geq 1$ smallest and $t \geq 2$ largest values in $f,g$, confirming they are the same.  We can then confirm the location of either the $(s+1)$th smallest or $(t+1)$th largest by value of $f,g$ and confirm that this value matches.  
\end{lemma}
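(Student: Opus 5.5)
We propose a "peeling" argument driven by the largest unresolved difference. Let $S$ be the index set of the $s$ smallest values and $T$ the index set of the $t$ largest values. By hypothesis, $f$ and $g$ have the same positions and the same values on $S \cup T$. Every difference between two resolved entries, with a given lag $k$, therefore appears identically in $\Delta_f(k)$ and $\Delta_g(k)$. Corollary~\ref{cor:Delta(k)} gives $\Delta_f(k)=\Delta_g(k)$ as multisets. Removing the common resolved pairs from each lag class therefore leaves equal residual multisets, lag by lag, and their union is $\Delta_f^{s,t}=\Delta_g^{s,t}$. Because $f \in \F^m_{\neq}$, each residual difference determines a unique ordered pair of indices, so it carries both its lag and the identity of its endpoints. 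This is what allows a single value to pin down a position.

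Next consider $\gamma=\max \Delta_f^{s,t}$. Let $a=\hat f(s)$ be the $(s+1)$th smallest value and $b=\hat f(m-1-t)$ the $(t+1)$th largest. Every remaining pair contains at least one unresolved value, and that value lies in $[a,b]$. Hence the largest remaining difference is either
\[
\hat f(m-1-t)-\hat f(0) \quad\text{(new large value minus the global minimum)}
\]
or
\[
\hat f(m-1)-\hat f(s) \quad\text{(global maximum minus new small value)}.
\]
Any other pair is dominated by one of these two by monotonicity. The same dichotomy holds for $g$ with the same $\gamma$. Suppose $\gamma$ is attained in $f$ by the first option, at position $j$ with lag $k$ relative to the known position $0$ of the minimum. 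Then $\gamma\in\Delta_g(k)$ minus the resolved pairs. Because all differences are distinct, in $g$ this value must also arise from a unique pair at lag $k$. By the dichotomy, that pair involves either the minimum, at position $0$, or the maximum, at the known position $j_1$.

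The key step, and the main obstacle, is excluding the \emph{crossed} case. In that case $f$ realizes $\gamma$ as (new large) $-$ (min), while $g$ realizes it as (max) $-$ (new small). Lag-homometry alone does not forbid this, since this is exactly the ambiguity that $\RoC$ produces. I would handle it in three steps.

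\begin{enumerate}
\item The $\RoC$-canonical normalization (Lemma~\ref{lem:RoC-canon}) already fixes the orientation, because it fixes $f(j_2)=g(j_2)=\beta_2$. This gives a resolved pair of large values, at $j_1$ and $j_2$, that breaks the reflection symmetry.
\item Assume the crossed case. Then $g$ has a new small value $g(j')=\beta-\gamma$ at position $j'=j_1-k$ (mod $m$), while $f$ has $f(j)=\gamma$ at $j=k$ (up to sign conventions on the lag).
\item Compare the differences these new entries form with the resolved entry at $j_2$. In $f$ the difference $f(j)-f(0)$ appears together with $f(j_2)-f(j)$ at lag $j_2-j$. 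The corresponding differences in $g$ have different values, namely $\beta_2-\beta+\gamma$ and similar terms. Because all differences are distinct, matching these against the residual multisets at the forced lags gives a contradiction.
\end{enumerate}

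If a direct contradiction does not come out cleanly, I would fall back on induction over $s+t$. The fallback has two ingredients. First, the inductive hypothesis already forces $f$ and $g$ to agree on everything resolved, so a crossed assignment would make $g$ agree on $S\cup T$ with both $f$ and $\RoC(f)$. Second, for $t\ge2$ this agreement is impossible with distinct differences, which is exactly why the hypothesis $t\geq 2$ appears.

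Once the crossed case is excluded, the two functions realize $\gamma$ the same way. In the first option, $f$ and $g$ have the value $\gamma$ at the same position $k$, and $\gamma$ is then the $(t+1)$th largest value of each; in the second, they share the value $\beta-\gamma$ at the same position, which is then the $(s+1)$th smallest. Either way we have resolved one more location and value, as the lemma claims.
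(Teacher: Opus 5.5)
The gap is the crossed case, which is the entire content of the lemma. Your setup, the dichotomy for $\gamma=\max\Delta_f^{s,t}$, and the non-crossed cases are fine and match the paper. Steps 1--3, however, never produce a contradiction.

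As written, step 3 compares quantities at different lags. The difference $f(j_2)-f(j)$ sits at lag $j_2-k$, while $g(j_2)-g(j')=\beta_2-\beta+\gamma$ sits at lag $j_2-j_1+k$. These lags agree only when $2k\equiv j_1 \pmod m$, so in general nothing is forced to match.

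The comparison has to be made at a single lag, as the paper does (with $f$ and $g$ exchanged). Since $g(j_1)-g(j')=\gamma$, distinctness in $g$ forces $g(j)\neq\gamma$. So at lag $k'=j_2-j$, the value $\beta_2-\gamma\in\Delta_f(k')$ is not produced in $g$ by the pair $(j_2,j)$.

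Even this is not a contradiction. Distinctness of differences holds inside each function separately; a difference ``carries the identity of its endpoints'' only within one function. Lag-homometry merely requires $\beta_2-\gamma$ to occur somewhere in $\Delta_g(k')$. A priori it can be produced by an unresolved pair $(a,a-k')$ of $g$ that avoids $j_2$. That is one linear relation involving unresolved values of $g$, and nothing you invoke rules it out. The required value is $\beta_2-\gamma=\hat g(s)-(\beta-\beta_2)\in(0,\hat g(s))$, a small positive gap that $g$ can perfectly well contain.

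The fallback is circular. In the crossed case $g$ still agrees with $f$ on $S\cup T$ by hypothesis, and nothing makes it agree with $\RoC(f)$ there. In fact the normalized reverse-of-complement $h(i)=\beta-f(j_1-i)$ has its minimum at $0$ and its maximum at $j_1$. Its second largest value, however, is $\beta-\hat f(1)<\beta_2$; this inequality is exactly $\RoC$-canonicity of $f$. So $h$ never agrees with $f$ on $T$ once $t\ge 2$. Since that agreement is impossible outright, ``crossed $\Rightarrow$ agreement with $\RoC(f)$ on $S\cup T$'' is equivalent to ``the crossed case cannot occur.'' That is the very claim you are trying to prove.

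What is missing is an argument controlling \emph{every} unresolved pair of the other function at the aligned lag, not only the pair through $j_2$. The paper's own closing step has the same weakness. After obtaining $\delta_1\neq\delta_2$ at the lag from the new position to $p_g(m-2)$, it only excludes the pair through $p_f(m-2)$. Following it verbatim would therefore not close this gap either.
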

\begin{proof}[Proof Sketch]
  Each of $f$ and $g$ has two options for which difference is the largest in $\Delta_f^{s,t} = \Delta_g^{s,t}$; using the $(s+1)$th smallest or $(t+1)$th largest value.  If these match, we satisfy the lemma's claim.  Showing if they do not match it causes a contradiction is technical.  We pick a location where $f$ and $g$ differ in this case, and compare the difference with $\hat f(m-2) = \hat g(m-2)$, which generates different differences $\delta_f, \delta_g$ both placed in some $\Delta_f(k)$ and $\Delta_g(k)$, respectively.  Since these sets $\Delta_f(k)$ and $\Delta_g(k)$ must match, and all difference are unique, we can argue that we cannot form $\delta_g$ to put in $\Delta_f(k)$, and they cannot match.      
\end{proof}

Starting with the base case of $s=1, t=2$ from $\RoC$-canonical form, we can invoke Lemma \ref{lem:Dst-induction} inductively.  Each time it increases either $s$ or $t$.  This continues until $s+t = m$, in which case all positions of $f$ and $g$ are confirmed to be the same, completing the proof.   This results in the following theorem.  

\begin{theorem}\label{thm:fg-same}
  For $f,g \in F^m_{\neq}$ which are lag-homometric in $\RoC$-canonical form, then they must be equal.      
\end{theorem}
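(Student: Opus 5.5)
The argument is an induction on the number of resolved extreme values, with Lemma~\ref{lem:Dst-induction} as the inductive step. Let $f,g \in \F^m_{\neq}$ be lag-homometric and in $\RoC$-canonical form, and let $\hat f,\hat g$ denote their sorted value sequences. Call $(s,t)$ \emph{resolved} if two things hold. First, the positions in $[m]$ of the $s$ smallest values of $f$ and of $g$ coincide, and those values agree. Second, the same is true for the $t$ largest values. The claim to prove is that $(s,t)$ is resolved for some pair with $s+t=m$. At that point every index $j\in[m]$ carries one of the resolved values, so $f(j)=g(j)$ for all $j$.

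\textbf{Base case.} Take $(s,t)=(1,2)$.
\begin{itemize}
\item Canonical form gives $f(0)=g(0)=0$, so the smallest value is resolved.
\item Lemma~\ref{lem:canonical} gives that $\max f=\max g=\beta$, attained at the same index $j_1$.
\item By Corollary~\ref{cor:Delta(k)}, lag-homometry means the multisets $\Delta_f(k)$ and $\Delta_g(k)$ coincide for each $k$. Hence the union $\Delta_f=\Delta_g$, and the second largest difference $\beta_2$ is the same for $f$ and $g$.
\item $\RoC$-canonical form (Lemma~\ref{lem:RoC-canon}) forces $\beta_2=f(j_2)$ and $\beta_2=g(j_2')$, where $j_2,j_2'$ are the positions of the second largest values. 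So the second largest values agree.
\item Their positions also agree. The difference $\beta_2=f(j_2)-f(0)$ lies in $\Delta_f(m-j_2)$. Since all differences in $\F^m_{\neq}$ are distinct, $\beta_2$ occurs exactly once overall and hence in exactly one lag. The same reasoning for $g$ places it in $\Delta_g(m-j_2')$, so $m-j_2=m-j_2'$ and $j_2=j_2'$.
\end{itemize}

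\textbf{Inductive step.} Given $(s,t)$ resolved with $s\ge1$, $t\ge2$ and $s+t<m$, Lemma~\ref{lem:Dst-induction} yields that either $(s+1,t)$ or $(s,t+1)$ is resolved. Each application increases $s+t$ by exactly one, so after $m-3$ applications we reach $s+t=m$.

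One subtlety needs checking at termination: the "smallest $s$" and "largest $t$" sets must be disjoint and together cover all $m$ values. This holds because the values of $f$ are distinct and $s+t\le m$ throughout. So when $s+t=m$ the index sets partition $[m]$, with the same partition for $g$ because positions are resolved.

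The only substantive obstacle is the inductive step itself, which we are assuming from Lemma~\ref{lem:Dst-induction}. Its key mechanism is that the largest element of $\Delta_f^{s,t}=\Delta_g^{s,t}$ must be either $\hat f(m-1)-\hat f(s)$ or $\hat f(m-1-t)-\hat f(0)$. If $f$ and $g$ chose differently, pairing the new value with the fixed second-largest value would produce a difference in some $\Delta(k)$ that cannot be matched, using distinctness. Beyond that lemma, the remaining care is the base case's use of distinctness to pin down $j_2$, done above.
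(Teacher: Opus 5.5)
Your proposal is correct relative to the paper's earlier results and follows the paper's own argument. You start from $(s,t)=(1,2)$ supplied by $\RoC$-canonical form, where your check that the second-largest value sits at the same index, via the unique lag carrying $\beta_2$, is exactly the paper's remark preceding Lemma~\ref{lem:Dst-induction-app}, and then apply Lemma~\ref{lem:Dst-induction} repeatedly until $s+t=m$. As in the paper, all the substance lives in that lemma, and you are right to cite it rather than rely on your one-line gloss of its mechanism: distinctness of differences within $f$ does not by itself stop the unmatched difference from being produced in $\Delta_f(k)$ by some other pair of indices at that lag.
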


\begin{theorem} \label{thm:main-invariance}
    Consider two functions $f,g \in \F^m_{\neq}$. Then $V_f = V_g$ if and only if there exist a rotation $R \in \Rm$ and a constant $c \in \R$ such that $f = R(g)+c$ or $f = R(\RoC(g))+c$.  
\end{theorem}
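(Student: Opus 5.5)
The proof has two directions. The reverse direction reduces to the invariance lemmas already established. The forward direction chains together Corollary \ref{cor:Delta(k)}, the canonicalization lemmas, and Theorem \ref{thm:fg-same}.

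\emph{($\Leftarrow$).} Suppose $f = R(g)+c$ or $f = R(\RoC(g))+c$ with $R\in\Rm$, $c \in \R$. Lemma \ref{rot_shift_invariance} and Lemma \ref{lem:involution_shift_invariance} specialize to $\F^m$ and $\Rm$, as noted after Lemma \ref{lem:involution_shift_invariance}. Concretely, replace the integral $\frac{1}{2\pi}\int_0^{2\pi}$ by $\frac1m\sum_{j\in[m]}$. The change of variables $j\mapsto j+b$ (respectively $j\mapsto -j+b$) permutes $\cms$, so each summand is reindexed exactly as in the continuous proof. This gives $V_f = V_g$ coordinatewise.

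\emph{($\Rightarrow$).} Assume $V_f = V_g$. The values of $f,g$ are algebraic, so all differences are algebraic. Then Corollary \ref{cor:Delta(k)}, which is Lemma \ref{lem:exp-iso} with $\lambda=-1$, $p=0$, gives $\Delta_f(k)=\Delta_g(k)$ for every $k\in[m]$; that is, $f$ and $g$ are lag-homometric.

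Next, Lemma \ref{lem:RoC-canon} gives $\bar f = R_1(\RoC^{\epsilon_1}(f)) + c_1$ and $\bar g = R_2(\RoC^{\epsilon_2}(g)) + c_2$ in $\RoC$-canonical form, with $R_i\in\Rm$ and $\epsilon_i\in\{0,1\}$. Here $\RoC^0$ is the identity, and the two-stage procedure in that lemma (canonicalize, possibly $\RoC$, canonicalize again) composes to this form. I also need that the normalized functions stay in $\F^m_{\neq}$. Shifts, rotations and $\RoC$ only permute the index set and negate or translate values, so distinctness of the values and of the pairwise differences is preserved. By the ($\Leftarrow$) direction, $V_{\bar f}=V_f=V_g=V_{\bar g}$, so $\bar f,\bar g$ are lag-homometric. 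Theorem \ref{thm:fg-same} then yields $\bar f=\bar g$.

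Finally, I invert the normalization on $g$'s side. I use that $\RoC$ is an involution and that it conjugates rotations and shifts into rotations and shifts: $\RoC(R_b(h)+c) = R_{-b}(\RoC(h)) - c$. With these, $f = R_1^{-1}\bigl(\RoC^{\epsilon_1}\bigl(R_2(\RoC^{\epsilon_2}(g)) + c_2 - c_1\bigr)\bigr)$ collapses to $R(\RoC^{\epsilon}(g))+c$ with $\epsilon=\epsilon_1\oplus\epsilon_2$, which is one of the two stated forms.

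The main obstacle is bookkeeping rather than depth. First, I must verify the commutation identity for $\RoC$ in the discrete setting. There the reflection is $j\mapsto m-1-j$ or $j\mapsto -j$ depending on convention, and the two differ by a rotation, so either convention is absorbed into $R$. Second, I must make sure the constant $M$ in $\RoC$ is absorbed into $c$. All substantive difficulty is already carried by Theorem \ref{thm:fg-same}, i.e.\ the induction of Lemma \ref{lem:Dst-induction}, which I take as given.
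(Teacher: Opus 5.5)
Your proposal is correct and follows the paper's proof: the ``if'' direction comes from the rotation/shift and $\RoC$ invariance lemmas, and the ``only if'' direction chains Corollary~\ref{cor:Delta(k)}, Lemma~\ref{lem:RoC-canon} and Theorem~\ref{thm:fg-same}, with all the substantive work delegated to that last theorem exactly as in the paper. Your explicit undoing of the normalization (using that $\RoC$ is an involution that conjugates rotations and shifts into rotations and shifts) makes precise the paper's vague closing sentence; one small slip is that inverting $\bar f = R_1(\RoC^{\epsilon_1}(f)) + c_1$ gives $f = \RoC^{\epsilon_1}\bigl(R_1^{-1}(\bar f - c_1)\bigr)$, with $\RoC$ outermost, but this only changes which rotation $R$ you obtain, not the conclusion.
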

\begin{proof}
   The forward implication follows from Lemma \ref{lem:involution_shift_invariance}.  
   
   For the converse, Corollary \ref{cor:Delta(k)} shows $V_f=V_g$ implies that $f,g$ are lag-homometric.  
   Then Lemma \ref{lem:RoC-canon} shows we can transform both $\bar f \leftarrow f$ and $\bar g \leftarrow g$ so both $\bar f$ and $\bar g$ are in $\RoC$-canonical forms and still lag-homometric with each other and $f,g$.  Then by Theorem \ref{thm:fg-same} we must have $\bar f = \bar g$.  Hence, any other operation to $g$ would change $\bar g$ violating it being lag-homometric, and would hence change the representations $V_g$, completing the proof.  
\end{proof}

Let $[f]$ denote the equivalence class of functions to $f \in \F^m_{\neq}$ up to shifts, rotations, and $\RoC$.  Let $\F^m_{[\neq]}$ be the family of distinct functions modulo this equivalence class.  
Define function $\D'_\Phi(f,g) = \|V_f - V_g\|$ for $V_f, V_g$ being the sketched representation of $f,g$ respectively.  

\begin{corollary}
    The distance $\D'_\Phi$ for $\Phi(z) = \exp(-z)$ is a \textbf{metric} on $\F^m_{[\neq]}$.
\end{corollary}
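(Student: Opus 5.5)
The plan is to verify the metric axioms for $\D'_\Phi([f],[g]) = \|V_f - V_g\|$ on equivalence classes, using Theorem~\ref{thm:main-invariance} to handle both well-definedness and the identity of indiscernibles.

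\textbf{Well-definedness.} First I check that $\D'_\Phi$ depends only on the classes $[f],[g]$, not on the chosen representatives. If $f' \in [f]$, then $f' = R(f)+c$ or $f' = R(\RoC(f))+c$ for some $R \in \Rm$ and $c \in \R$. By Lemma~\ref{rot_shift_invariance} and Lemma~\ref{lem:involution_shift_invariance}, which also hold in the discrete setting $\F^m$ over $\Rm$ as remarked after them, we get $V_{f'} = V_f$. Here I also note that the class operations compose: a rotation or shift of an $\RoC$ is again of the stated form, since $\RoC$ composed with itself is a rotation and shift. Hence $V$ descends to a map $\F^m_{[\neq]} \to \R^m$, and $\D'_\Phi$ is well defined.

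\textbf{Easy axioms.} Nonnegativity, symmetry, and the triangle inequality follow immediately, because $\D'_\Phi$ is the pullback of the Euclidean norm on $\R^m$ under this map. For instance,
\[
\|V_f - V_h\| \le \|V_f - V_g\| + \|V_g - V_h\|.
\]

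\textbf{Identity of indiscernibles (the main step).} It remains to show that $\D'_\Phi([f],[g]) = 0$ implies $[f]=[g]$. Zero distance means $V_f = V_g$. The converse direction of Theorem~\ref{thm:main-invariance} then yields $R \in \Rm$ and $c \in \R$ with $f = R(g)+c$ or $f = R(\RoC(g))+c$, so $f \in [g]$. In other words, the induced map on classes is injective, and $\D'_\Phi$ is a genuine metric rather than a pseudometric.

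All the real work sits in Theorem~\ref{thm:main-invariance}, so the only obstacle I expect is bookkeeping. The equivalence relation generated by shifts, rotations and $\RoC$ must be exactly the relation appearing in that theorem. It must also preserve membership in $\F^m_{\neq}$, since distinct values and distinct pairwise differences are both unchanged under negation, reflection of the index, and translation. Finally, the values must be algebraic, so that Corollary~\ref{cor:Delta(k)} applies; this requires the shift constants $c$ to keep values algebraic.
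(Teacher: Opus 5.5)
Your proposal is correct and follows the paper's own reasoning. Pulling back the Euclidean norm on $\R^m$ gives a pseudometric. The two directions of Theorem~\ref{thm:main-invariance} then supply well-definedness on classes and the identity of indiscernibles; the paper gives this same argument explicitly for the shape version in Theorem~\ref{thm:metric}.
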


\subsection{Inner Products for the Sketch}
\label{sec:kernel}

We now define an inner product on the space of the sketch signatures, which is the standard inner product of a Hilbert space and quantifies the similarity between two functions in $\F$.

The \emph{invariant sketch kernel} \(K(f,g)\) is defined as the inner product between two signature functions $\psi_f$ and $\psi_g$ for $f,g \in \F$ as  
\[
K(f,g) = \langle \psi_f, \psi_g \rangle_{L^2} = \frac{1}{(2\pi)^2}\int_0^{2\pi} \psi_f(\alpha) \psi_g(\alpha) \, d\alpha.
\]

The invariant sketch kernel $K(f,g) = \langle \psi_f, \psi_g \rangle_{L^2}$ provides a powerful tool for quantifying the similarity between functions in $\F$.  By measuring the inner product of signature functions $\psi_f$ and $\psi_g$, the kernel captures the degree of alignment between $f$ and $g$ across all possible shifts on the unit circle \(\SS\). 
A common goal for a kernel to be used effectively within kernel methods in machine learning~\cite{schmiedl2017computational}, is for it to be \emph{positive semi-definite}.

\begin{lemma}[Positive Semi-Definiteness of the Sketch Kernel]
For functions $f, g \in \F$, the invariant sketch kernel $K(f,g)$ is positive semi-definite.  
\end{lemma}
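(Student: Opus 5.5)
The plan is to show positive semi-definiteness directly from the fact that $K$ is an inner product of feature maps. Fix any finite collection $f_1,\ldots,f_n \in \F$ and real coefficients $c_1,\ldots,c_n$. I need to prove that the Gram matrix $[K(f_i,f_j)]_{i,j}$ is symmetric and satisfies $\sum_{i,j} c_i c_j K(f_i,f_j) \ge 0$.

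Symmetry is immediate, since $\psi_f(\alpha)\psi_g(\alpha)=\psi_g(\alpha)\psi_f(\alpha)$ pointwise.

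Before the quadratic form makes sense, I must check that each signature $\psi_f$ is a real-valued element of $L^2([0,2\pi))$, so that the integrals defining $K$ are finite. With $\Phi(z)=\exp(-z)$, and likewise for the Laplace and Gaussian choices, $\Phi$ is positive and real. The functions in $\F$ are assumed bounded; for instance, those derived from standardized shapes take values in $[0,1]$, or more generally in $[0,M]$ for $\F_M$. Hence $\Phi(f(\theta)-f(\theta+\alpha))$ is bounded, and $\psi_f$ is a bounded measurable function of $\alpha$, so it lies in $L^2$. This boundedness is the only point needing care. If $\F$ were allowed to contain unbounded functions, I would restrict the statement to those $f$ for which $\psi_f\in L^2$, which is the natural domain of the kernel.

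For the quadratic form, I would use bilinearity of the integral:
\[
\sum_{i,j} c_i c_j K(f_i,f_j) = \frac{1}{(2\pi)^2}\int_0^{2\pi} \sum_{i,j} c_i c_j \psi_{f_i}(\alpha)\psi_{f_j}(\alpha)\, d\alpha = \frac{1}{(2\pi)^2}\int_0^{2\pi} \Bigl(\sum_i c_i \psi_{f_i}(\alpha)\Bigr)^2 d\alpha \ge 0 .
\]
The last expression is $\bigl\|\sum_i c_i \psi_{f_i}\bigr\|_{L^2}^2/(2\pi)^2$, which is nonnegative. This is the standard fact that any kernel of the form $K(f,g)=\langle \psi_f,\psi_g\rangle$ for a feature map $f\mapsto\psi_f$ into a Hilbert space is positive semi-definite.

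As a remark, the same computation applies verbatim in the discrete setting. There $K$ becomes a scaled dot product $\langle V_f,V_g\rangle$ of vectors in $\R^m$, and the quadratic form is $\|\sum_i c_iV_{f_i}\|^2\ge 0$.
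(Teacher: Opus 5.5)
Your proposal is correct and follows essentially the same route as the paper: expand the quadratic form using bilinearity of the $L^2$ inner product, then recognize it as $\bigl\|\sum_i c_i \psi_{f_i}\bigr\|_{L^2}^2 \ge 0$. Your extra checks are sensible care that the paper's proof omits: symmetry, and membership $\psi_f \in L^2$ via boundedness of $f$, which the paper's definition of $\F$ leaves implicit.
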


\begin{proof}
A positive semidefinite kernel is one that for any finite set of functions $\{f_1, f_2, \dots, f_n\} \subset \F$ and real coefficients $\{c_1, c_2, \dots, c_n\} \subset \R$, then 
$\sum_{i=1}^n \sum_{j=1}^n c_i c_j K(f_i, f_j) \ge 0$.

We show this by first substituting our kernel into the sum
$
\sum_{i=1}^n \sum_{j=1}^n c_i c_j \langle \psi_{f_i}, \psi_{f_j} \rangle_{L^2}.  
$
The inner product \(\langle \cdot, \cdot \rangle_{L^2}\) is bilinear. This property allows us to move the summations and coefficients inside the inner product
$
= \left\langle \sum_{i=1}^n c_i \psi_{f_i}, \sum_{j=1}^n c_j \psi_{f_j} \right\rangle_{L^2}
$.
Now for a new function \(V \in L^2([0, 2\pi])\) that is the linear combination of the signature functions
$
V(\alpha) = \sum_{i=1}^n c_i \psi_{f_i}(\alpha), 
$
this becomes the inner product of \(V\) with itself as
$\langle V, V \rangle_{L^2}= \|V\|^2_{L^2}$.  
Therefore, as desired
\[
\sum_{i=1}^n \sum_{j=1}^n c_i c_j K(f_i, f_j) = \|V\|^2_{L^2} \ge 0. \qedhere
\]
\end{proof}

%

%

\subsection{Computing the Sketch}
For computation, we focus on $f \in \F^m$ over $\Rm$.  
The sketch takes $O(m^2)$ time to compute directly, as it provides a vector in $\R^m$, and each coordinate takes $O(m)$ time to sum over all discrete rotations in $[m]$.  In this section we provide two alternatives.

\paragraph*{Fast FFT algorithm: near-linear time.}
First we leverage that, via Lemma \ref{lem:convolution}, we can write the sketch vector $V_f \in \R^m$, defined for any coordinate $j \in [m]$ as the convolution of two discrete functions $p_f(j) = \exp(-f(-j \frac{2\pi}{m}))$ and $q_f(j) = \exp(f(j \frac{2 \pi}{m}))$.   Specifically:
\[
V_f(j) = \frac{1}{m} (p_f * q_f)(j).
\]
Then we can leverage the \emph{discrete convolution theorem}~\cite{CooleyTukey1965,Oppenheim2009}.  This states that given two discrete signals $p,q \in \R^m$ that their convolution can be computed in the frequency domain
\[
(p * q)[j] = \DFT^{-1}( \DFT(p) \odot \DFT(q) )[j]
\]
where $\DFT$ is the Discrete Fourier Transform, $\DFT^{-1}$ is its inverse, and $\odot$ denotes element-wise multiplication of the transformed signals.
Moreover, this approaches computes all $m$ coordinates simultaneously:  
\begin{enumerate}
    \item Compute the DFT of the signals $p_f$ and $q_f$.
    \item Perform a single element-wise multiplication of their transformed representations.
    \item Compute the Inverse DFT of the resulting product to obtain $\psi_f$.
\end{enumerate}
Using the Fast Fourier Transform (FFT)~\cite{CooleyTukey1965}, can compute $\DFT$ and its inverse $\DFT^{-1}$, each in $O(m \log m)$ time.  

\begin{theorem}  \label{thm:FFT-sketch}
    For $f \in \F^m$ we can build the sketch $V_f \in \R^m$ using $\Phi(z) = \exp(-z)$ in $O(m \log m)$ time.  
\end{theorem}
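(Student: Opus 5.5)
The plan is to make the discrete analogue of Lemma \ref{lem:convolution} exact, and then reduce the computation to one cyclic convolution evaluated by the discrete convolution theorem. For $f \in \F^m$, define the length-$m$ sequences $p(j) = \exp(-f(-j \bmod m))$ and $q(j) = \exp(f(j))$ for $j \in [m]$. Their cyclic convolution is $(p*q)(k) = \sum_{i \in [m]} p(i)\, q(k-i \bmod m)$. Substituting $i' = -i \bmod m$ turns this into
\[
(p*q)(k) = \sum_{i' \in [m]} \exp\bigl(-(f(i') - f(i'+k))\bigr),
\]
and the right-hand side is exactly $m\, V_f(k)$ when $\Phi(z)=\exp(-z)$. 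This is the same change of variables as in Lemma \ref{lem:convolution}. Here the sum runs over $\mathbb{Z}_m$, so it is only a reindexing of a finite group and needs no periodicity or measure argument. Hence $V_f = \frac{1}{m}(p*q)$ coordinatewise.

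Next I would write down the cost of each step. Forming $p$ and $q$ takes $m$ exponential evaluations and one index reversal modulo $m$, so $O(m)$ time. By the discrete convolution theorem, $p*q = \DFT^{-1}(\DFT(p)\odot\DFT(q))$, where $\DFT$ is of length $m$; cyclic convolution is exactly what the length-$m$ DFT diagonalizes, so no zero-padding is needed. Using an FFT for arbitrary $m$, each of the three transforms takes $O(m\log m)$ time. Cooley--Tukey handles highly composite $m$, and Bluestein's chirp-z reduction handles general $m$, including primes, at the same asymptotic cost. Alternatively, one can simply choose $m$ to be a power of two when discretizing. The pointwise product and the final scaling by $1/m$ cost $O(m)$. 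The total is therefore $O(m\log m)$.

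The main obstacle is not conceptual; it is keeping the discrete indexing exactly right. The reversal in $p$ must be taken modulo $m$, with $p(0) = \exp(-f(0))$, so that the sum matches $V_f(k)$ and not its reflection $V_f(-k)$. I would also note that the output is real: $p*q$ is a convolution of real sequences, so any imaginary parts produced by floating-point FFT are round-off and can be discarded. I would treat arithmetic operations, including $\exp$, as unit cost, matching the model the paper uses for its $O(m^2)$ direct bound.
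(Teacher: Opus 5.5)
Your proposal is correct and follows the paper's argument. Like the paper, you write $V_f$ as $\frac{1}{m}$ times the cyclic convolution of $p(j)=\exp(-f(-j \bmod m))$ and $q(j)=\exp(f(j))$, the discrete form of Lemma~\ref{lem:convolution}, and then evaluate it with length-$m$ FFTs via the discrete convolution theorem. Your explicit reindexing over $\mathbb{Z}_m$ and your note that Bluestein's method gives $O(m\log m)$ for arbitrary (including prime) $m$ usefully tighten details the paper leaves implicit; the paper only cites Cooley--Tukey.
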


\paragraph*{Randomized low-dimensional sketch.}
We also note that given two functions $f,g \in \F$, one can view $K(f,g) = \frac{1}{(2\pi)^2} \int_0^{2\pi} \psi_f(\alpha) \psi_g(\alpha) \dir \alpha$ as an expected value of $\alpha \in \mathsf{Unif}(0, 2 \pi]$.  This implies one can estimate $K(f,g)$ from a sample of $t = O(\frac{1}{\eps^2} \log \frac{1}{\delta})$ so with probability at least $1-\delta$, it guarantees at most $\eps M^2$ error for $f,g \in \F_M$.  
In particular, by consistently using $t \ll m$ samples $\alpha_1, \alpha_2, \ldots, \alpha_t \in (0,2\pi]$, this induces an approximate sketch $\hat V_f \in \R^t$.  This is akin to Random Fourier Features~\cite{RahimiRecht2007}; details are in Appendix \ref{app:kernels}.  However the same sort of guarantees can be achieved by applying a JL random projection~\cite{matouvsek2008variants} on $V_f \in \R^m$.  Since our random sketch takes $O(tm)$ time to build, this is not typically much faster than creating the full sketch $V_f \in \R^m$ and applying the Fast JL~\cite{kane2014sparser} in $O(m \log m + t/\eps)$ time.


\section{Mapping from Shapes to Functions in $\SS$}
\label{sec:shape2func}

We consider families of standardized shapes $\Sh^m$ which are discretized into $m$ angular regions.  For $X \in \Sh^m$ this assumes the ray $r_\theta$ from the origin at angle $\theta$ is consistent in its intersection with $X$ within a discrete set of angles $\Theta_j = [j \frac{2 \pi}{m}, (j+1) \frac{2 \pi}{m})$.  Then for star-shaped shapes $X \in \Sh^{\star}_m$ this intersection $I(r_\theta \cap X)$ is further restricted to be of the form of $[0,a]$ for some interval with $a \leq 1$.  

In practice (as we explore in Section \ref{sec:exp}), continuous shapes can be converted into this form by measuring along a discrete set of rays.  

\subsection{Star-Shaped Standardized Shapes}
\label{sec:star-sh}

We first discuss star-shaped standardized shapes from $\Sh^m_\star$.  Such a shape can be isometrically mapped into a vector in $\F^m$.  For $X \in \Sh^m_\star$ for each angular region $\Theta_j$ we determine the intersection $[0,a_j] = I(X \cap r_\theta)$ (for $\theta \in \Theta_j$) with the values $a_j$ of the extent of the shape in that direction.  For function $f \in \F^m$ we assign $f(j) = a_j$ as in \cite{Mumford1993}.  Figure \ref{fig:star-shape} shows an example of a fish shape $X \in \Sh$ approximated with various values of $m$ in $\Sh^m_\star$, and then each represented as its corresponding function in $\F^m$.  

\begin{figure}[H]
    \centering
     \includegraphics[width=\linewidth]{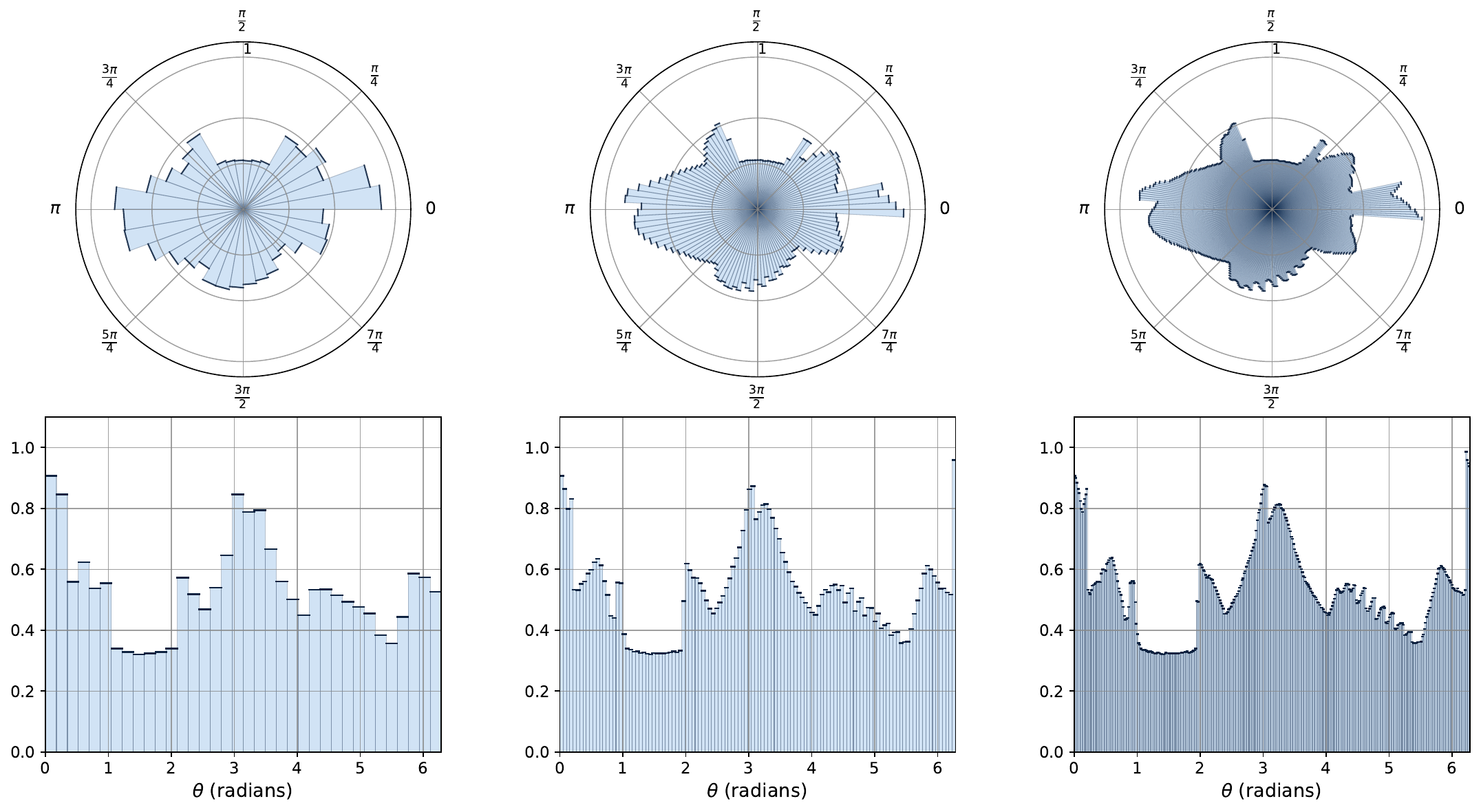}
    \caption{Comparison of a shape at low resolution ($m=36$, left), medium resolution ($m=120$, middle) and high resolution ($m=360$, right). Each panel shows the polar plot (top) of $X \in \Sh_\star^m$, and its corresponding radial signal in $\F^m$ (bottom).}
    \label{fig:star-shape}
\end{figure}

Define our distance between standardized star shapes $X_1, X_2 \in \Sh^m_\star$ via their sketched representations $V_1, V_2 \in \R^m$ of their corresponding function representations $f_1, f_2 \in \F^m$, respectively as 
\[
\D_\Phi(X_1, X_2) = \|V_1 - V_2\|.
\]
We say a shape $X \in \Sh^m_\star$ is in \emph{general position} if both all values $a_j = X(\theta_j)$ are distinct, and all differences $a_j - a_{j'}$ for $j \neq j' \in [m]$ are also distinct.  
Define $[X]$ as the equivalence class of standardized star shapes $X' \in \Sh^m_\star$ that can be rotated $R(X') = X$ or $R(\RoC(X')) = X$ for some $R \in \R_m$.  Note that because there is an invariance between $\Sh^m_\star$ and $\F^m$ (i.e., with a invertible map $\xi : \Sh^m_\star \to \F^m$), then we can inherit the definition of $\RoC$ on $\F^m$ to that on $\Sh^m_\star$; define $\RoC(X) = \xi^{-1}(\RoC(\xi(X)))$.  An illustration of this on a fish is shown in Figure \ref{fig:RoC-fish}.  
Let $\Sh^m_{[\star]}$ as the space of these standardized star shapes modulo rotation and $\RoC$, i.e., each element of equivalence class $[X]$ are considered the same object.

\begin{figure}[H]
    \centering
    \includegraphics[width=\linewidth]{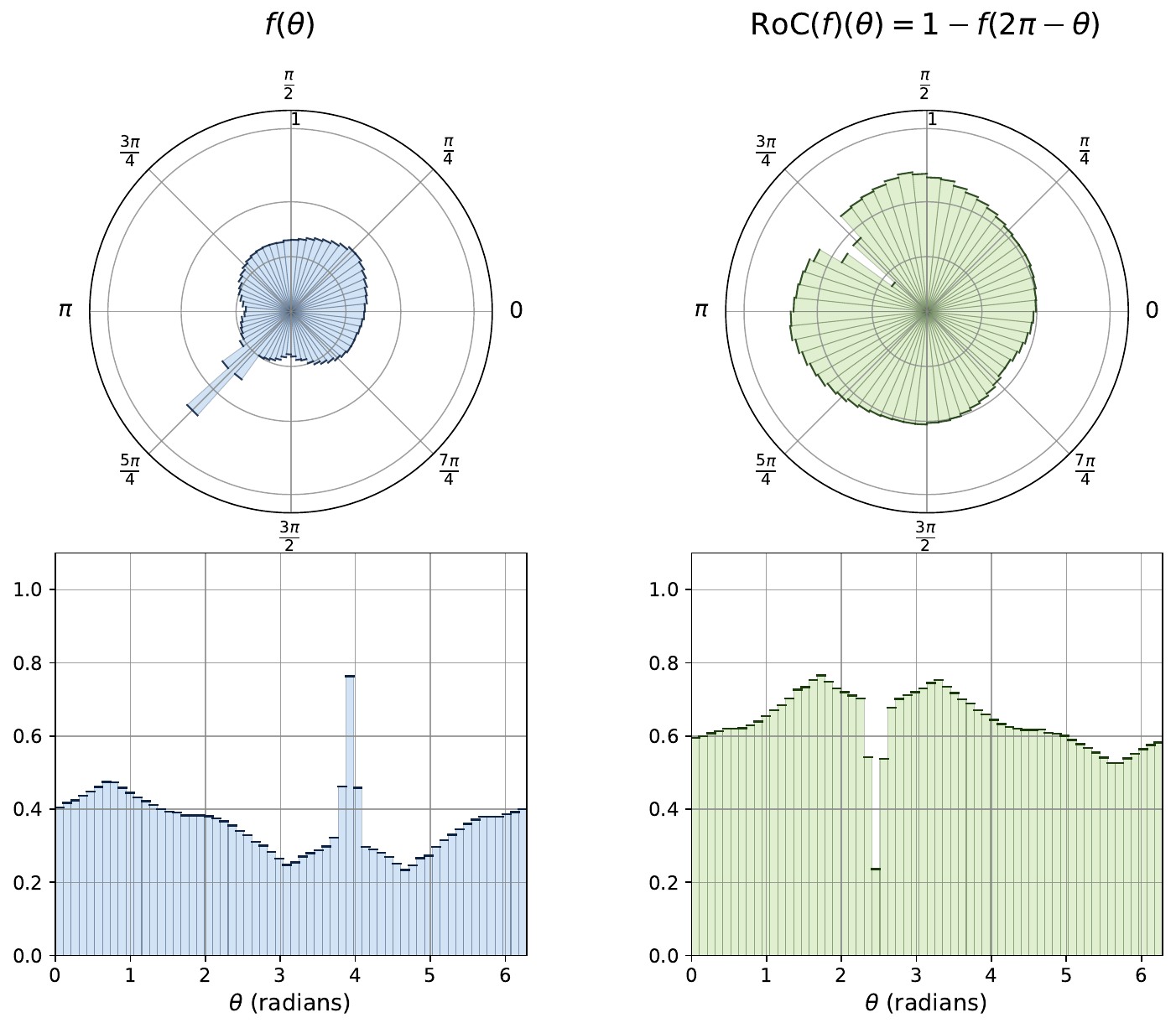}
    \caption{Illustration of an $\RoC$ operation on a fish shape $X$, and its corresponding function $f$.  The original is shown on the left, and the $\RoC(X)$ and $\RoC(f)$ on the right. }
    \label{fig:RoC-fish}
\end{figure}

\begin{theorem}
\label{thm:metric}
For shapes $\Sh^m_{[\star]}$ in general position then $\D_\Phi$ is a metric using $\Phi(z) = \exp(-z)$.  
\end{theorem}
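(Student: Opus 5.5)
The plan is to transfer the metric property from functions to shapes through the invertible map $\xi : \Sh^m_\star \to \F^m$, $\xi(X)(j) = a_j$, and then lean on Theorem~\ref{thm:main-invariance}. By definition $\D_\Phi(X_1,X_2) = \|V_{\xi(X_1)} - V_{\xi(X_2)}\|$. So $\D_\Phi$ is the pullback of the Euclidean distance on $\R^m$ under $X \mapsto V_{\xi(X)}$. Symmetry, non-negativity and the triangle inequality are then immediate from the Euclidean norm. What has to be checked is that $\D_\Phi$ is well defined on equivalence classes $[X]$, and that it separates distinct classes.

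For well-definedness, take $X' \in [X]$, so $R(X') = X$ or $R(\RoC(X')) = X$ with $R \in \Rm$. A rotation by $j\frac{2\pi}{m}$ of a shape in $\Sh^m_\star$ cyclically shifts the radial function: $\xi(R(X')) = \xi(X')(\cdot + j)$. The shape-level $\RoC$ is defined so that $\xi(\RoC(X')) = \RoC(\xi(X'))$. Lemma~\ref{rot_shift_invariance} and Lemma~\ref{lem:involution_shift_invariance}, in their discrete $\F^m$, $\Rm$ versions noted after Lemma~\ref{lem:involution_shift_invariance}, then give $V_{\xi(X)} = V_{\xi(X')}$. Hence $\D_\Phi$ depends only on the classes, and $\D_\Phi([X],[X]) = 0$.

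For separation, suppose $\D_\Phi(X_1,X_2)=0$ with $X_1, X_2$ in general position. Then $V_{f_1} = V_{f_2}$ for $f_i = \xi(X_i)$. General position is exactly the statement that $f_i \in \F^m_{\neq}$; I will also assume, as the function family $\F$ requires, that the radii $a_j$ are real algebraic, which is needed for Corollary~\ref{cor:Delta(k)}. Theorem~\ref{thm:main-invariance} therefore yields $R \in \Rm$ and $c \in \R$ with $f_1 = R(f_2) + c$ or $f_1 = R(\RoC(f_2)) + c$.

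The main obstacle is disposing of the offset $c$, since shape classes do not quotient by radial offsets. In the rotation case, $f_1$ and $R(f_2)$ are both radial functions of standardized shapes. Standardization forces $\max_j f_1(j) = 1 = \max_j R(f_2)(j)$, since the maximum radius is $1$, so $c=0$. In the $\RoC$ case I would choose $M$ so that $\RoC$ also preserves the maximum value $1$, or else absorb $c$ into the convention defining the shape-level $\RoC$. In either case $R(\RoC(f_2))$ is then the radial function of a shape with maximum $1$, again forcing $c = 0$. Once $c=0$, the invertibility of $\xi$ gives $X_1 = R(X_2)$ or $X_1 = R(\RoC(X_2))$, i.e.\ $[X_1] = [X_2]$, which completes the metric axioms on $\Sh^m_{[\star]}$.
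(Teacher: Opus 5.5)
Your proposal follows the paper's proof: $\D_\Phi$ is a pseudometric because it is the pullback of the Euclidean distance, and separation comes from Corollary~\ref{cor:Delta(k)} and Theorem~\ref{thm:main-invariance} applied to the radial functions, which lie in $\F^m_{\neq}$ by general position. You are more careful than the paper, which never removes the offset $c$ even though $[X]$ does not quotient by offsets; your argument for $c=0$ works provided that standardized means maximum radius exactly $1$, as the paper's standardization procedure produces, rather than the $\le 1$ in the definition of $\Sh$ (under which the offset genuinely cannot be removed), and provided that $\RoC$ on shapes uses $M = 1+\min_j f(j)$ or an equivalent offset-absorbing convention.
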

\begin{proof}
    By embedding in $\R^m$ and using the Euclidean distance, we automatically inherit symmetry, triangle inequality, and that the distance is at least $0$ (non-negativity), making it a \emph{pseudometric}.  What remains is to show that $\|V_1 - V_2\|=0$ if and only if $X_1 = R(X_2)$ under some rotation $R \in \Rm$ or $\RoC$ operation.    This follows in two steps.  First, as argued above setting $f(j) = X(\theta_j)$ is an isometric mapping for $X \in \Sh^m_\star$ into $\F^m_{\neq}$.  Then by Theorem \ref{thm:main-invariance} two functions $f_1, f_2 \in \F^m_{[\neq]}$ satisfy $[f_1] = [f_2]$ (up to rotation and $\RoC$), if and only if their corresponding sketch vectors $V_1 = V_2$ are equal. 
\end{proof}

\paragraph*{Star Distance.}
For $X \in \Sh_\star$, and angle $\theta \in [0,2 \pi)$ let $X(\theta) = a_\theta$ where for ray $r_\theta$ the intersection $I(X \cap r_\theta) = [0,a_\theta)$.  
Now consider two $X_1, X_2 \in \Sh^\star$, and define the following \emph{star distance}:
\[
\D_\star(X_1, X_2) = \max_{\theta \in [0,2\pi)} |X_1(\theta) - X_2(\theta)|.
\]
The star distance between two shapes under rotations, what we call the \emph{R-star distance}:  
\[
\D_{R\star}(X_1, X_2) = \min_{R \in \mathbb{SO}(2)} \D_\star(X_1, R(X_2)).
\]

\paragraph*{Discrete Approximation.}
We can show that $\D_\star$ is approximately preserved as we replace a $\eta$-Lipschitz shape $X \in \Sh^{\star,\eta}$ with some $X' \in \Sh_{\star,\eta}^m$.  In particular choose a set $\mathcal{T} = \{\theta_1, \theta_2, \ldots, \theta_m\}$  so $\theta_j = j \frac{2\pi}{m} + \frac{\pi}{m}$.  That is each $r_{\theta_j}$ is the ray in the middle of the wedge of angles in $\Theta_j$.  
Define $X'$ as the \emph{$m$-discrete approximation of $X$} by $X'(\theta_j) = X(\theta_j)$; that is it takes the angular radius in the middle of each wedge $\Theta_j$. 

\begin{lemma} \label{lem:m-approx}
    For $X' \in \Sh^m_\star$ the $m$-discrete approximation of $X \in \Sh_{\star,\eta}$ then $\D_\star(X,X') \leq \pi \eta/m$.
\end{lemma}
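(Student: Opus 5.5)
The plan is to bound $|X(\theta) - X'(\theta)|$ pointwise for every $\theta \in [0,2\pi)$ and then take the maximum, which is exactly $\D_\star(X,X')$.

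First, fix an arbitrary angle $\theta$. Since the wedges $\Theta_j = [j\frac{2\pi}{m}, (j+1)\frac{2\pi}{m})$ partition $[0,2\pi)$, there is a unique $j \in [m]$ with $\theta \in \Theta_j$. Because $X' \in \Sh^m_\star$ is constant on each wedge and is defined by $X'(\theta_j) = X(\theta_j)$ with $\theta_j = j\frac{2\pi}{m} + \frac{\pi}{m}$ the midpoint of $\Theta_j$, we have $X'(\theta) = X(\theta_j)$. Hence
\[
|X(\theta) - X'(\theta)| = |X(\theta) - X(\theta_j)| = |a_\theta - a_{\theta_j}|.
\]

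Second, apply the $\eta$-Lipschitz condition defining $\Sh_{\star,\eta}$, namely $|a_\theta - a_{\theta'}| \leq \eta \frac{1}{2\pi}|\theta - \theta'|$, with $\theta' = \theta_j$. Since $\theta_j$ is the midpoint of an interval of length $\frac{2\pi}{m}$ containing $\theta$, we get $|\theta - \theta_j| \leq \frac{\pi}{m}$. This yields $|X(\theta) - X'(\theta)| \leq \frac{\eta}{2\pi}\cdot\frac{\pi}{m} = \frac{\eta}{2m}$, which is at most $\pi\eta/m$. If one instead reads the Lipschitz condition without the $\frac{1}{2\pi}$ normalization, i.e. as $|a_\theta - a_{\theta'}| \leq \eta|\theta-\theta'|$, the same argument gives exactly $\pi\eta/m$. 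Either way the claimed bound holds. Taking the supremum over $\theta$ then gives $\D_\star(X,X') \leq \pi\eta/m$.

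The argument is essentially routine. The only step needing care is the distance bound $|\theta - \theta_j| \leq \pi/m$. In particular, one should check that no wraparound on $\SS$ is needed: the wedge is half-open and the midpoint lies strictly inside it, so the plain difference $|\theta - \theta_j|$ is already the angular distance. One should also confirm that $X'$ is well defined as an element of $\Sh^m_\star$, meaning it is star-shaped with radii in $(0,1]$; this is inherited directly from the values $X(\theta_j)$.
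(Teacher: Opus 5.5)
Your proof is correct and is the same argument as the paper's: place $\theta$ in its wedge $\Theta_j$, use $X'(\theta)=X(\theta_j)$, and apply the Lipschitz condition with $|\theta-\theta_j|\le \pi/m$. You are also right that the stated normalization $\eta\frac{1}{2\pi}|\theta-\theta'|$ gives the sharper bound $\eta/(2m)$; the paper simply asserts the weaker $\eta\pi/m$, which is the bound obtained without the $\frac{1}{2\pi}$ factor.
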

\begin{proof}
    For each $\theta \in \Theta_j$, we have $|X(\theta) - X(\theta_j)| \leq \eta \frac{\pi}{m}$.  Since for any $\theta \in \Theta_j$ we have $X'(\theta) = X(\theta_j)$, the bound follows.  
\end{proof}

\begin{lemma} \label{lem:Dstar-L}
    For two $X_1, X_2 \in \Sh_\star$ if $\D_{R\star}(X_1, X_2) \leq \eps$, then the corresponding rotational invariant sketches $\psi_1, \psi_2$ satisfy $\|\psi_1 - \psi_2\|_{L_2} \leq (2e) \eps$.   
\end{lemma}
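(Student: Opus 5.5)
We reduce to the case where the optimal rotation is the identity, using rotation invariance, and then bound the pointwise change of the integrand with a mean value estimate for $\Phi(z)=\exp(-z)$ on the relevant range.

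\textbf{Step 1: remove the rotation.} Since $\D_{R\star}(X_1,X_2)\le\eps$, pick $R\in\mathbb{SO}(2)$, or a near-minimizer $R$ with $\D_\star(X_1,R(X_2))\le \eps+\gamma$ and then let $\gamma\to 0$. Let $f_1(\theta)=X_1(\theta)$ and $f_2(\theta)=R(X_2)(\theta)=X_2(\theta+\beta)$ for the corresponding angle $\beta$. By Lemma~\ref{rot_shift_invariance}, $\psi_{f_2}=\psi_2$. So it suffices to show $\|\psi_{f_1}-\psi_{f_2}\|_{L_2}\le 2e\eps$ whenever $\sup_\theta|f_1(\theta)-f_2(\theta)|\le\eps$.

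\textbf{Step 2: pointwise bound on the integrand.} Fix $\alpha$ and $\theta$, and set $z_i=f_i(\theta)-f_i(\theta+\alpha)$. Then
\[
|z_1-z_2|\le|f_1(\theta)-f_2(\theta)|+|f_1(\theta+\alpha)-f_2(\theta+\alpha)|\le 2\eps .
\]
Since the shapes are standardized star shapes, $f_i\in[0,1]$, so $z_i\in[-1,1]$. On $[-1,1]$ we have $|\Phi'(z)|=e^{-z}\le e$, so by the mean value theorem $|\Phi(z_1)-\Phi(z_2)|\le e\,|z_1-z_2|\le 2e\eps$. Averaging over $\theta$ with the $\frac{1}{2\pi}$ normalization gives $|\psi_1(\alpha)-\psi_2(\alpha)|\le 2e\eps$ for every $\alpha$.

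\textbf{Step 3: pass to $L_2$.} The paper's inner product carries the normalization $\frac{1}{(2\pi)^2}\int_0^{2\pi}$, so the corresponding norm satisfies
\[
\|\psi_1-\psi_2\|_{L_2}^2=\frac{1}{(2\pi)^2}\int_0^{2\pi}|\psi_1-\psi_2|^2\,d\alpha\le \frac{1}{2\pi}(2e\eps)^2\le(2e\eps)^2 .
\]
With the probability normalization $\frac{1}{2\pi}\int$, the bound is exactly $(2e\eps)^2$. In either case $\|\psi_1-\psi_2\|_{L_2}\le 2e\eps$.

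\textbf{Main obstacle.} The only delicate point is the constant in Step 2, which requires the differences $z_i$ to lie in $[-1,1]$. This comes from standardization ($\max\|x\|\le1$ gives $0\le a_\theta\le1$), and it is where $e=\sup_{|z|\le1}e^{-z}$ enters. A minor technical issue is whether the minimum over $\mathbb{SO}(2)$ is attained; the near-minimizer argument in Step 1 avoids needing this. Measurability of $f_i$ is assumed implicitly by the definition of $\psi$.
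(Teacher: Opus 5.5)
Your proposal is correct and takes essentially the same route as the paper. Both reduce to the identity rotation, bound the change in each integrand argument by $2\eps$, and use that $\exp(-z)$ is $e$-Lipschitz on $[-1,1]$ to get the pointwise bound $|\psi_1(\alpha)-\psi_2(\alpha)|\le 2e\eps$, then pass to $L_2$ through this sup bound. Your near-minimizer argument for the rotation and your check of both $L_2$ normalizations tidy up steps the paper leaves implicit.
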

\begin{proof}
   Without loss of generality let the optimal rotation for $\D_{R\star}$ be the identity rotation $R(X_2) = X_2$, so now $\D_\star(X_1, X_2) \leq \eps$.  
   Then we know that $\max_\theta |X_1(\theta) - X_2(\theta)| \leq \eps$.  So then the max difference over all choices of $\theta,\alpha \in [0,2\pi)$ in 
   \[
   |(X_1(\theta) - X_1(\theta+\alpha)) - (X_2(\theta) - X_2(\theta+\alpha)| 
   \leq 
   |X_1(\theta) -  X_2(\theta)| + |X_1(\theta+\alpha) - X_2(\theta+\alpha)| 
   \leq 2 \eps.
   \]  
   Then we observe that $\Phi(z) = \exp(-z)$ is $e$-Lipschitz for $z \in [-1,1]$, and each of $f_1, f_2$ have range $(0,1]$, so $X_1(\theta) - X_1(\theta+\alpha) \in [-1,1]$ and same for $X_2$.  Thus for all $\alpha$
   \begin{align*}
   |\psi_1(\alpha) - \psi_2(\alpha)| 
   & \leq
   \frac{1}{2 \pi} \int_0^{2\pi} |\Phi(X_1(\theta) - X_1(\theta+\alpha)) - \Phi(X_2(\theta) - X_2(\theta+\alpha))| \dir \theta
   \\ &\leq
   \max_{\theta \in [0,2\pi]} |\Phi(X_1(\theta) - X_1(\theta+\alpha)) - \Phi(X_2(\theta) - X_2(\theta+\alpha))|
   \\ &\leq
   |\Phi(z) - \Phi(z+2\eps)|
   \leq
   e \cdot 2\eps.
   \end{align*}
   As a result 
   \[
   \|\psi_1 - \psi_2\|^2_{L_2} 
   = 
   \frac{1}{2\pi}\int_0^{2\pi} (\psi_1(\alpha) - \psi_2(\alpha))^2 \dir \alpha
   \leq
   \max_{\alpha \in [0,2\pi)} (\psi_1(\alpha) - \psi_2(\alpha))^2
   \leq (2e \eps)^2.  \qedhere
   \]
\end{proof}

\begin{theorem}
    For two shapes $X_1, X_2 \in \Sh_{\star,\eta}$ and $\Phi(z) = \exp(-z)$ we can compute a distance $D$ so that $|D - \D_\Phi(X_1,X_2)| \leq \eps$ in $O((1/\eta \eps) \log (1/\eta \eps))$ time.  
\end{theorem}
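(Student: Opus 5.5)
The plan is to discretize both shapes with the $m$-discrete approximation, sketch the discretized functions with the FFT algorithm, and output the Euclidean distance between the two sketch vectors. Here $\D_\Phi(X_1,X_2)$ is read as the continuous quantity $\|\psi_1-\psi_2\|_{L_2}$ of the signature functions. The error analysis will go through \cref{lem:m-approx} and \cref{lem:Dstar-L}, together with a triangle inequality in $L_2$.

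\textbf{Step 1 (discretization).} Choose $m = \lceil 4\pi e\,\eta/\eps \rceil = O(\eta/\eps)$. For each $X_i$, form its $m$-discrete approximation $X_i'\in\Sh^m_\star$ by sampling $X_i(\theta_j)$ at the wedge midpoints $\theta_j$; this takes $O(m)$ ray queries. By \cref{lem:m-approx}, $\D_\star(X_i,X_i')\le \pi\eta/m \le \eps/(4e)$. Since $\D_{R\star}\le\D_\star$ (take the identity rotation), \cref{lem:Dstar-L} gives $\|\psi_{X_i}-\psi_{X_i'}\|_{L_2}\le 2e\cdot \eps/(4e)=\eps/2$.

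\textbf{Step 2 (exact evaluation on the discrete shapes).} Let $f_i'\in\F^m$ be the piecewise-constant function of $X_i'$. Its continuous signature $\psi_{X_i'}(\alpha)$ is determined by the discrete data. At lags $\alpha = k\frac{2\pi}{m}$ the integral collapses exactly to $V_{f_i'}(k)$. At intermediate $\alpha$, write $\alpha = k\frac{2\pi}{m} + s$ with $s\in[0,\frac{2\pi}{m})$. The integrand is then piecewise constant in $\theta$, so $\psi$ is the linear interpolation $(1-\lambda)V(k)+\lambda V(k+1)$ with $\lambda = s m/(2\pi)$.

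Consequently, $\|\psi_{X_1'}-\psi_{X_2'}\|_{L_2}^2$ is an explicit quadratic form in the difference vector $u = V_{f_1'}-V_{f_2'}$. Integrating the squared linear interpolant over each cell gives
\[
\frac{1}{m}\sum_k \tfrac{1}{3}\bigl(u_k^2+u_ku_{k+1}+u_{k+1}^2\bigr),
\]
which is computable in $O(m)$ once $V_{f_i'}$ is known, matching the paper's normalization of $\|\cdot\|_{L_2}$. Alternatively, one can simply return $D=\|u\|/\sqrt m$. The difference between these two is controlled by the neighbor differences $|u_{k+1}-u_k|$, which are $O(\eps)$ by the Lipschitz bound. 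Computing each $V_{f_i'}$ costs $O(m\log m)$ by \cref{thm:FFT-sketch}, giving total time $O((\eta/\eps)\log(\eta/\eps))$. The stated $O((1/\eta\eps)\log(1/\eta\eps))$ should be read with $\eta$ treated as a constant.

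\textbf{Step 3 (combine).} Define $D$ as the $L_2$ distance of the discrete signatures from Step 2. By the reverse triangle inequality,
\[
|D-\D_\Phi(X_1,X_2)|\le \|\psi_{X_1}-\psi_{X_1'}\|+\|\psi_{X_2}-\psi_{X_2'}\| \le \eps.
\]

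The main obstacle is Step 2. One must be careful that the continuous signature of a piecewise-constant function is really the stated interpolant of $V$; this requires checking the overlap lengths of shifted wedges. One must also reconcile normalizations between $\|V_1-V_2\|$ and $\|\psi_1-\psi_2\|_{L_2}$. If the interpolation bookkeeping is messy, the fallback is to use the Lipschitz bound from the proof of \cref{lem:Dstar-L} directly: $\psi$ varies by $O(\eta/m)$ within a cell, so sampling at grid lags already gives an $O(\eps)$ approximation after adjusting the constant in $m$.
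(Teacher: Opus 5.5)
Your proposal is correct and follows the paper's route: midpoint discretization via Lemma~\ref{lem:m-approx}, then Lemma~\ref{lem:Dstar-L} with the triangle inequality, then FFT sketching in $O(m\log m)$ time. It is more careful than the paper's proof in two places. The paper's choice $m=\frac{4e}{\pi\eta\eps}$ does not give $\pi\eta/m=\eps/(4e)$, whereas your $m=\Theta(\eta/\eps)$ does, so the stated $1/\eta$ dependence is only right for constant $\eta$. The paper also ends with $\|V_1-V_2\|\le\eps$ without ever relating the Euclidean sketch distance to $\|\psi_{X_1'}-\psi_{X_2'}\|_{L_2}$, and your exact interpolation formula in Step 2 supplies that link.
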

\begin{proof}
    Set $m = \frac{4e}{\pi \eta \eps}$ so that when using $X'$ an $m$ discrete approximation of $X \in \Sh_{\star, \eta}$ that $\D_\star(X,X') \leq \pi \eta / m = \eps/(4e)$, via Lemma \ref{lem:m-approx}.  
    This step is the only approximation, and takes $O(m)$ time for each shape $X_1, X_2$.  Then we represent them as functions $f_1, f_2 \in \F^m$, and convert them into sketches $V_1, V_2$ in $O(m \log m)$ time, via Theorem \ref{thm:FFT-sketch}.  By Lemma \ref{lem:Dstar-L} and $\D_\star(X_1,X'_1) \leq \eps/(4e)$ and $\D_\star(X_2,X'_2) \leq \eps/(4e)$ then by triangle inequality we have $\|V_1 - V_2\| \leq 2 \cdot (2e) \frac{\eps}{4e} = \eps$ as desired.   
    Computing $\|V_1 - V_2\|$ takes $O(m)$ time, so the total time is $O(m \log m) = O((1/\eta \eps) \log(1/\eta \eps))$.  
\end{proof}

Note that for two standardized star-shapes $X_1,X_2 \in \Sh_{\star}$ if $X_2$ is an $\eps$-Hausdorff distortion of $X_1$ (e.g., $\D_H(X_1, X_2) \leq \eps$), then this implies that $\D_\star(X_1,X_2) \leq \eps$, and hence by Lemma \ref{lem:Dstar-L}, their associated sketched functions $\psi_1, \psi_2$ satisfy $\|\psi_1 - \psi_2\|_{L_2} \leq (2e) \eps$.  This holds for shapes which are rotations of each other, with $\eps$-Hausdorff perturbations.  Thus, in this sense, in the rotation-shape space the sketch mapping is $2e$-Lipschitz with respect to Hausdorff distance.

%

\section{Experimental Validation}\footnote{Code and data to reproduce all experiments in this section are publicly available at \url{https://github.com/Hamidmath/Representation}; the star-shaped pipeline used in this paper is on the default branch.}
\label{sec:exp}

We assess the usefulness of our embedding and metric in two ways.  The first is how well our method can cluster data, and the second is in $k$-nearest neighbor search.
We focus here on the SQUID dataset~\cite{mokhtarian2013curvature,nasreddine2015shape}, which has $1100$ fish outlines; see examples in Figure~\ref{fig:raw_data}.  We treat the interior of each outline as a set $X \in \Sh$ after the standardization process.  While some shapes are star-shaped, to use our star-shaped mechanism, we first convert each shape into a star-shaped $X \in \Sh_{\star}^m$ by fitting each $X(\theta_j)$ value as the maximum point in the original shape along rays $r_\theta$ with $\theta \in \Theta_j$.  We then consider this set $\mathcal{X} \subset \Sh_m^\star$ as input.  When not specified, we use $m=128$.

\begin{figure}[H]
  \centering
  \includegraphics[width=\textwidth]{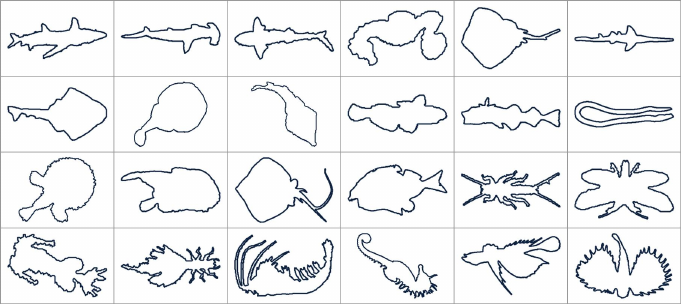}
  \caption{2D boundary points for sample fish shapes from the SQUID dataset.}
  \label{fig:raw_data}
\end{figure}

\paragraph*{Clustering.}
To assess the robustness of the proposed sketch vector with respect to rotation, and in particular how it interacts with the discretization level, we conducted the following experiment on a subset of star-shaped objects from the SQUID dataset.
We randomly select ten distinct shapes and standardize them.  For each shape, nine additional rotated copies were generated using uniformly random rotation angles in $[0, 2\pi)$, resulting in a collection $\mathcal{X}$ of $100$ objects in total.
We then convert each of these $100$ shapes to $\Sh_\star^m$ using our star-discretization process to get $\mathcal{X}^\star$, and each shape $X^\star_i \in \mathcal{X}^\star$ is converted to its vectorized representation $V_i \in \R^m$ to produce a set $\mathcal{V} \subset \R^m$.
We run a standard $k$-means clustering algorithm (with $k=10$) on $\mathcal{V}$, and measure the \emph{accuracy} of the clustering using the original object as a key: the fraction of shapes that are in the same cluster as the original shape they were rotated from.

The left panel of Figure~\ref{fig:exp-combined} illustrates the effectiveness of our representation.  It repeats the clustering experiment for a variety of values $m$ ranging from $8$ to $1024$ (plotted along the $x$-axis), reporting the mean accuracy with a shaded $\pm 1$ standard deviation band over $6$ random trials.  As expected, as $m$ increases we obtain a better representation of each shape: the average accuracy reaches about $0.95$ when $m$ hits roughly $100$, and is consistently above $0.99$ once $m > 250$.  It does not always reach $1.00$ due to occasional sampling of very similar fish shapes.

Note that when discretization is applied \textbf{before} the random rotation, the accuracy is $1.00$, since the signatures are rotation invariant.  So in addition to showing how easy it is to run a clustering algorithm with these signatures, this is a demonstration of the effectiveness of the star-shaped discretization process.

\paragraph*{$k$-NN Searching.}
We applied $5$-nearest-neighbor (5NN) search using the $\D_\Phi$ distance via signatures $V_i \in \R^m$ of the $1100$ vectorized objects $X_i$ in the full SQUID dataset, with $m=128$.
The right panel of Figure~\ref{fig:exp-combined} shows $3$ query fish (leftmost column, labeled ``Query''), and their $5$ nearest neighbors in descending order of closeness.  The returned fish are all similar in shape to the query but vary in their rotation, confirming the rotation invariance of the sketch.

\begin{figure}[H]
    \centering
    \includegraphics[width=\linewidth]{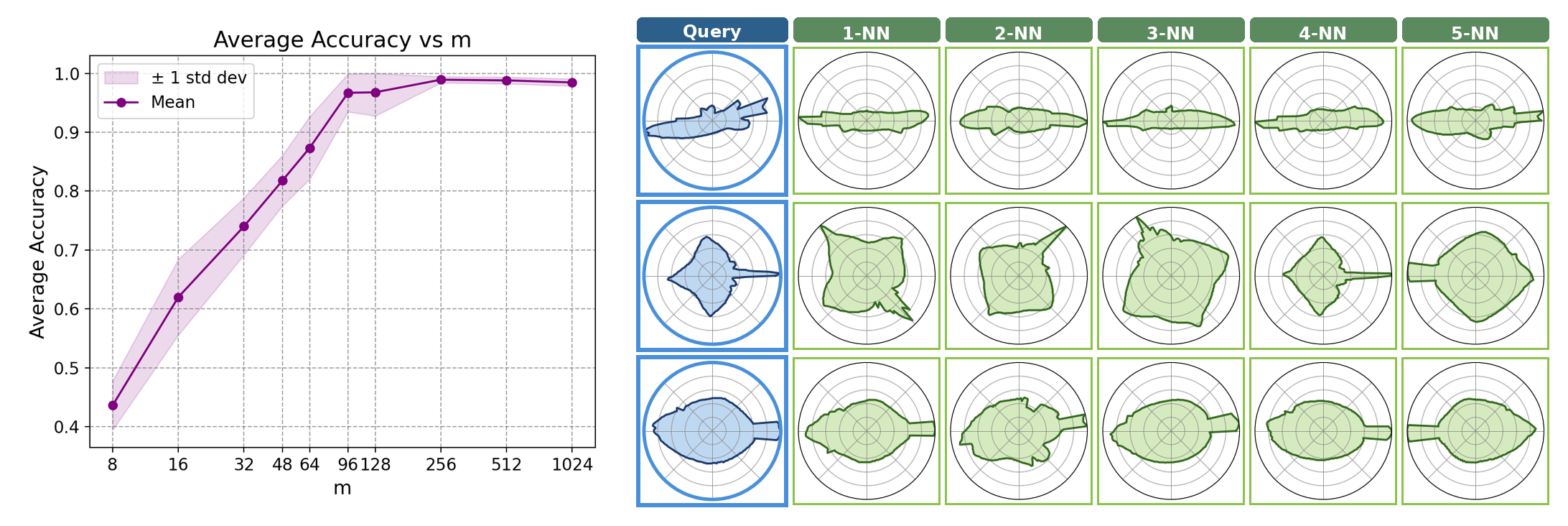}
    \caption{Left: clustering accuracy as a function of the discretization parameter $m$ ($10$ copies of $10$ shapes under random rotation, $k$-means with $k=10$; mean with $\pm 1$ standard deviation band over $6$ trials).  Right: nearest-neighbor queries (leftmost column, ``Query'') using their rotation-invariant signature with $m=128$; within each row, neighbors closer to the query appear on the left.}
    \label{fig:exp-combined}
\end{figure}

\bibliography{refs.bib}

@inproceedings{RahimiRecht2007,
  author = {A. Rahimi and B. Recht},
  title = {Random features for large-scale kernel machines},
  booktitle = {Advances in Neural Information Processing Systems (NIPS)},
  pages = {1177--1184},
  year = {2007}
}

@inproceedings{Mumford1993,
  author = {D. Mumford},
  title = {Pattern theory: the mathematics of perception},
  booktitle = {Proceedings of the International Congress of Mathematicians},
  publisher = {Birkhäuser},
  pages = {144--157},
  year = {1993}
}

@book{Srivastava2016,
  author = {A. Srivastava and E. Klassen},
  title = {Functional and Shape Data Analysis},
  publisher = {Springer},
  year = {2016}
}

@article{CooleyTukey1965,
  author = {J. W. Cooley and J. W. Tukey},
  title = {An algorithm for the machine calculation of complex Fourier series},
  journal = {Mathematics of Computation},
  volume = {19},
  number = {90},
  pages = {297--301},
  year = {1965}
}

@book{Oppenheim2009,
  author = {A. V. Oppenheim and R. W. Schafer},
  title = {Discrete-Time Signal Processing},
  publisher = {Prentice Hall},
  edition = {3rd},
  year = {2009}
}

@article{belongie2002,
  author = {S. Belongie and J. Malik and J. Puzicha},
  title = {Shape matching and object recognition using shape contexts},
  journal = {IEEE Transactions on Pattern Analysis and Machine Intelligence},
  volume = {24},
  number = {4},
  pages = {509--522},
  year = {2002}
}

@inproceedings{marr1977,
  author = {D. Marr and H. K. Nishihara},
  title = {Representation and recognition of the spatial organization of three-dimensional shapes},
  booktitle = {Proceedings of the Royal Society of London B: Biological Sciences},
  volume = {197},
  number = {1129},
  pages = {42--82},
  year = {1977}
}

@book{baker1975,
  author    = {A. Baker},
  title     = {Transcendental Number Theory},
  publisher = {Cambridge University Press},
  address   = {Cambridge, UK},
  year      = {1975}
}

@book{dryden98,
  author    = {Ian L. Dryden and Kanti Mardia},
  title     = {Statistical Shape Analysis},
  publisher = {John Wiley \& Sons},
  year      = {1998}
}

@inproceedings{agarwal2006bipartite,
  title={On bipartite matching under the RMS distance},
  author={Agarwal, Pankaj K and Phillips, Jeff M},
  booktitle={Proceedings of the Eighteenth Canadian Conference on Computational Geometry’, Kingston, Canada},
  pages={143--146},
  year={2006}
}

@inproceedings{ben2014minimum,
  title={Minimum partial-matching and Hausdorff RMS-distance under translation: combinatorics and algorithms},
  author={Ben-Avraham, Rinat and Henze, Matthias and Jaume, Rafel and Keszegh, Bal{\'a}zs and Raz, Orit E and Sharir, Micha and Tubis, Igor},
  booktitle={European Symposium on Algorithms},
  pages={100--111},
  year={2014},
  organization={Springer}
}

@InProceedings{ahsw-hdtpd-03,
  author =       {P. K. Agarwal and S.{Har-Peled} and M. Sharir and Y. Wang},
  title =        {Hausdorff Distance under Translation for Points, Disks,
                  and Balls},
  booktitle =    {Symposium on Computational Geometry},
  pages =        {282--291},
  year =         2003,
}

@inproceedings{huttenlocher1992dynamic,
  title={On dynamic Voronoi diagrams and the minimum Hausdorff distance for point sets under Euclidean motion in the plane},
  author={Huttenlocher, Daniel P and Kedem, Klara and Kleinberg, Jon M},
  booktitle={Proceedings of the eighth annual symposium on Computational geometry},
  pages={110--119},
  year={1992}
}

@article{arkin1991efficiently,
  title={An efficiently computable metric for comparing polygonal shapes},
  author={Arkin, Esther M and Chew, L Paul and Huttenlocher, Daniel P and Kedem, Klara and Mitchell, Joseph S},
  year={1991}
}

@article{osada2002shape,
  title={Shape distributions},
  author={Osada, Robert and Funkhouser, Thomas and Chazelle, Bernard and Dobkin, David},
  journal={ACM Transactions on Graphics (TOG)},
  volume={21},
  number={4},
  pages={807--832},
  year={2002},
  publisher={ACM New York, NY, USA}
}

@article{veltkamp2001state,
  title={State of the art in shape matching},
  author={Veltkamp, Remco C and Hagedoorn, Michiel},
  journal={Principles of visual information retrieval},
  pages={87--119},
  year={2001},
  publisher={Springer}
}

@article{cheng2013shape,
  title={Shape matching under rigid motion},
  author={Cheng, Siu-Wing and Lam, Chi-Kit},
  journal={Computational geometry},
  volume={46},
  number={6},
  pages={591--603},
  year={2013},
  publisher={Elsevier}
}

@article{chew1997geometric,
  title={Geometric pattern matching under Euclidean motion},
  author={Chew, L Paul and Goodrich, Michael T and Huttenlocher, Daniel P and Kedem, Klara and Kleinberg, Jon M and Kravets, Dina},
  journal={Computational Geometry},
  volume={7},
  number={1-2},
  pages={113--124},
  year={1997},
  publisher={Elsevier}
}

@inproceedings{chan2024convex,
  title={Convex Polygon Containment: Improving Quadratic to Near Linear Time},
  author={Chan, Timothy M and Hair, Isaac M},
  booktitle={40th International Symposium on Computational Geometry (SoCG 2024)},
  pages={34--1},
  year={2024},
  organization={Schloss Dagstuhl--Leibniz-Zentrum f{\"u}r Informatik}
}

@inproceedings{chan2025linear,
  title={A Linear Time Algorithm for the Maximum Overlap of Two Convex Polygons Under Translation},
  author={Chan, Timothy M and Hair, Isaac M},
  booktitle={41st International Symposium on Computational Geometry (SoCG 2025)},
  pages={31--1},
  year={2025},
  organization={Schloss Dagstuhl--Leibniz-Zentrum f{\"u}r Informatik}
}

@article{agarwal1998largest,
  title={Largest placement of one convex polygon inside another},
  author={Agarwal, Pankaj K and Amenta, Nina and Sharir, Micha},
  journal={Discrete \& Computational Geometry},
  volume={19},
  number={1},
  pages={95--104},
  year={1998},
  publisher={Springer}
}

@article{memoli2012some,
  title={Some properties of Gromov--Hausdorff distances},
  author={M{\'e}moli, Facundo},
  journal={Discrete \& Computational Geometry},
  volume={48},
  number={2},
  pages={416--440},
  year={2012},
  publisher={Springer}
}

@inproceedings{memoli2008gromov,
  title={Gromov-Hausdorff distances in Euclidean spaces},
  author={M{\'e}moli, Facundo},
  booktitle={2008 IEEE Computer Society Conference on Computer Vision and Pattern Recognition Workshops},
  pages={1--8},
  year={2008},
  organization={IEEE}
}

@article{schmiedl2017computational,
  title={Computational aspects of the Gromov--Hausdorff distance and its application in non-rigid shape matching},
  author={Schmiedl, Felix},
  journal={Discrete \& Computational Geometry},
  volume={57},
  number={4},
  pages={854--880},
  year={2017},
  publisher={Springer}
}

@inproceedings{qi2017pointnet,
  title={Pointnet: Deep learning on point sets for 3d classification and segmentation},
  author={Qi, Charles R and Su, Hao and Mo, Kaichun and Guibas, Leonidas J},
  booktitle={Proceedings of the IEEE conference on computer vision and pattern recognition},
  pages={652--660},
  year={2017}
}

@article{qi2017pointnet++,
  title={Pointnet++: Deep hierarchical feature learning on point sets in a metric space},
  author={Qi, Charles Ruizhongtai and Yi, Li and Su, Hao and Guibas, Leonidas J},
  journal={Advances in neural information processing systems},
  volume={30},
  year={2017}
}

@inproceedings{kazhdan2003rotation,
  title={Rotation invariant spherical harmonic representation of 3 d shape descriptors},
  author={Kazhdan, Michael and Funkhouser, Thomas and Rusinkiewicz, Szymon},
  booktitle={Symposium on geometry processing},
  volume={6},
  pages={156--164},
  year={2003}
}

@article{xiao2023unsupervised,
  title={Unsupervised point cloud representation learning with deep neural networks: A survey},
  author={Xiao, Aoran and Huang, Jiaxing and Guan, Dayan and Zhang, Xiaoqin and Lu, Shijian and Shao, Ling},
  journal={IEEE Transactions on Pattern Analysis and Machine Intelligence},
  volume={45},
  number={9},
  pages={11321--11339},
  year={2023},
  publisher={IEEE}
}

@inproceedings{phillips2020sketched,
  title={Sketched MinDist},
  author={Phillips, Jeff M and Tang, Pingfan},
  booktitle={36th International Symposium on Computational Geometry (SoCG 2020)},
  pages={63--1},
  year={2020},
  organization={Schloss Dagstuhl--Leibniz-Zentrum f{\"u}r Informatik}
}

@book{kendall2009shape,
  title={Shape and shape theory},
  author={Kendall, David George and Barden, Dennis and Carne, Thomas K and Le, Huiling},
  year={2009},
  publisher={John Wiley \& Sons}
}

@article{aumuller2020ann,
  title={ANN-Benchmarks: A benchmarking tool for approximate nearest neighbor algorithms},
  author={Aum{\"u}ller, Martin and Bernhardsson, Erik and Faithfull, Alexander},
  journal={Information Systems},
  volume={87},
  pages={101374},
  year={2020},
  publisher={Elsevier}
}

@article{malkov2018efficient,
  title={Efficient and robust approximate nearest neighbor search using hierarchical navigable small world graphs},
  author={Malkov, Yu A and Yashunin, Dmitry A},
  journal={IEEE transactions on pattern analysis and machine intelligence},
  volume={42},
  number={4},
  pages={824--836},
  year={2018},
  publisher={IEEE}
}

@article{douze2025faiss,
  title={The faiss library},
  author={Douze, Matthijs and Guzhva, Alexandr and Deng, Chengqi and Johnson, Jeff and Szilvasy, Gergely and Mazar{\'e}, Pierre-Emmanuel and Lomeli, Maria and Hosseini, Lucas and J{\'e}gou, Herv{\'e}},
  journal={IEEE Transactions on Big Data},
  year={2025},
  publisher={IEEE}
}

@article{andoni2015practical,
  title={Practical and optimal LSH for angular distance},
  author={Andoni, Alexandr and Indyk, Piotr and Laarhoven, Thijs and Razenshteyn, Ilya and Schmidt, Ludwig},
  journal={Advances in neural information processing systems},
  volume={28},
  year={2015}
}

@article{hofmann2008kernel,
  title={Kernel Methods in Machine Learning},
  author={Hofmann, Thomas and Sch{\"o}lkopf, Bernhard and Smola, Alexander J},
  journal={The Annals of Statistics},
  volume={36},
  number={3},
  pages={1171--1220},
  year={2008}
}

@article{horn1987closed,
  title={Closed-form solution of absolute orientation using unit quaternions},
  author={Horn, Berthold KP},
  journal={Journal of the optical society of America A},
  volume={4},
  number={4},
  pages={629--642},
  year={1987},
  publisher={Optical Society of America}
}

@article{arun1987least,
  title={Least-squares fitting of two 3-D point sets},
  author={Arun, K Somani and Huang, Thomas S and Blostein, Steven D},
  journal={IEEE Transactions on pattern analysis and machine intelligence},
  number={5},
  pages={698--700},
  year={1987},
  publisher={IEEE}
}

@inproceedings{faugeras1983,
  author = {O.D Faugeras and M. Hebert},
  title={A {3-D} recognition and positioning algorithm using geometrical matching between primitive surfaces},
  booktitle={Proceedings of the Eighth international joint conference on Artificial intelligence-Volume 2},
  pages={996--1002},
  year={1983}
}

@article{hanson1981analysis,
  title={Analysis of measurements based on the singular value decomposition},
  author={Hanson, Richard J and Norris, Michael J},
  journal={SIAM Journal on Scientific and Statistical Computing},
  volume={2},
  number={3},
  pages={363--373},
  year={1981},
  publisher={SIAM}
}

@article{matouvsek2008variants,
  title={On variants of the Johnson--Lindenstrauss lemma},
  author={Matou{\v{s}}ek, Ji{\v{r}}{\'\i}},
  journal={Random Structures \& Algorithms},
  volume={33},
  number={2},
  pages={142--156},
  year={2008},
  publisher={Wiley Online Library}
}

@article{kane2014sparser,
  title={Sparser johnson-lindenstrauss transforms},
  author={Kane, Daniel M and Nelson, Jelani},
  journal={Journal of the ACM (JACM)},
  volume={61},
  number={1},
  pages={1--23},
  year={2014},
  publisher={ACM New York, NY, USA}
}

@book{mokhtarian2013curvature,
  title={Curvature scale space representation: theory, applications, and MPEG-7 standardization},
  author={Mokhtarian, Farzin and Bober, Miroslaw},
  volume={25},
  year={2013},
  publisher={Springer Science \& Business Media}
}

@inproceedings{nasreddine2015shape,
  title={Shape-based fish recognition via shape space},
  author={Nasreddine, Kamal and Benzinou, Abdesslam},
  booktitle={2015 23rd European Signal Processing Conference (EUSIPCO)},
  pages={145--149},
  year={2015},
  organization={IEEE}
}

@article{arya2025sheaf,
  title={A sheaf-theoretic construction of shape space},
  author={Arya, Shreya and Curry, Justin and Mukherjee, Sayan},
  journal={Foundations of Computational Mathematics},
  volume={25},
  number={3},
  pages={813--863},
  year={2025},
  publisher={Springer}
}

@article{curry2022many,
  title={How many directions determine a shape and other sufficiency results for two topological transforms},
  author={Curry, Justin and Mukherjee, Sayan and Turner, Katharine},
  journal={Transactions of the American Mathematical Society, Series B},
  volume={9},
  number={32},
  pages={1006--1043},
  year={2022}
}

\newpage
\appendix



\section{More Formal Definitions}
\label{app:defn}

\begin{definition}[Discrete circle and modular arithmetic]\label{def:discrete_circle}
Fix an integer \(m \ge 1\).  
We represent the discrete circle by the index set \(\mathbb Z_m := \{0,1,\dots,n-1\}\) with modular addition
\[
a +_m b := (a+b) \bmod m.
\]
For each index \(k \in \mathbb Z_m\), define the discrete angle
\[
\theta_k := \frac{2\pi k}{m}.
\]
The additive inverse of \(\theta_k\) is
\[
-\theta_k := \theta_{-k}, \qquad \text{where } -k := (-k)\bmod m = 
\begin{cases}
0, & k=0,\\
m-k, & 1\le k\le m-1.
\end{cases}
\]
All index operations are taken modulo \(m\).
\end{definition}

\begin{definition}[Discrete signature]\label{def:discrete_signature}
Let \(f = (f_0, f_1, \dots, f_{m-1}) \in \mathbb R^m\) and let \(\Phi:\mathbb R\to\mathbb R\) be any fixed function (e.g., \(\Phi(x)=e^{-|x|}\) or \(\Phi(x)=e^{-x^2}\)).  
The \emph{discrete signature} of \(f\) is the function \(V_f:\mathbb Z_m \to \mathbb R\) defined by
\[
V_f[r] := \frac{1}{m}\sum_{j=0}^{m-1} \Phi\!\big(f_{j+_m r} - f_j\big), \qquad r \in \mathbb Z_m.
\]
\end{definition}

\begin{theorem}[Discrete rotation and shift invariance]\label{thm:rotation_shift_invariance}
Let \(f=(f_0,\dots,f_{m-1})\in\mathbb R^m\).  
For any rotation \(b\in\mathbb Z_m\) and constant \(c\in\mathbb R\), define
\[
g_j := f_{j+_m b} + c, \qquad j\in\mathbb Z_m.
\]
Then for all \(r\in\mathbb Z_m\),
\[
V_g[r] = V_f[r].
\]
\end{theorem}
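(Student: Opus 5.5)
The plan is to verify the identity directly from Definition~\ref{def:discrete_signature}, mirroring the continuous argument of Lemma~\ref{rot_shift_invariance} but with the integral replaced by a finite sum over $\mathbb Z_m$. Fix $r \in \mathbb Z_m$. Then
\[
V_g[r] = \frac{1}{m}\sum_{j=0}^{m-1}\Phi\bigl(g_{j+_m r} - g_j\bigr) = \frac{1}{m}\sum_{j=0}^{m-1}\Phi\bigl(f_{(j+_m r)+_m b} + c - f_{j+_m b} - c\bigr).
\]
The constants $c$ cancel inside the argument of $\Phi$. This is the shift-invariance half, and it requires no property of $\Phi$.

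For the rotation half, I will use associativity and commutativity of $+_m$ to rewrite $(j+_m r)+_m b = (j+_m b)+_m r$. Next I reindex with $i := j +_m b$. The key observation is that the map $j \mapsto j +_m b$ is a bijection of $\mathbb Z_m$ onto itself, with inverse $i \mapsto i +_m (-b)$ using the additive inverse from Definition~\ref{def:discrete_circle}. As $j$ ranges over $\mathbb Z_m$, $i$ therefore ranges over $\mathbb Z_m$ exactly once. The sum becomes $\frac{1}{m}\sum_{i\in\mathbb Z_m}\Phi(f_{i+_m r}-f_i) = V_f[r]$.

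The only step needing care, and the discrete analogue of the periodicity argument in the continuous proof, is justifying this reindexing. A finite sum is invariant under permuting its index set, and translation by $b$ is a permutation of the group $\mathbb Z_m$. I do not expect any real obstacle beyond stating this bijection explicitly. Since $r$ was arbitrary, the equality holds for every $r \in \mathbb Z_m$, which completes the argument.
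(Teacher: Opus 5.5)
Your proposal is correct and follows essentially the same route as the paper's proof: substitute the definition of $g$, cancel $c$, and reindex the sum via $i = j +_m b$. Your explicit remark that translation by $b$ is a bijection of $\mathbb Z_m$ is the justification the paper leaves implicit.
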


\begin{proof}
\[
\begin{aligned}
V_g[r]
&= \frac{1}{m}\sum_{j=0}^{m-1} \Phi\!\big(g_{j+_m r} - g_j\big)
 = \frac{1}{m}\sum_{j=0}^{m-1} \Phi\!\big(f_{j+_m r+_m b} + c - (f_{j+_m b} + c)\big)\\
&= \frac{1}{m}\sum_{j=0}^{m-1} \Phi\!\big(f_{j+_m r+_m b} - f_{j+_m b}\big)
 = \frac{1}{m}\sum_{i=0}^{m-1} \Phi\!\big(f_{i+_m r} - f_i\big) = V_f[r],
\end{aligned}
\]
where we reindexed \(i = j+_m b\).  
\end{proof}

\begin{theorem}[Discrete involution, rotation, and shift invariance]\label{thm:involution_invariance}
Let \(f\in\mathbb R^m\), and fix \(b\in\mathbb Z_m\), \(c\in\mathbb R\).  
Define
\[
g_j := -\,f_{-j+_m b} + c, \qquad j\in\mathbb Z_m.
\]
Then for all \(r\in\mathbb Z_m\),
\[
V_g[r] = V_f[r].
\]
\end{theorem}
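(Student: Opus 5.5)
The proof is a direct reindexing computation, parallel to the proof of Theorem~\ref{thm:rotation_shift_invariance} and the discrete analogue of Lemma~\ref{lem:involution_shift_invariance}. The plan is to expand $V_g[r]$ from Definition~\ref{def:discrete_signature}, cancel the constant $c$, and then reindex the sum by a bijection of $\mathbb Z_m$. That bijection is a reflection followed by a translation, rather than a pure translation.

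Concretely, for each $r\in\mathbb Z_m$ I would write
\[
V_g[r]=\frac1m\sum_{j=0}^{m-1}\Phi\bigl(g_{j+_m r}-g_j\bigr)
=\frac1m\sum_{j=0}^{m-1}\Phi\bigl(-f_{-(j+r)+_m b}+c+f_{-j+_m b}-c\bigr)
=\frac1m\sum_{j=0}^{m-1}\Phi\bigl(f_{b-j}-f_{b-j-r}\bigr),
\]
with all indices taken modulo $m$. Next I substitute $i:=b-j-r \pmod m$. The map $j\mapsto b-j-r$ is a bijection of $\mathbb Z_m$, being the composition of negation with a translation. Under it, $b-j = i+_m r$, so the summand becomes $\Phi(f_{i+_m r}-f_i)$. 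Since the sum runs over all residues, relabeling gives exactly $\frac1m\sum_{i}\Phi(f_{i+_m r}-f_i)=V_f[r]$.

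The only point needing care, which I regard as the main (though mild) obstacle, is bookkeeping of signs under modular arithmetic. One must check that the argument of $\Phi$ keeps its orientation, meaning it is ``later index minus earlier index'' with the same lag $r$, and not $f_i-f_{i+r}$. The latter would correspond to lag $-r$ and would require symmetry of $\Phi$, which is unavailable for $\Phi(z)=e^{-z}$. The choice $i=b-j-r$ rather than $i=b-j$ is made precisely so that the lag orientation is preserved: the two sign flips, one from negating $f$ and one from reflecting the index, cancel. I would also note explicitly that $-(j+_m r)+_m b$ equals $(b-j-r)\bmod m$, so that the reindexing is well defined on $\mathbb Z_m$ and no boundary terms arise.
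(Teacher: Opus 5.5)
Your proposal is correct and follows essentially the same route as the paper: expand $V_g[r]$, cancel $c$, and reindex the sum by the reflection-plus-translation bijection of $\mathbb Z_m$. The paper reindexes in two steps ($i=-j+_m b$, then a shift by $r$), and its displayed proof writes the summand as $\Phi(g_j-g_{j+_m r})$ rather than in the orientation of Definition~\ref{def:discrete_signature}; your single substitution $i=b-j-r$ stays consistent with the stated definition throughout, which is slightly cleaner.
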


\begin{proof}
\[
\begin{aligned}
V_g[r]
&= \frac{1}{m}\sum_{j=0}^{m-1}\Phi\!\big(g_j - g_{j+_m r}\big)
 = \frac{1}{m}\sum_{j=0}^{m-1}\Phi\!\big(-f_{-j+_m b}+c - (-f_{-(j+_m r)+_m b}+c)\big)\\
&= \frac{1}{m}\sum_{j=0}^{m-1}\Phi\!\big(f_{-j-_m r+_m b} - f_{-j+_m b}\big)
 = \frac{1}{m}\sum_{i=0}^{m-1}\Phi\!\big(f_{i-_m r} - f_i\big)
 = \frac{1}{m}\sum_{i=0}^{m-1}\Phi\!\big(f_i - f_{i+_m r}\big)
 = V_f[r],
\end{aligned}
\]
where \(i=-j+_m b\) and indices are taken modulo \(m\).
\end{proof}

\begin{definition}[Discrete Invariant Signature Kernel]
\label{def:discrete_invariant_kernel}
Let $f,g\in\mathbb R^m$ be two discrete functions on the circle $\mathbb Z_m$, 
and let $V_f,S_g:\mathbb Z_m\to\mathbb R$ denote their discrete signature functions defined by
\[
V_f(r) := \frac{1}{m}\sum_{j=0}^{m-1}\Phi\bigl(f_j - f_{j+_m r}\bigr),
\qquad
V_g(r) := \frac{1}{m}\sum_{j=0}^{m-1}\Phi\bigl(g_j - g_{j+_m r}\bigr).
\]
The \emph{Discrete Invariant Signature Kernel} between $f$ and $g$ is defined as the discrete inner product
\[
k(f,g)
\;:=\;
\langle V_f, V_g\rangle_{\ell^2(\mathbb Z_m)}
\;=\;
\frac{1}{m}\sum_{r=0}^{m-1} V_f(r)\,\overline{V_g(r)}.
\]
This kernel quantifies the total similarity between $f$ and $g$ across all discrete shifts on the circle $\mathbb Z_m$.
\end{definition}

\begin{remark}[Invariance]
If $g_j = f_{j+_m b}+c$ or $g_j = -\,f_{-j+_m b}+c$ for some $b\in\mathbb Z_m$ and $c\in\mathbb R$, 
then $V_g(r)=V_f(r)$ for all $r\in\mathbb Z_m$.  
Consequently, the discrete invariant signature kernel is unchanged:
\[
k(f,g) = k(g,f) = k(f,f) = k(g,g).
\]
\end{remark}


\section{Invariance of Permutation Functions}
\label{app:permutations}

\subsection{Lag-Homomorphism, and other Definitions}
\label{app:lag-homomorph}

\begin{definition}
    Consider the additive group $\cms$. 
    Every function $f \in \ms$ can be represented by the sequence 
    \(
        f = (f(0), f(1), \ldots, f(m-1)).
    \)
    By periodicity, $f(m) = f(0)$ and more generally $f(-j) = f(m-j)$ for all integers $j$.
\end{definition}

\begin{definition}
    For $0 \le x \le m-1$, the right and left rotations of $f$ by $x$ are
    \[
        \Rot_x^r(f)(i)=f(i-x), \qquad 
        \Rot_x^l(f)(i)=f(i+x).
    \]
    These definitions extend naturally to any integer $x$ modulo $m$.
\end{definition}

\begin{definition}
    For $f \in \ms$ and a constant $c \in \mathbb{R}$, the \emph{Reverse-of-Complement} of $f$ with respect to $c$ is defined by
    \(
        \RoC_c(f)(i) = c - f(-i).
    \)
    When $c = 0$, we simply write $\RoC(f)$.
\end{definition}
\begin{definition}
For each shift $k \in \{0,\dots,m-1\}$ and any $f \in \ms$, define the multiset of discrete differences
\[
    \Delta_f(k) := \{\, f(j+k) - f(j) : j = 0,\dots,m-1 \,\}.
\]
If $f, g \in \ms$ satisfy $\Delta_f(k) = \Delta_g(k)$ for all $k\in\cms$, then we say that $f$ and $g$ are \emph{lag-homometric} sequences. 
\end{definition}

\begin{theorem}\label{thm:roc_preserves_lag_homometry}
Let $f,g \in \ms$ be lag-homometric. Then for any $c \in \mathbb{R}$,
the sequences $f$ and $\RoC_c(g)$ are lag-homometric.
\end{theorem}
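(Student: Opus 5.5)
Since lag-homometry is defined by equality of multisets and multiset equality is transitive, the plan is to show that $\RoC_c$ leaves every multiset $\Delta_g(k)$ unchanged. That is, we prove $\Delta_{\RoC_c(g)}(k)=\Delta_g(k)$ for all $k\in\cms$. Combining this with $\Delta_f(k)=\Delta_g(k)$ then gives $\Delta_f(k)=\Delta_{\RoC_c(g)}(k)$ for every $k$, which is the claim.

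To compute the differences, write $h=\RoC_c(g)$, so $h(i)=c-g(-i)$. For each $j\in\cms$,
\[
h(j+k)-h(j) = \bigl(c-g(-j-k)\bigr)-\bigl(c-g(-j)\bigr) = g(-j)-g(-j-k).
\]
The constant $c$ cancels, which is why the statement holds for every $c$.

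Next we reindex with $i=-j-k \pmod m$. As $j$ runs over $\cms$, so does $i$, since $j\mapsto -j-k$ is a bijection of $\cms$. Under this substitution $-j=i+k$, so the difference becomes $g(i+k)-g(i)$. Hence the multiset $\{h(j+k)-h(j): j\in\cms\}$ equals $\{g(i+k)-g(i): i\in\cms\}=\Delta_g(k)$, with multiplicities preserved because the reindexing is a bijection.

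The only point needing care is confirming that the reindexing is a genuine bijection modulo $m$, so multiplicities are kept. This is immediate because negation and translation are both automorphisms of $\cms$. One can also note that this is the discrete multiset analogue of Theorem~\ref{thm:involution_invariance}, which showed the same invariance at the level of $V$. The present argument is stronger in that it works with the multisets themselves and needs neither $\Phi$ nor Lindemann--Weierstrass.
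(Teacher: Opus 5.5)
Your proof is correct and matches the paper's argument essentially line for line. You compute $\RoC_c(g)(j+k)-\RoC_c(g)(j)=g(-j)-g(-j-k)$, reindex by the bijection $i=-j-k$ of $\cms$ to obtain $\Delta_{\RoC_c(g)}(k)=\Delta_g(k)$, and conclude by transitivity with $\Delta_f(k)=\Delta_g(k)$.
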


\begin{proof}
Fix $k \in \cms$.  
Since $\RoC_c(g)(j) = c - g(-j)$, we compute
\[
    \RoC_c(g)(j+k)-\RoC_c(g)(j)
    = \big(c - g(-j-k)\big) - \big(c - g(-j)\big)
    = g(-j) - g(-j-k).
\]
Setting $i = -j-k$ (a permutation of indices modulo $n$) gives
\begin{align*}
    &g(-j) - g(-j-k) = g(i+k) - g(i),\\ &\Longrightarrow
    \Delta_{\RoC_c(g)}(k)
    = \{\, \RoC_c(g)(j+k)-\RoC_c(g)(j) : j \,\}
    = \{\, g(i+k)-g(i) : i \,\}
    = \Delta_g(k).
\end{align*}
Since $f$ and $g$ are lag-homometric, $\Delta_f(k)=\Delta_g(k)$ for all $k$, and therefore
$\Delta_f(k)=\Delta_{\RoC_c(g)}(k)$ for all $k$.  
Thus $f$ and $\RoC_c(g)$ are lag-homometric.
\end{proof}

\begin{theorem}\label{thm:rotation_preserves_lag_homometry}
Let $f,g \in \ms$ be lag-homometric. Then for any $x \in \mathbb{Z}$,
the sequences $f$ and $\Rot_x(g)$ are lag-homometric.
\end{theorem}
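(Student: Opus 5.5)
The plan is to show directly that rotation leaves each difference multiset $\Delta_g(k)$ unchanged. Transitivity with the hypothesis $\Delta_f(k)=\Delta_g(k)$ then finishes the argument. This mirrors the proof of Theorem \ref{thm:roc_preserves_lag_homometry}, with the reflection replaced by a translation of indices.

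First fix a shift $k \in \cms$ and a rotation amount $x \in \mathbb{Z}$. It suffices to treat one direction, say the left rotation $\Rot_x^l(g)(i) = g(i+x)$. The right rotation satisfies $\Rot_x^r = \Rot_{-x}^l$, and the definitions extend to all integers modulo $m$, so that case is covered as well. For each $j \in \cms$ we compute
\[
\Rot_x^l(g)(j+k) - \Rot_x^l(g)(j) = g(j+x+k) - g(j+x).
\]
Next substitute $i = j + x \pmod m$. As $j$ runs over $\cms$, the map $j \mapsto j+x$ is a bijection of $\cms$, so $i$ also runs over all of $\cms$ exactly once. Because each index is hit exactly once, multiplicities are preserved, and we get the multiset identity
\[
\Delta_{\Rot_x^l(g)}(k) = \{\, g(i+k) - g(i) : i \in \cms \,\} = \Delta_g(k).
\]

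Finally, the hypothesis gives $\Delta_g(k) = \Delta_f(k)$ for every $k$. Hence $\Delta_{\Rot_x(g)}(k) = \Delta_f(k)$ for all $k \in \cms$, which says exactly that $f$ and $\Rot_x(g)$ are lag-homometric.

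There is no real obstacle here. The only point needing care is that the identity is an equality of multisets, not just of sets, and this holds because reindexing by translation modulo $m$ is a bijection. The periodic convention $g(m+i)=g(i)$ ensures every index expression is well defined.
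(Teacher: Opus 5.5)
Your proposal is correct and follows the paper's proof: both reindex by a translation modulo $m$ to get $\Delta_{\Rot_x(g)}(k)=\Delta_g(k)$, then conclude from $\Delta_g(k)=\Delta_f(k)$. The only difference is cosmetic. The paper works directly with the right rotation $\Rot_x^r$ and $i=j-x$, while you use the left rotation and note that $\Rot_x^r=\Rot_{-x}^l$.
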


\begin{proof}
Fix $k \in \cms$.  
Since $\Rot_x^r(g)(j) = g(j-x)$, we compute
\[
    \Rot_x^r(g)(j+k) - \Rot_x^r(g)(j)
    = g(j+k-x) - g(j-x).
\]
Setting $i = j-x$ (again a permutation of indices modulo $n$) yields
\begin{align*}
    &g(j+k-x) - g(j-x) = g(i+k) - g(i),\\ \Longrightarrow
    &\Delta_{\Rot_x(g)}(k)
    = \{\,\Rot_x^r(g)(j+k)-\Rot_x^r(g)(j) : j \,\}
    = \{\, g(i+k)-g(i) : i \,\}
    = \Delta_g(k).
\end{align*}
Since $f$ and $g$ are lag-homometric, $\Delta_f(k)=\Delta_g(k)$ for all $k$, and therefore
$\Delta_f(k)=\Delta_{\Rot_x(g)}(k)$ for all $k$.  
Thus $f$ and $\Rot_x^r(g)$ are lag-homometric.
\end{proof}

\begin{definition}
    For $f \in \ms$ we define its equivalence class
    \[
        [f] \;:=\; 
        \{\, g \in \ms\;|\; 
        g = \Rot_m^r(\RoC_c(f)) \ \text{or}\ g = \Rot_m^r(f)
        \text{ for some } c \in \mathbb{R},\ m \in \cms \,\}.
    \]
\end{definition}

\begin{definition}
    Let $f \in \ms$ and let $\Phi$ be a fixed feature function.
    The \emph{discrete signature} of $f$ is the function $V_f \in \ms$ defined by
    \[
        V_f(k)
        = \frac{2\pi}{m} \sum_{j=0}^{m-1} \Phi\!\big( f(j) - f(j+k) \big),
        \qquad k \in \cms.
    \]
\end{definition}

\begin{remark}
    Let $\phi : [0,2\pi] \to \mathbb{R}$ be a function that is constant on each interval
    \[
        \Big[\,\tfrac{i\,2\pi}{m},\ \tfrac{(i+1)\,2\pi}{m}\,\Big],
        \qquad i \in \cms,
    \]
    and denote this constant value by $t_i$. Then
    \[
        \int_0^{2\pi} \phi(t)\, dt
        = \sum_{i=0}^{m-1} \int_{\frac{i2\pi}{m}}^{\frac{(i+1)2\pi}{m}} t_i\, dt
        = \sum_{i=0}^{m-1} t_i \,\frac{2\pi}{m}
        = \frac{2\pi}{m} \sum_{i=0}^{m-1} t_i.
    \]
    Hence the integral of such a uniformly spaced step function is exactly proportional to
    the sum of its step values. This shows that our discrete signature coincides with the
    continuous signature on the space of these step functions.
\end{remark}

\begin{theorem}
    If $f, g \in \ms$ are lag-homometric (in particular, if $g \in [f]$), then
    \[
        V_f(k) = V_g(k) \qquad\text{for all } k \in \cms.
    \]
\end{theorem}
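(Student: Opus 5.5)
The plan is to show that the signature $V_f(k)$ depends on $f$ only through the multiset of lag differences at offset $k$. Once that is in place, lag-homometry gives the conclusion directly. The appendix definitions differ by a sign: $\Delta_f(k)$ is built from $f(j+k)-f(j)$, while the signature uses $\Phi(f(j)-f(j+k))$. We therefore first rewrite, for each fixed $k \in \cms$,
\[
V_f(k) = \frac{2\pi}{m}\sum_{j=0}^{m-1}\Phi\bigl(-(f(j+k)-f(j))\bigr) = \frac{2\pi}{m}\sum_{z\in\Delta_f(k)}\Phi(-z),
\]
where the sum over the multiset counts each element with its multiplicity. This works because the map $j\mapsto f(j+k)-f(j)$, for $j=0,\dots,m-1$, enumerates $\Delta_f(k)$ exactly with multiplicity.

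Next, assume $f,g$ are lag-homometric, so $\Delta_f(k)=\Delta_g(k)$ as multisets for every $k$. Then the two sums $\sum_{z\in\Delta_f(k)}\Phi(-z)$ and $\sum_{z\in\Delta_g(k)}\Phi(-z)$ agree term by term after any bijection matching equal elements. Hence $V_f(k)=V_g(k)$ for every $k$. Note that this step uses nothing about $\Phi$, so it holds for any fixed feature function, not only $\exp(-z)$.

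Finally, the parenthetical case $g\in[f]$ follows by reducing it to lag-homometry. Lag-homometry is reflexive, so $f$ is lag-homometric to itself. Applying Theorem~\ref{thm:rotation_preserves_lag_homometry} then shows $f$ and $\Rot_x^r(f)$ are lag-homometric. Applying Theorem~\ref{thm:roc_preserves_lag_homometry} and then Theorem~\ref{thm:rotation_preserves_lag_homometry} once more covers $\Rot_x^r(\RoC_c(f))$. The only point needing care is that the cited theorems are stated for $f$ versus a transformed $g$ with $f,g$ lag-homometric, so we instantiate them with $g=f$ and, for the composite, with $g=\RoC_c(f)$. Otherwise the argument is a routine reindexing, and I do not expect a genuine obstacle beyond getting the sign convention between $\Delta_f(k)$ and the argument of $\Phi$ right.
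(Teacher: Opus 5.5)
Your proof is correct and follows the paper's argument: rewrite $V_f(k)$ as a sum of $\Phi$ over the multiset $\Delta_f(k)$, conclude from multiset equality, and reduce the case $g\in[f]$ to lag-homometry using the rotation- and $\RoC$-preservation theorems. Your $\Phi(-z)$ sign bookkeeping is more careful than the paper's display, which silently replaces $\Phi(f(j)-f(j+k))$ by $\Phi(f(j+k)-f(j))$; that substitution is invalid for non-even $\Phi$ such as $e^{-z}$, although it does not affect the conclusion.
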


\begin{proof}
    For each $k \in \cms$ we may rewrite
    \[
        V_f(k)
        = \frac{2\pi}{m} \sum_{j=0}^{m-1} \Phi\!\big(f(j+k)-f(j)\big)
        = \frac{2\pi}{m} \sum_{x \in \Delta_f(k)} \Phi(x),
    \]
    since the sum enumerates precisely the elements of the multiset $\Delta_f(k)$.
Because $f$ and $g$ are lag-homometric, Theorems~\ref{thm:roc_preserves_lag_homometry}
    and \ref{thm:rotation_preserves_lag_homometry} together ensure that
    \(
        \Delta_f(k) = \Delta_g(k) \text{ for all } k \in \cms.
    \)
    Therefore,
    \[
        V_g(k)
        = \frac{2\pi}{m} \sum_{x \in \Delta_g(k)} \Phi(x)
        = \frac{2\pi}{m} \sum_{x \in \Delta_f(k)} \Phi(x)
        = V_f(k).
    \]
\end{proof}

\begin{lemma}[Injectivity of Integer Exponential Sums]\label{lem:expo_multisets}
Let $m \geq 1$ and let $\lambda \ne 0$ be an algebraic real number. 
Fix an integer $p \ge 0$.  
Suppose 
\(
V=\{v_0, v_1, \ldots, v_{m-1}\}, 
\  
W=\{w_0, w_1, \ldots, w_{m-1}\}
\)
are two multisets of rational numbers satisfying
\[
\sum_{j=0}^{m-1} e^{\lambda\, v_j^{\,2p+1}}
    = \sum_{j=0}^{m-1} e^{\lambda\, w_j^{\,2p+1}}.
\]
Then \(V = W\) as multisets.
\end{lemma}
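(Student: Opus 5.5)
The plan is to reduce the statement to the Lindemann--Weierstrass theorem, in the form already used in Lemma~\ref{lem:exp-iso}: if $\beta_1,\dots,\beta_n$ are distinct algebraic numbers, then $e^{\beta_1},\dots,e^{\beta_n}$ are linearly independent over the algebraic numbers, and in particular over $\mathbb{Q}$.

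First, collapse the two sums by value. Let $U$ be the finite set of values occurring in $V\cup W$. For each $u\in U$ put
\[
\delta_u = |\{j : v_j = u\}| - |\{j : w_j = u\}| \in \mathbb{Z}.
\]
The hypothesis then becomes
\[
\sum_{u\in U}\delta_u\, e^{\lambda u^{2p+1}} = 0 .
\]

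Second, check that the exponents are distinct algebraic numbers. Each $u$ is rational, so $u^{2p+1}$ is rational, and $\lambda u^{2p+1}$ is algebraic because $\lambda$ is algebraic. The map $u\mapsto u^{2p+1}$ is strictly increasing on $\mathbb{R}$, since the exponent is odd; this is exactly why the statement uses $2p+1$ rather than a general power. Because $\lambda\neq 0$, multiplying by $\lambda$ is also injective. Hence distinct $u\in U$ give distinct exponents $\lambda u^{2p+1}$.

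Third, apply Lindemann--Weierstrass to the distinct algebraic numbers $\{\lambda u^{2p+1}\}_{u\in U}$. The relation above has integer coefficients $\delta_u$, so linear independence over $\mathbb{Q}$ forces $\delta_u=0$ for every $u$. Every value therefore has the same multiplicity in $V$ as in $W$, so $V=W$ as multisets. Note that $|V|=|W|=m$ is not actually needed beyond the sums being finite.

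The only real obstacle is invoking the correct form of the transcendence theorem. The Lindemann--Weierstrass statement cited for Lemma~\ref{lem:exp-iso} needs distinct algebraic exponents, which is why the injectivity of $u\mapsto\lambda u^{2p+1}$ is checked explicitly. Beyond that, the argument is bookkeeping of multiplicities, identical in structure to the proof of Lemma~\ref{lem:exp-iso}.
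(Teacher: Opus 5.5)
Your proposal is correct and follows the paper's proof essentially step for step: both collapse the sums into integer multiplicity differences $\delta_u$ (the paper's $d_k$) and apply Lindemann--Weierstrass to the exponents $\lambda u^{2p+1}$ to force every $\delta_u = 0$. Your explicit check that $u \mapsto \lambda u^{2p+1}$ is injective (odd power, $\lambda \neq 0$) supplies the distinctness of the exponents, which the paper only asserts, and your remark that $|V| = |W|$ is not actually needed is also correct.
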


\begin{proof}
For each $k \in \mathbb{Q}$ define
\[
m_k = \big|\{\, i \mid v_i = k \,\}\big|,
\qquad
n_k = \big|\{\, i \mid w_i = k \,\}\big|.
\]
Then $m_k, n_k \in \mathbb{Z}_{\ge 0}$, and only finitely many of them are nonzero.

In terms of these multiplicities, the assumptions become
\[
\sum_{k \in \mathbb{Q}} m_k = n = \sum_{k \in \mathbb{Q}} n_k,
\qquad
\sum_{k \in \mathbb{Q}} m_k\, e^{\lambda k^{\,2p+1}}
    = \sum_{k \in \mathbb{Q}} n_k\, e^{\lambda k^{\,2p+1}}.
\]

Define $d_k := m_k - n_k$ for each $k \in \mathbb{Q}$. Then $d_k \in \mathbb{Z}$, and
\[
\sum_{k \in \mathbb{Q}} d_k\, e^{\lambda k^{\,2p+1}} = 0.
\]

Let
\[
S = \{\, k \in \mathbb{Q} : d_k \neq 0 \,\}
\]
be the support of $\{d_k\}$.  
If $S = \emptyset$, we are done.  
Assume $S \neq \emptyset$.  
Then
\[
\sum_{k \in S} d_k\, e^{\lambda k^{\,2p+1}} = 0.
\]

Now invoke the classical Lindemann--Weierstrass theorem 
(\cite[Theorem~1.4]{baker1975}), which asserts that if 
$\alpha_1,\ldots,\alpha_m$ are distinct algebraic numbers, then
$e^{\alpha_1},\ldots,e^{\alpha_m}$ are linearly independent, 
in particular linearly independent over~$\mathbb{Q}$.

Apply this theorem to the distinct algebraic numbers 
\[
\alpha_i = \lambda\, k_i^{\,2p+1},
\qquad S = \{k_1,\ldots,k_m\}.
\]
Then
\[
\sum_{i=1}^m d_{k_i}\, e^{\lambda k_i^{\,2p+1}} = 0,
\qquad d_{k_i} \in \mathbb{Q},
\]
implies $d_{k_i} = 0$ for all $i$ by linear independence.  
This contradicts the definition of $S$.  
Hence $S = \emptyset$.

Therefore $d_k = 0$ for all $k \in \mathbb{Q}$, so $m_k = n_k$ for all $k$.  
Thus $V$ and $W$ have identical multiplicities, and therefore
\(
V = W
\)
as multisets.
\end{proof}

\begin{corollary}
    If $f, g\in\ms$ and $V_f = V_g$ where $\Phi(x) = e^{\lambda x^{2p+1}}$ for a fix $ p\in\mathbb{Z}$, then they are lag-homometric.
\end{corollary}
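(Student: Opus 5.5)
We need to prove the corollary: if $V_f = V_g$ with $\Phi(x)=e^{\lambda x^{2p+1}}$, then $f,g$ are lag-homometric. Here $f,g\in\ms$ take values in the rationals, or algebraic numbers, as Lemma~\ref{lem:expo_multisets} requires.

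The plan is to reduce the statement coordinatewise to Lemma~\ref{lem:expo_multisets}. Fix a lag $k\in\cms$. By the definition of the discrete signature,
\[
V_f(k)=\frac{2\pi}{m}\sum_{j=0}^{m-1}\Phi\bigl(f(j)-f(j+k)\bigr).
\]
The sum runs over the multiset $\{f(j)-f(j+k)\}_j$, which has exactly $m$ elements counted with multiplicity. Reindexing with $j\mapsto j-k$ modulo $m$ turns this into the multiset $\{f(j-k)-f(j)\}_j$. Under the appendix convention $\Delta_f(k)=\{f(j+k)-f(j)\}$, this is the multiset $-\Delta_f(m-k)$, that is, the negatives of the elements of $\Delta_f(-k)$.

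So I would first record the identity
\[
V_f(k)=\frac{2\pi}{m}\sum_{x\in D_f(k)}e^{\lambda x^{2p+1}},
\qquad D_f(k):=\{f(j)-f(j+k):j\in\cms\}.
\]
The same identity holds for $g$. Cancelling the nonzero factor $2\pi/m$, the hypothesis $V_f(k)=V_g(k)$ says that the two exponential sums over the $m$-element multisets $D_f(k)$ and $D_g(k)$ agree. Differences of rational (respectively algebraic) values are again rational (respectively algebraic). Hence Lemma~\ref{lem:expo_multisets}, or Lemma~\ref{lem:exp-iso} in the algebraic setting, applies directly and gives $D_f(k)=D_g(k)$ as multisets.

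Since $D_f(k)=-\Delta_f(-k)$, negating every element yields $\Delta_f(-k)=\Delta_g(-k)$. As $k$ ranges over all of $\cms$, so does $-k$. Therefore $\Delta_f(k)=\Delta_g(k)$ for every $k$, which is exactly lag-homometry.

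The only real obstacle is the hypothesis of the number-theoretic lemma: the exponents $\lambda x^{2p+1}$ must be distinct algebraic numbers whenever the values $x$ are distinct. This holds because $x\mapsto x^{2p+1}$ is strictly increasing on $\mathbb{R}$, hence injective, and $\lambda\neq 0$ is algebraic. The lemma's proof, via Lindemann--Weierstrass, then forces all multiplicity differences to vanish. I would also state explicitly the standing assumption that $f,g$ take algebraic or rational values, since without it Lemma~\ref{lem:expo_multisets} cannot be invoked.
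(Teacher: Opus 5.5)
Your argument is correct and is the paper's proof: fix $k$, cancel $2\pi/m$, read $V_f(k)$ as an exponential sum over the $m$-element multiset of lag-$k$ differences, and apply Lemma~\ref{lem:expo_multisets}. You are more careful than the paper, which silently identifies $\{f(j)-f(j+k)\}_j$ with the appendix's $\Delta_f(k)=\{f(j+k)-f(j)\}_j$ and leaves the rational/algebraic-values hypothesis unstated. One slip: $\{f(j-k)-f(j)\}_j$ is $\Delta_f(-k)$ itself (equivalently $-\Delta_f(k)$), not $-\Delta_f(-k)$. It is harmless: the same relabeling applies to both $f$ and $g$, so equality of these multisets for all $k$ still gives $\Delta_f(k)=\Delta_g(k)$ for every $k\in\cms$.
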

\begin{proof}
Fix $k\in\cms$. From $V_f(k) = V_g(k)$ we have
\begin{align*}
&\frac{2\pi}{m} \sum_{j=0}^{m-1} \exp \lambda(f(j) - f(j+k))^{2p+1} = \frac{2\pi}{m} \sum_{j=0}^{m-1} \exp \lambda(g(j) - g(j+k))^{2p+1} \\
&\Longrightarrow
 \sum_{j=0}^{m-1} \exp \lambda(f(j)-  f(j+k))^{2p+1} = \sum_{j=0}^{m-1} \exp \lambda(g(j)-g(j+k))^{2p+1}\\
&\Longrightarrow
 \sum_{x\in\Delta_f(k)} \exp \lambda x^{2p+1} = \sum_{x\in\Delta_g(k)} \exp \lambda x^{2p+1}
\end{align*}
    By Lemma \ref{lem:expo_multisets}, we see that $\Delta_f(k) = \Delta_g(k)$
\end{proof}

\subsection{Uniqueness for Permutation Functions in $\F^m_\sigma$}
\label{app:permutations-inv}

\begin{definition}
    For $f\in \F^m$, the position function $\pi_f$ is defined by $\pi_f(f(i)) = i \bmod m$ for all $i\in\cms$.
\end{definition}

\begin{theorem}\label{thm:range_preserved}
Let $f,g \in \ms$.  
If $f$ and $g$ are lag-homometric and
\[
    \beta := \max_j f(j),\qquad 
    \alpha := \min_j f(j),\qquad
    \beta' := \max_j g(j),\qquad
    \alpha' := \min_j g(j),
\]
then $\beta-\alpha = \beta'-\alpha'$.  
Moreover, if $f$ and $g$ are injective, then
\[
    \pi_f(\beta) - \pi_f(\alpha)
    = 
    \pi_g(\beta') - \pi_g(\alpha')
    \pmod m.
\]
\end{theorem}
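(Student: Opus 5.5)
The plan is to read off the range $\beta-\alpha$ as the extreme element of the union of all the difference multisets. Then, for injective sequences, we locate the unique pair of positions that realizes it.

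\textbf{First claim.} Let $D_f := \bigcup_{k\in\cms}\Delta_f(k)$, the union of all lag difference multisets; define $D_g$ likewise. Every element of $D_f$ has the form $f(j+k)-f(j)$, so it lies in $[\alpha-\beta,\beta-\alpha]$. Taking $j=\pi_f(\alpha)$ and $k=\pi_f(\beta)-\pi_f(\alpha) \bmod m$ shows that $\beta-\alpha$ is attained. Hence $\max D_f=\beta-\alpha$, and similarly $\max D_g=\beta'-\alpha'$. Lag-homometry gives $\Delta_f(k)=\Delta_g(k)$ for every $k$, so $D_f=D_g$ as multisets and the maxima coincide. (If $\beta=\alpha$ the sequence is constant, and the claim holds trivially by the same argument.)

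\textbf{Second claim.} Assume $f,g$ are injective and $m\ge 2$, so that $\beta>\alpha$; for $m=1$ everything is $0$. Set $k^\ast := \pi_f(\beta)-\pi_f(\alpha) \bmod m$, so that $\beta-\alpha\in\Delta_f(k^\ast)=\Delta_g(k^\ast)$. Then there is some $j$ with
\[
g(j+k^\ast)-g(j)=\beta-\alpha=\beta'-\alpha'.
\]
Since $g(j+k^\ast)\le\beta'$ and $g(j)\ge\alpha'$, equality forces $g(j+k^\ast)=\beta'$ and $g(j)=\alpha'$. Injectivity of $g$ then pins down the positions: $j=\pi_g(\alpha')$ and $j+k^\ast \equiv \pi_g(\beta') \pmod m$. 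Subtracting gives $\pi_g(\beta')-\pi_g(\alpha') \equiv k^\ast = \pi_f(\beta)-\pi_f(\alpha) \pmod m$, which is the claim.

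The only delicate point is the second claim's use of the lag index. We must work with the specific multiset $\Delta_f(k^\ast)$, not just the union, so that the lag is transported to $g$. We also need the sign convention $\Delta_f(k)=\{f(j+k)-f(j)\}$ from the appendix definition, so that the positive extreme difference is ordered "max minus min" in both sequences. With the opposite convention (as in the main text) the same argument works using the minimum $\alpha-\beta$ instead.
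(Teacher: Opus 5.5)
Your proof is correct and follows essentially the same route as the paper. You identify $\beta-\alpha$ as the maximum over all lag multisets, transport the lag $k^\ast=\pi_f(\beta)-\pi_f(\alpha)$ to $g$ via $\Delta_f(k^\ast)=\Delta_g(k^\ast)$, and use injectivity to pin down the positions; you spell out explicitly why the extreme value in $\Delta_g(k^\ast)$ must come from the pair $(\pi_g(\alpha'),\pi_g(\beta'))$, which the paper only asserts.
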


\begin{proof}
Since all indices are taken modulo $m$, every ordered pair $(p,q)$ appears as a
lagged difference: for any $p,q$ there exists $k$ with $q = p+k \pmod m$, and hence
\[
    f(q)-f(p) \in \Delta_f(k).
\]
Therefore,
\[
    \beta-\alpha
    = \max\{ d : d \in \Delta_f(k)\ \text{for some } k \}.
\]
Because $f$ and $g$ are lag-homometric, we have $\Delta_f(k)=\Delta_g(k)$ for all $k$, so the same maximal value appears in $\bigcup_k \Delta_g(k)$. Hence
\[
    \beta-\alpha = \beta'-\alpha'.
\]

For the positional statement, assume $f$ and $g$ are injective.  
Then the values $\beta,\alpha$ occur at unique indices $i,j$ with
\[
    f(i)=\beta,\qquad f(j)=\alpha,
\]
and similarly $\beta',\alpha'$ occur at unique indices $i',j'$ with
\[
    g(i')=\beta',\qquad g(j')=\alpha'.
\]

Since $f(i)-f(j)=\beta-\alpha$ is the maximal difference, this value occurs in
\(\Delta_f(k)\) only for the unique shift
\[
    k = i-j \pmod m.
\]
Similarly, the maximal difference $\beta'-\alpha'$ occurs in \(\Delta_g(k')\) only for
\[
    k' = i'-j' \pmod m.
\]
Because the two functions are lag-homometric, the location (shift) of the maximal difference must be the same; hence
\[
    k = k' \pmod m.
\]
Substituting the definitions of $k$ and $k'$ gives
\[
    i - j = i' - j' \pmod m,
\]
which in terms of position functions is exactly
\[
    \pi_f(\beta) - \pi_f(\alpha)
    =
    \pi_g(\beta') - \pi_g(\alpha')
    \pmod m.
\]

This completes the proof.
\end{proof}

\begin{corollary}[Anchoring \(1\) and \(m\) by rotation]\label{cor:anchor_1_m}
Let \(f,g\) be permutations of \(\{1,\dots,m\}\). If $f$ and $g$ are lag-homometric, then $\pi_f(m)- \pi_f(1) = \pi_g(m)- \pi_g(1) \bmod m$.
\end{corollary}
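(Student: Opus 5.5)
This corollary is a direct specialization of Theorem~\ref{thm:range_preserved}. The plan is to check that a permutation of $\{1,\dots,m\}$ meets that theorem's hypotheses and then identify its extreme values.

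First, any permutation $f$ of $\{1,\dots,m\}$ is injective as a function on $\cms$, since distinct indices receive distinct values. Its maximum is $\beta=m$ and its minimum is $\alpha=1$. The same holds for $g$, so $\beta'=m$ and $\alpha'=1$. The range equality $\beta-\alpha=\beta'-\alpha'=m-1$ is then automatic and needs no argument.

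Second, because $f$ and $g$ are injective and lag-homometric, the positional part of Theorem~\ref{thm:range_preserved} applies. It gives $\pi_f(\beta)-\pi_f(\alpha)\equiv\pi_g(\beta')-\pi_g(\alpha') \pmod m$. Substituting $\beta=\beta'=m$ and $\alpha=\alpha'=1$ yields $\pi_f(m)-\pi_f(1)\equiv\pi_g(m)-\pi_g(1)\pmod m$, which is the claim.

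The only point needing care is already handled inside Theorem~\ref{thm:range_preserved}. The maximal difference $m-1$ is attained by the unique ordered pair (position of $1$, position of $m$). It therefore lies in exactly one lag multiset $\Delta_f(k)$. Since $\Delta_f(k)=\Delta_g(k)$ for every $k$, the same lag $k$ must carry it for $g$.

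I would also note the rotation interpretation suggested by the corollary's title. Applying $\Rot$ (Theorem~\ref{thm:rotation_preserves_lag_homometry}) preserves lag-homometry, so we may rotate $g$ so that $\pi_g(1)=\pi_f(1)$. The corollary then forces $\pi_g(m)=\pi_f(m)$, anchoring both extreme values at the same positions.
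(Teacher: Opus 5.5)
Your proposal is correct and matches the paper's proof: it also specializes Theorem~\ref{thm:range_preserved} to $\beta=\beta'=m$ and $\alpha=\alpha'=1$, and injectivity of permutations supplies the hypothesis needed for the positional part. Your closing remark about rotating $g$ so that $\pi_g(1)=\pi_f(1)$ correctly explains the corollary's title, but the proof does not need it.
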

\begin{proof}
Apply Theorem~\ref{thm:range_preserved} with  
\(\beta=\beta'=m\) and \(\alpha=\alpha'=1\).
\end{proof}

\begin{example}
For $n=5$, if $f = (1, 3, 5, 4, 2)$ and $g = (3, 1, 4, 2, 5)$, then:
\begin{itemize}
\item In $f$: $\pi_f(1) = 0$, $\pi_f(5) = 2$, so cyclic distance is $2-0=2$ and $5-1 =4 \in \Delta_f(2)$
\item In $g$: $\pi_g(1) = 1$, $\pi_g(5) = 4$, so cyclic distance is $4-1=3$ and $5-1 =4 \in \Delta_g(3)$
\end{itemize}
Moreover,
\begin{align*}
4\notin \Delta_g(2) = \{3-1=2,\ 4-3=1,\ 2-4=-2,\ 5-2=3,\ 1-5=-4\}, \\
4\notin \Delta_f(3) = \{4-1=3,\ 2-3=-1,\ 1-5=-4,\ 3-4=-1,\ 5-2=3\}.
\end{align*}
Therefore, $V_f\neq V_g$.
\end{example}

\begin{remark}[Positions of the next extremes cannot be determined up to swap]\label{lem:place_2_and_m-1_up_to_swap}
Let $f,g$ be lag-homometric permutations of $\{1,\dots,m\}$ with
\(\pi_f(1)=\pi_g(1)\) and \(\pi_f(m)=\pi_g(m)\).
One cannot in general conclude that
\[
\{ \pi_g(2),\; \pi_g(m-1)\} = \{ \pi_f(2),\; \pi_f(m-1)\}.
\]
The following simple counterexample shows the failure of this claim:
\[
f=(1,2,4,3,5),\qquad g=(1,3,2,4,5).
\]
For these permutations
\[
\{\pi_f(2),\pi_f(4)\}=\{1,2\}\neq\{2,3\}=\{\pi_g(2),\pi_g(4)\}.
\]
Hence additional hypotheses are required if one wishes to determine the positions
of $2$ and $n-1$ up to swapping.
\end{remark}

\begin{lemma}\label{lem:secondcomponent}
Let $f,g$ be lag-homometric permutations of $\{1,\dots,m\}$.
Assume
\begin{enumerate}
  \item $\pi_f(1)=\pi_g(1)=0$,
  \item $\pi_f(m)=\pi_g(m)$,
 for every shift $k\in\mathbb Z_m$,
\end{enumerate}
Then
\[
\pi_f(2)\ne \pi_g(2)\Longrightarrow\pi_g(m-1) = \pi_f(m)- \pi_f(2) \pmod m.
\]
\end{lemma}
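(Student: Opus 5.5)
The plan is to count where the difference $m-2$ occurs among the lags, and compare $f$ with $g$. I use the appendix convention $\Delta_f(k)=\{f(j+k)-f(j)\}$, write $p:=\pi_f(m)=\pi_g(m)$, $a:=\pi_f(2)$, $a':=\pi_g(2)$, $b:=\pi_f(m-1)$, $b':=\pi_g(m-1)$, and assume $m\ge 3$ so that $m-2>0$. The claim to establish is $b'=p-a \pmod m$.

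\textbf{Step 1: only two pairs give the difference $m-2$.} For values $x,y\in\{1,\dots,m\}$, the equation $x-y=m-2$ forces $(x,y)=(m,2)$ or $(x,y)=(m-1,1)$. So in $f$ the value $m-2$ arises from exactly two ordered index pairs:
\begin{itemize}
\item the pair $(j,j+k)=(a,p)$, which lies at lag $k=p-a$;
\item the pair $(j,j+k)=(0,b)$, using $\pi_f(1)=0$, which lies at lag $k=b$.
\end{itemize}
Hence the multiset of lags $k$ at which $m-2\in\Delta_f(k)$, counted with multiplicity, is exactly $\{p-a,\;b\}$. The same argument for $g$ gives the multiset $\{p-a',\;b'\}$.

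\textbf{Step 2: compare the two lag multisets.} Lag-homometry says $\Delta_f(k)=\Delta_g(k)$ as multisets for every $k$. So for each $k$, the multiplicity of $m-2$ in $\Delta_f(k)$ equals its multiplicity in $\Delta_g(k)$. Therefore
\[
\{p-a,\;b\}=\{p-a',\;b'\}
\]
as multisets of residues mod $m$. Since $a\neq a'$, we have $p-a\neq p-a' \pmod m$. The element $p-a$ must then be matched with $b'$, giving $b'=p-a$, i.e.\ $\pi_g(m-1)=\pi_f(m)-\pi_f(2)\pmod m$. As a by-product the same matching gives $b=p-a'$.

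\textbf{Where care is needed.} The main subtlety is the multiplicity bookkeeping when the two lags coincide, for example $b=p-a$. Stating the argument as an equality of multisets of lags handles this automatically, because then $\Delta_f(p-a)$ contains $m-2$ twice.

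The degenerate case $m=3$, where $m-1=2$, must be checked separately. There $a=b$ and $a'=b'$, and the two pairs in Step 1 are genuinely distinct pairs of indices. The multiset identity $\{p-a,a\}=\{p-a',a'\}$ with $a\neq a'$ still forces $a'=p-a$, so the conclusion holds.

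Negative differences cause no confusion, since $m-2>0$ while $-(m-2)$ is a different value. For $m\le 2$ the hypothesis $\pi_f(2)\neq\pi_g(2)$ cannot hold with the anchoring, so those cases are vacuous.
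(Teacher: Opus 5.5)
Your proof is correct and is essentially the paper's argument. Like the paper, you use the fact that $m-2$ arises only from the value pairs $(1,m-1)$ and $(2,m)$. The paper argues by contradiction at the single lag $k=\pi_f(m)-\pi_f(2)$, showing $m-2$ occurs in $\Delta_f(k)$ but, under the negated conclusion, not in $\Delta_g(k)$; you compare the full two-element multisets of lags directly, which also gives $\pi_f(m-1)=\pi_g(m)-\pi_g(2)$ at once (the paper derives this separately in Lemma~\ref{lem:secondcomponentExtension}).
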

\begin{proof}
We prove this by contradiction. Suppose that
\[
\pi_g(m-1) \not= \pi_f(m)- \pi_f(2) \pmod m,
\]
and set $d = m-2$, $k = \pi_f(m) - \pi_f(2)$ and
\[
C_f(k,d) = \text{Multiplicity of } d \text{ at shift } k \text{ in } \Delta_f(k) = |\{d\ | d\in \Delta_f(k)\}|
\]
The only ordered pairs that can produce $d$ are $(2,m)$ and $(1, m-1)$. 
Now, let's compute $C_f(k,d)$ and $C_g(k,d)$, which must be equal by the 4th hypothesis.
\begin{itemize}
    \item \textbf{Computing $C_f(k,d)$:}\\
    Clearly, $(2,m)$ contributes 1 to $C_f(k,d)$ because $m-2\in \Delta_f(\pi_f(m) - \pi_f(2))$. On the other hand, the pair $(1,m-1)$ contributes 1 if $\pi_f(m-1) - \pi_f(1) = \pi_f(m-1)= k \pmod m$, and if not, it contributes 0 to $C_f(k,d)$. Therefore, $C_f(k,d)= 1$ or $C_f(k,d)=2$.
    \item \textbf{Computing $C_g(k,d)$:}\\
    We know that $m-2\in\Delta_g(\pi_g(m)- \pi_g(2))$. Now, if $\pi_g(m) - \pi_g(2) = k \pmod m$, then $\pi_f(m) -\pi_f(2) = \pi_g(m) - \pi_g(2) \pmod m$, and because $\pi_f(m) = \pi_g(m)$, we get $\pi_f(2) = \pi_g(2)$, which is a contradiction. Therefore, $\pi_g(m) - \pi_g(2) \not= k \pmod m$. Therefore, the pair $(2,m)$ contributes 0 to $C_g(k,d)$. On the other hand, the pair $(1,m-1)$ contributes 0 to $C_g(k,d)$ because $(m-1)-(1)\in \Delta_g(\pi_g(m-1) - \pi_g(1)) = \Delta_g(\pi_g(m-1))$ and $\pi_g(m-1)\not= \pi_f(m)- \pi_f(2) = k \pmod m$. This implies that $C_g(k,d) = 0$.
\end{itemize}
Therefore, $C_g(k,d)<C_f(k,d)$, which is a contradiction with being lag-homometric. Therefore, the assumption is false, and $\pi_g(m-1) = \pi_f(m)- \pi_f(2) \pmod m$.
\end{proof}

\begin{lemma}\label{lem:secondcomponentExtension}
Let $f,g$ be lag-homometric permutations of $\{1,\dots,m\}$.
Assume
\begin{enumerate}
  \item $\pi_f(1)=\pi_g(1)=0$,
  \item $\pi_f(m)=\pi_g(m)$,
  \item $\pi_f(2)\ne \pi_g(2)$
\end{enumerate}
Then
\[
\pi_f(m-1)\ne\pi_g(m-1)\bmod m\ \;\text{ and }\;\; \pi_g(2) = \pi_f(m)- \pi_f(m-1)\bmod m.
\]
\end{lemma}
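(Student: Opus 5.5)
We will get both conclusions by the same multiplicity-counting device as in Lemma~\ref{lem:secondcomponent}. We fix the difference value $d=m-2$. For a permutation $h$ of $\{1,\dots,m\}$, the only ordered value pairs $(a,b)$ with $b-a=m-2$ are $(2,m)$ and $(1,m-1)$. The pair $(a,b)$ contributes $d$ to $\Delta_h(k)$ exactly at the shift $k=\pi_h(b)-\pi_h(a) \pmod m$. Write $C_h(k,d)$ for the multiplicity of $d$ in $\Delta_h(k)$. Lag-homometry forces $C_f(k,d)=C_g(k,d)$ for every $k$. Throughout we use $\pi_f(1)=\pi_g(1)=0$, so the pair $(1,m-1)$ sits at shift $\pi_h(m-1)$. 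We also use $\pi_f(m)=\pi_g(m)$.

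\emph{First claim.} By hypothesis~3 and Lemma~\ref{lem:secondcomponent}, we already know $\pi_g(m-1)=\pi_f(m)-\pi_f(2)=:k$. Suppose for contradiction that $\pi_f(m-1)=\pi_g(m-1)$, so $\pi_f(m-1)=k$.
\begin{itemize}
\item In $f$, the pair $(2,m)$ lies at shift $k$ by definition of $k$. The pair $(1,m-1)$ lies at shift $\pi_f(m-1)=k$. Hence $C_f(k,d)=2$.
\item In $g$, the pair $(1,m-1)$ lies at shift $\pi_g(m-1)=k$, so it contributes $1$. The pair $(2,m)$ lies at shift $\pi_g(m)-\pi_g(2)=\pi_f(m)-\pi_g(2)$. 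This differs from $k=\pi_f(m)-\pi_f(2)$ because $\pi_g(2)\neq\pi_f(2)$. Hence $C_g(k,d)=1$.
\end{itemize}
This contradicts $C_f(k,d)=C_g(k,d)$, so $\pi_f(m-1)\neq\pi_g(m-1)$.

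\emph{Second claim.} Now count at the shift $k'':=\pi_f(m-1)$, where the pair $(1,m-1)$ of $f$ sits; thus $C_f(k'',d)\ge 1$. Suppose for contradiction that $\pi_g(2)\neq\pi_f(m)-\pi_f(m-1)$.
\begin{itemize}
\item In $g$, the pair $(1,m-1)$ lies at shift $\pi_g(m-1)$. By the first claim this is not $\pi_f(m-1)=k''$, so it contributes $0$.
\item The pair $(2,m)$ of $g$ lies at shift $\pi_g(m)-\pi_g(2)=\pi_f(m)-\pi_g(2)$. This equals $k''$ iff $\pi_g(2)=\pi_f(m)-\pi_f(m-1)$, which we have excluded, so it also contributes $0$.
\end{itemize}
Thus $C_g(k'',d)=0<C_f(k'',d)$, contradicting lag-homometry. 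Therefore $\pi_g(2)=\pi_f(m)-\pi_f(m-1) \pmod m$.

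The only delicate point is the ordering: the second claim needs the first claim to rule out the $(1,m-1)$ contribution in $g$. That is why the two claims are proved in this order. All that remains is bookkeeping of which shift each pair occupies, done modulo $m$.
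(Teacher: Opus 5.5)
Your proof is correct, but it is organized differently from the paper's. The paper notes that the hypotheses of Lemma~\ref{lem:secondcomponent} are symmetric in $f$ and $g$, since hypothesis~3 reads the same either way. It therefore applies that lemma twice, obtaining both $\pi_g(m-1)=\pi_f(m)-\pi_f(2)$ and $\pi_f(m-1)=\pi_g(m)-\pi_g(2)$. Because $\pi_f(m)=\pi_g(m)$ and $\pi_f(2)\neq\pi_g(2)$, the right-hand sides differ, which gives the first claim. Rearranging the second identity gives $\pi_g(2)=\pi_f(m)-\pi_f(m-1)$.

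You instead apply Lemma~\ref{lem:secondcomponent} once and then redo the multiplicity count of $d=m-2$ twice by hand. Your second counting argument is the proof of Lemma~\ref{lem:secondcomponent} with $f$ and $g$ interchanged, evaluated at the shift $\pi_f(m-1)$ rather than at $\pi_g(m)-\pi_g(2)$. Your route is self-contained and makes every shift computation explicit. The paper's route reduces to two lines of algebra and exposes the $f\leftrightarrow g$ symmetry.

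Two side remarks. First, the ordering dependency you stress is an artifact of your choice of shift. Counting at $\pi_g(m)-\pi_g(2)$ gives the second claim without using the first. Second, the first claim needs no lemma at all. If $\pi_f(m-1)=\pi_g(m-1)$, cancel that common shift from the two-element multisets of shifts carrying $d$ in $f$ and in $g$. This forces $\pi_f(m)-\pi_f(2)=\pi_g(m)-\pi_g(2)$, hence $\pi_f(2)=\pi_g(2)$, a contradiction.
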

\begin{proof}
    Using the Lemma \ref{lem:secondcomponent}, we can see that 
    \[\pi_g(m-1) = \pi_f(m)- \pi_f(2)\bmod m,\qquad \pi_f(m-1) = \pi_g(m)- \pi_g(2)\bmod m.\]
    From 2nd hypothesis $\pi_f(m)- \pi_f(2)\not = \pi_g(m)- \pi_g(2)\bmod m$ and therefore, $\pi_g(m-1)\not =\pi_f(m-1)\bmod m$.
    Moreover, by using the above equivalency  we can write \[\pi_g(2) =\pi_g(m) - \pi_f(m-1) = \pi_f(m) - \pi_f(m-1)\bmod m.\]
\end{proof}
\begin{lemma}\label{lem:2_m-1}
Let $f,g$ be lag-homometric permutations of $\{1,\dots,m\}$.  Assume
\begin{enumerate}
  \item $\pi_f(1)=\pi_g(1)=0$,
  \item $\pi_f(m)=\pi_g(m)$,
  \item $\pi_f(2)=\pi_g(2)$.
\end{enumerate}
Then $\pi_f(m-1)=\pi_g(m-1)$.
\end{lemma}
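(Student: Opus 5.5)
The approach is the same multiplicity count used in Lemma~\ref{lem:secondcomponent}. We take the difference value $d=m-2$ and evaluate it at the shift $k:=\pi_f(m-1)-\pi_f(1)=\pi_f(m-1) \pmod m$. We count its multiplicity $C_f(k,d)$ in $\Delta_f(k)$ and $C_g(k,d)$ in $\Delta_g(k)$. Lag-homometry gives $\Delta_f(k)=\Delta_g(k)$ as multisets, so $C_f(k,d)=C_g(k,d)$. Comparing the two counts should then pin down $\pi_g(m-1)$.

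First we dispose of small cases. If $m\le 2$ there is nothing to prove. If $m=3$ then $m-1=2$, so the claim is hypothesis 3. So assume $m\ge 4$; then $d=m-2>0$ and the values $1,2,m-1,m$ are distinct. For a permutation of $\{1,\dots,m\}$, an entry $f(j+k)-f(j)=m-2$ requires the value at the later position (index $j+k$) to exceed the value at $j$ by $m-2$. The only value pairs $(a,b)$ with $b-a=m-2$ are $(1,m-1)$ and $(2,m)$. Hence the pair $(1,m-1)$ contributes to $\Delta_f(k)$ exactly when $\pi_f(m-1)-\pi_f(1)\equiv k$. Likewise $(2,m)$ contributes exactly when $\pi_f(m)-\pi_f(2)\equiv k$. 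Because $f$ is injective, each pair contributes at most once.

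Now we compute the two multiplicities. In $f$, the pair $(1,m-1)$ contributes $1$ by the choice of $k$. The pair $(2,m)$ contributes $\epsilon:=[\pi_f(m)-\pi_f(2)\equiv k]\in\{0,1\}$, so $C_f(k,d)=1+\epsilon$. In $g$, hypotheses 2 and 3 give $\pi_g(m)-\pi_g(2)=\pi_f(m)-\pi_f(2)$. So the pair $(2,m)$ contributes the same $\epsilon$. The pair $(1,m-1)$ contributes $[\pi_g(m-1)\equiv k]$, using $\pi_g(1)=0$. Equating gives $1+\epsilon=\epsilon+[\pi_g(m-1)\equiv k]$, which forces $\pi_g(m-1)\equiv k=\pi_f(m-1)$.

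The only delicate point is that the $(2,m)$ pair might also land at shift $k$. This is why the argument counts multiplicities rather than merely checking membership of $d$ in $\Delta_g(k)$. Counting resolves it, because that contribution is identical for $f$ and $g$ once the positions of $2$ and $m$ are known to agree. Otherwise the argument requires only the verification that no other value pair yields the difference $m-2$.
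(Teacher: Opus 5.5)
Your proof is correct and is essentially the paper's argument: the paper compares the multisets of shifts $\{k_1^f,k_2^f\}$ and $\{k_1^g,k_2^g\}$ at which $d=m-2$ occurs and cancels the common $(2,m)$ shift, which is exactly your multiplicity count evaluated at $k=\pi_f(m-1)$. You argue directly rather than by contradiction, so the paper's case split on whether $k_1^f=k_2^f$ becomes your indicator $\epsilon$, and you also treat $m\le 3$ explicitly, which the paper does not.
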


\begin{proof}
Assume, for contradiction, that $\pi_f(2)=\pi_g(2)$ but $\pi_f(m-1)\neq\pi_g(m-1)$.
Set
\[
d := m-2 .
\]
Observe that among ordered label-pairs from $\{1,\dots,m\}$ the only pairs whose label-difference equals $d$ are
\[
(1,m-1)\quad\text{and}\quad(2,m).
\]
Hence the value $d$ can occur in the shift-multisets only at shifts coming from these two pairs.

Write
\[
k_1^f := \pi_f(m-1),\qquad k_2^f := \pi_f(m)-\pi_f(2) \bmod m,
\]
and likewise
\[
k_1^g := \pi_g(m-1),\qquad k_2^g := \pi_g(m)-\pi_g(2)\bmod m.
\]
By construction, in permutation $f$ the difference $d$ appears (as an element of the multiset) exactly at the shifts in the multiset $\{k_1^f,k_2^f\}$ (counted with multiplicity), and in $g$ it appears exactly at the shifts in $\{k_1^g,k_2^g\}$.

Hypotheses (2) and (3) give $\pi_f(m)=\pi_g(m)\bmod m$ and $\pi_f(2)=\pi_g(2)\bmod m$, hence
\[
k_2^f = \pi_f(m)-\pi_f(2)=\pi_g(m)-\pi_g(2)=k_2^g\bmod m.
\]
Being lag-homometric  implies that the multiset of shifts where $d$ occurs in $f$ equals that for $g$, i.e.
\(
\{k_1^f,k_2^f\} = \{k_1^g,k_2^g\}\quad\text{(as multisets).}
\)
Since $k_2^f=k_2^g\bmod m$, two cases cover all possibilities:
\begin{itemize}
    \item \textbf{Case 1:}  $k_1^f\neq k_2^f$: \\
    Then the multiset equality forces $k_1^f=k_1^g$, i.e.\ $\pi_f(m-1)=\pi_g(m-1)\bmod m$, contradicting the assumption.
\item \textbf{Case 2:} $k_1^f=k_2^f$.\\
Then
\(
\pi_f(m-1)=\pi_f(m)-\pi_f(2)\bmod m.
\)
By multiset equality we must also have $k_1^g=k_2^g\bmod m$, hence
\(
\pi_g(m-1)=\pi_g(m)-\pi_g(2)\bmod m.
\)
Using $\pi_f(m)=\pi_g(m)\bmod m$ and $\pi_f(2)=\pi_g(2)\bmod m$ we obtain
\(
\pi_g(m-1)=\pi_f(m)-\pi_f(2)=\pi_f(m-1)\bmod m,
\)
again contradicting the assumption.
\end{itemize}
In both cases we see a contradiction. Therefore the assumption $\pi_f(m-1)\neq\pi_g(m-1)\bmod m$ is false, and $\pi_f(m-1)=\pi_g(m-1)\bmod m$.
\end{proof}

\begin{lemma}\label{lem:helper}
Let $f,g$ be lag-homometric permutations of $\{1,\dots,m\}$.  Assume
\begin{enumerate}
  \item $\pi_f(1)=\pi_g(1)=0$,
  \item $\pi_f(m)=\pi_g(m)$.
\end{enumerate}
Then for all $p\in\{2,\ldots,m\}$:
\[\forall a\in \{1,2,\ldots, p\},\quad \pi_f(a) = \pi_g(a) \Longrightarrow \pi_f(m+1-p) = \pi_g(m+1-p)\]
\end{lemma}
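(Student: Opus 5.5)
The plan is strong induction on $p$, reducing each step to a multiset-of-shifts comparison for the single value difference $d := m-p$. This generalizes the argument of Lemma~\ref{lem:2_m-1}, which is essentially the case $p=2$.

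\textbf{Setup.} For a permutation $f$ of $\{1,\dots,m\}$, the value $b-a$ lies in $\Delta_f(k)$ exactly when $k = \pi_f(b)-\pi_f(a) \bmod m$, since $f(j+k)-f(j)=b-a$ with $f(j)=a$ and $f(j+k)=b$. So the multiplicity of a value $d$ in $\Delta_f(k)$ equals the number of ordered label pairs $(a,b)$ with $b-a=d$ and $\pi_f(b)-\pi_f(a)\equiv k$. Lag-homometry therefore says: for each $d$, the multiset
\[
S_f(d) := \{\pi_f(a+d)-\pi_f(a) \bmod m : 1\le a\le m-d\}
\]
equals $S_g(d)$.

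\textbf{Trivial case $p=m$.} Here $m+1-p=1$, and $\pi_f(1)=\pi_g(1)$ holds by hypothesis (1). From now on assume $2\le p\le m-1$, so $d=m-p\ge 1$.

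\textbf{Inductive step.} Suppose $\pi_f(a)=\pi_g(a)$ for all $a\le p$.
\begin{itemize}
\item For every $q$ with $2\le q<p$, the hypothesis of the lemma for $q$ holds, since labels $1,\dots,q$ are among $1,\dots,p$. By the induction hypothesis, $\pi_f(m+1-q)=\pi_g(m+1-q)$.
\item Together with hypothesis (2), this gives agreement of positions for all large labels $m+2-p,\dots,m$.
\end{itemize}
Now take $d=m-p$. The label pairs with difference $d$ are exactly $(a,a+m-p)$ for $a=1,\dots,p$.
\begin{itemize}
\item For $a\ge 2$, both labels $a\le p$ and $a+m-p\ge m+2-p$ have agreeing positions. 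Hence the shifts $\pi(a+m-p)-\pi(a)$ coincide for $f$ and $g$.
\item Removing these $p-1$ common elements from the equal multisets $S_f(d)=S_g(d)$ leaves $\pi_f(m+1-p)-\pi_f(1)\equiv \pi_g(m+1-p)-\pi_g(1)$.
\item Since $\pi_f(1)=\pi_g(1)=0$, this yields $\pi_f(m+1-p)=\pi_g(m+1-p)$.
\end{itemize}

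\textbf{Main obstacle.} I expect the delicate step to be stating cleanly that multiplicities of $d$ across all shifts are governed by the multiset $S_f(d)$. One must also check that no pair with a \emph{negative} label difference contributes the value $d$. This holds because $d>0$ here and each difference value $b-a$ is determined by the ordered pair $(a,b)$. The rest is multiset cancellation, exactly as in Case~1 and Case~2 of Lemma~\ref{lem:2_m-1}, but handled uniformly by cancellation rather than by cases.
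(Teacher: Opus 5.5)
Your proof is correct and is essentially the paper's argument: induction on $p$, taking the multiset of shifts $\{\pi(a+m-p)-\pi(a)\}_{a=1}^{p}$ for the value difference $d=m-p$. You show the entries for $a=2,\dots,p$ agree using the hypothesis on labels $1,\dots,p$, the induction hypothesis (applied at what the paper calls $\beta=p+1-a$), and $\pi_f(m)=\pi_g(m)$, then cancel to get $k_1^f=k_1^g$. Your strong-induction formulation, which applies the lemma at each $q<p$ under the full hypothesis on $1,\dots,q$, states the induction hypothesis more accurately than the paper's pointwise version, and it handles the base case $p=2$ (the paper's Lemma~\ref{lem:2_m-1}) within the same step, with $p=m$ treated separately as trivial.
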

\begin{proof}
    We will prove this by induction.\\
 \noindent \textbf{Base case} $p = 2$: Done in the Lemma \ref{lem:2_m-1}.\\
 \noindent \textbf{Induction hypothesis (IH)}:
For any $1\leq a\leq p-1$, the equation $\pi_f(a) = \pi_g(a)$ implies that $\pi_f(m+1-a) = \pi_g(m+1-a)$.\\
 \noindent \textbf{Induction step}:
    We prove that if $\pi_f(p) = \pi_g(p)$ then $ \pi_f(m+1-p) = \pi_g(m+1-p)$.
    We set $d = m- p$. The only pairs that can generate the $d$ are :
    \[(1,1+d),\ (2, 2+d),\ \ldots,\ (p, p+d).\]
    Now, we define $k_a^f = \pi_f(a+d) - \pi_f(a),\  k_a^g = \pi_g(a+d) - \pi_g(a)$. Then we set 
    \[M_f=\{k_a^f\ |\ a\in\{1,2,\ldots, p\} \},\qquad M_g=\{k_a^g\ |\ a\in\{1,2,\ldots, p\}\}\]
    Since we have $\Delta_f = \Delta_g$, we can claim $M_f = M_g$. This is true because the only possible shifts that generate the $d$ are the shift values in these sets. By the hypothesis of the induction step, we know that $\pi_f(p) = \pi_g(p)$ and by the third hypothesis of the statement of the lemma we know that  $\pi_f(m) = \pi_g(m)$. Hence $k_p^f = k_p^g$. Now, we define $\alpha = a+d$ and $\beta = m+1-\alpha$. Then we have
    \[m+1-\beta = m+1 - (a + (m-p)) = p-a+1\]
    and because $2\leq a\leq p-1$ we get 
    \[2=p-(p-1)+1\leq p-a+1\leq p-2+1 = p-1\]
    Therefore,
    \[\pi_f(m+1-\beta) = \pi_g(m+1-\beta)\]
    which implies that $\pi_f(a+d) = \pi_g(a+d)$. Hence,
    \[k_a^f = \pi_f(a+d)-\pi_f(a) = \pi_g(a+d) - \pi_g(a) = k_a^g.\]
    Since we show that $k_2^f = k_2^g, k_3^f = k_3^g,\ldots, k_p^f = k_p^g$ and $M_f = M_g$, therefore we can say that $k_1^f$ has to be equal $k_1^g$ which implies that 
    \[\pi_f(1+d)-\pi_f(1) = \pi_g(1+d) - \pi_g(1),\]
    and from second hypothesis we get the $\pi_f(1+d)= \pi_g(1+d)$ and that is 
    $\pi_f(1+m-p)= \pi_g(1+m-p)$. Therefore, the induction step is proven and the result is as desired. 
\end{proof}

\begin{theorem}\label{thm:finalPermutation}
Let $f,g$ be permutations of $\{1,\dots,m\}$.
Assume
\begin{enumerate}
  \item $\pi_f(1)=\pi_g(1)=0$,
  \item $\pi_f(m)=\pi_g(m)$,
  \item $\Delta_f(k)=\Delta_g(k)$ for every shift $k\in\mathbb Z_m$.
\end{enumerate}
Then for every $p\in\{2,\ldots, m-1\}$
$$\pi_f(p)\ne \pi_g(p)\Longrightarrow\pi_g(m+1-p) = \pi_f(m)- \pi_f(p) \pmod m.$$
\end{theorem}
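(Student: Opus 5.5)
We argue as in Lemma~\ref{lem:secondcomponent}, counting the multiplicity of the difference $d := m-p$, and we use strong induction on $p$ to handle the pairs that Lemma~\ref{lem:secondcomponent} did not have to face. The base case $p=2$ is exactly Lemma~\ref{lem:secondcomponent}. For general $p$, the ordered label pairs with label-difference $d=m-p$ are exactly $(a,a+d)$ for $a=1,\dots,p$. We write
\[
k_a^f := \pi_f(a+d)-\pi_f(a), \qquad k_a^g := \pi_g(a+d)-\pi_g(a) \pmod m .
\]
Since $f,g$ are injective, the shifts at which $d$ occurs in $f$ form the multiset $M_f=\{k_a^f : 1\le a\le p\}$, and likewise $M_g$ for $g$. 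Because $\Delta_f(k)=\Delta_g(k)$ for every $k$, we get $M_f=M_g$ as multisets, exactly as in the proof of Lemma~\ref{lem:helper}.

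The two extreme pairs play the roles of $(2,m)$ and $(1,m-1)$ in Lemma~\ref{lem:secondcomponent}. Set $k := k_p^f = \pi_f(m)-\pi_f(p)$. Since $\pi_f(m)=\pi_g(m)$ and $\pi_f(p)\neq\pi_g(p)$, we have $k_p^g\neq k$. Since $\pi_f(1)=\pi_g(1)=0$, we have $k_1^g=\pi_g(m+1-p)$, so the target statement is exactly $k_1^g=k$. Suppose instead $k_1^g\neq k$. We want a contradiction from the multiplicity of $k$, i.e.\ $C_f(k,d) > C_g(k,d)$. The new difficulty compared with $p=2$ is the middle pairs $2\le a\le p-1$, which may also contribute shift $k$ in either function.

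To control the middle pairs, we pair index $a$ with $a' := p+1-a$. The pair $(a,a+d)$ has upper label $a+d=m+1-a'$, and the pair $(a',a'+d)$ has upper label $m+1-a$. So the middle pairs are built from the labels $2,\dots,p-1$ together with their mirrors $m+1-a$, and all these labels are smaller than $p$ in the sense relevant to the induction hypothesis. For each $a\in\{2,\dots,p-1\}$ there are two cases. If $\pi_f(a)=\pi_g(a)$, we try to show $\pi_f(m+1-a)=\pi_g(m+1-a)$, using Lemma~\ref{lem:helper} when all labels up to $a$ agree. If $\pi_f(a)\ne\pi_g(a)$, the induction hypothesis gives $\pi_g(m+1-a)=\pi_f(m)-\pi_f(a)$. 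Substituting these relations into $k_a^g$ and $k_{a'}^g$ should show that the sub-multisets $\{k_a^f,k_{a'}^f\}$ and $\{k_a^g,k_{a'}^g\}$ coincide for each such pair $\{a,a'\}$, including the fixed point $a=a'$ when $p$ is odd. Cancelling these from $M_f=M_g$ leaves $\{k_1^f,k_p^f\}=\{k_1^g,k_p^g\}$. Since $k_p^g\ne k_p^f=k$, this forces $k_1^g=k$, i.e.\ $\pi_g(m+1-p)=\pi_f(m)-\pi_f(p)$.

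I expect the main obstacle to be the middle cancellation. When exactly one of $a$ and $a'$ is a disagreeing label, the relations produced by the induction hypothesis mix $f$-positions and $g$-positions. It is then not obvious that they reassemble into equal two-element multisets rather than shifting a copy of $k$ from one pair to another. My first attempt there is a direct case split on which of $a$, $a'$, $m+1-a$, $m+1-a'$ agree, computing $k_a^g+k_{a'}^g$ modulo $m$ in each case. If that fails, I would fall back on contradicting only the count at shift $k$, not full multiset equality, by tracking which middle pairs can equal $k$.
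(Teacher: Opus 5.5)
\textbf{The statement is false, so there is no way to close the gap.} The step you single out as the main obstacle is pairwise cancellation of the middle pairs $\{a,a'\}$. It is not merely hard to verify: it is false, and so is the statement itself. Neither your case split nor your fallback of counting only shift $k$ can work. Take $m=16$ and
\[
f=(1,5,13,9,2,6,14,10,4,8,16,12,3,7,15,11),
\]
\[
g=(1,9,13,6,2,10,14,8,4,12,16,7,3,11,15,5).
\]
In $\mathbb{Z}_{16}\times\mathbb{Z}$ let $A=\{(0,0),(4,1),(12,2),(8,3)\}$ and $B=\{(0,0),(1,4),(3,8),(2,12)\}$.
\begin{itemize}
\item The graph $\{(j,f(j))\}$ is $A\oplus B$ translated by $(0,1)$.
\item The graph $\{(j,g(j))\}$ is $A\oplus(-B)$ translated by $(2,13)$.
\item In both cases the $16$ sums land in distinct columns with distinct values.
\end{itemize}
The difference vectors of $A\oplus B$ form the multiset $\{(a-a')+(b-b')\}$, and those of $A\oplus(-B)$ form $\{(a-a')+(b'-b)\}$. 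These coincide after swapping $b,b'$. Reading off first coordinate $k$ gives $\Delta_f(k)=\Delta_g(k)$ for every $k$; you can confirm $k=1,2,3,5$ by hand. Both functions have $1$ at position $0$ and $16$ at position $10$. Yet $\pi_f(6)=5\neq 3=\pi_g(6)$, while $\pi_g(11)=13\neq 5=\pi_f(16)-\pi_f(6)$.

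The induction hypothesis does hold below $6$. Labels $2,3,4$ agree. For the only disagreeing label $5$ one has $\pi_g(12)=9=10-\pi_f(5)$ and, symmetrically, $\pi_f(12)=11\equiv 10-\pi_g(5)$. So it is the induction step itself that fails.

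\textbf{Where your argument breaks.} At $p=6$, $d=10$, consider the mixed pair $\{a,a'\}=\{2,5\}$. It has $\pi_f(2)=\pi_g(2)=4$ but $\pi_f(5)\neq\pi_g(5)$. Here $\{k_2^f,k_5^f\}=\{7,13\}$, whereas $\{k_2^g,k_5^g\}=\{5,15\}$. The full multisets still agree, $M_f=M_g=\{5,6,7,13,14,15\}$. But the entries are permuted in a $4$-cycle through $a=1,2,5,6$: the copy of $k=k_6^f=5$ is absorbed by $k_2^g$, and $k_1^g=13=k_5^f$. This is exactly the migration you feared.

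In the mixed case, pairwise cancellation would need $\pi_f(m+1-a)=\pi_g(m+1-a)=\pi_f(m)-\pi_f(a)$. That holds when $g$ is globally the normalized $\RoC$ of $f$, but lag-homometry does not imply it. Here $g$ is neither $f$ nor $17-f(10-j)$, which takes value $7$ at $j=3$ where $g$ has $6$. So the example also contradicts the paper's final classification theorem for permutations. Separately, your appeal to Lemma~\ref{lem:helper} would in any case need agreement on all labels $1,\dots,a$, not just at $a$.

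\textbf{The paper's proof has the same hole.} After showing each middle $k_a^g$ is ``determined'', it concludes that only $k_1^g$ can equal $k_p^f$. That does not follow: here the determined value $k_2^g=(\pi_f(m)-\pi_f(5))-\pi_f(2)=5$ equals $k_p^f$. In addition, its cases (i) and (iii) invoke Lemma~\ref{lem:helper} from agreement at a single label.
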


\begin{proof}
We prove this by induction on $p$.

\noindent \textbf{Base case} ($p=2$): This is Lemma~\ref{lem:secondcomponent}.

\noindent \textbf{Induction hypothesis}: For every $1\leq a\leq p-1$,
\[
\pi_f(a)\ne \pi_g(a) \;\Longrightarrow\; \pi_g(m+1-a) = \pi_f(m) - \pi_f(a).
\]

\noindent \textbf{Induction step}: Assume $\pi_f(p)\ne \pi_g(p)$. We must show $\pi_g(m+1-p) = \pi_f(m) -\pi_f(p)$.

Set $d = m-p$. For each $a \in \{1,\dots,p\}$, define
\[
k_a^f = \pi_f(a+d) - \pi_f(a), \qquad k_a^g = \pi_g(a+d) - \pi_g(a),
\]
and collect these into multisets
\[
M_f = \{k_1^f, k_2^f, \ldots, k_p^f\}, \qquad M_g = \{k_1^g, k_2^g, \ldots, k_p^g\}.
\]
Since $\Delta_f(k) = \Delta_g(k)$ for all $k$, we have $M_f = M_g$ as multisets.

Since $\pi_f(p)\ne \pi_g(p)$ and $\pi_f(m) = \pi_g(m)$, we compute
\begin{align*}
k_p^f = \pi_f(p+d) - \pi_f(p) &= \pi_f(m) - \pi_f(p) \\
&\ne \pi_f(m) -\pi_g(p) = \pi_g(m) -\pi_g(p) = \pi_g(p+d) - \pi_g(p) = k_p^g.
\end{align*}
Therefore $k_p^f \ne k_p^g$.

We now show that every $k_a^g$ for $a \in \{2,\dots,p-1\}$ is determined by the induction hypothesis, leaving $k_1^g$ as the only free entry in $M_g$.

For each $a \in \{2,\dots,p-1\}$, define $\beta = p+1-a$. Since $2 \le a \le p-1$, we also have $2 \le \beta \le p-1$. A direct computation gives
\[
a + d = m + 1 - \beta, \qquad \beta + d = m + 1 - a.
\]
This means $k_a^g$ involves $\pi_g(a+d) = \pi_g(m+1-\beta)$, and $k_\beta^g$ involves $\pi_g(\beta+d) = \pi_g(m+1-a)$. These are exactly the quantities the IH provides formulas for when positions disagree. Since both $a < p$ and $\beta < p$, the IH applies to each. We consider four cases depending on whether $f$ and $g$ agree or disagree at $a$ and at $\beta$.

\medskip
\noindent (i): $\pi_f(a) = \pi_g(a)$ and $\pi_f(\beta) = \pi_g(\beta)$.
Both positions agree. From $\pi_f(\beta) = \pi_g(\beta)$, Lemma~\ref{lem:helper} gives $\pi_f(m+1-\beta) = \pi_g(m+1-\beta)$, i.e., $\pi_f(a+d) = \pi_g(a+d)$. Combined with $\pi_f(a) = \pi_g(a)$:
\[
k_a^g = \pi_g(a+d) - \pi_g(a) = \pi_f(a+d) - \pi_f(a) = k_a^f.
\]
So $k_a^g$ is determined. By the symmetric argument, $k_\beta^g = k_\beta^f$ is also determined.

\medskip
\noindent (ii): $\pi_f(a) = \pi_g(a)$ and $\pi_f(\beta) \ne \pi_g(\beta)$.

Since $\pi_f(\beta) \ne \pi_g(\beta)$ and $\beta < p$, the IH gives
\[
\pi_g(m+1-\beta) = \pi_f(m) - \pi_f(\beta).
\]
Since $a + d = m + 1 - \beta$, this says $\pi_g(a+d) = \pi_f(m) - \pi_f(\beta)$. We also know $\pi_g(a) = \pi_f(a)$. Therefore
\[
k_a^g = \pi_g(a+d) - \pi_g(a) = \bigl(\pi_f(m) - \pi_f(\beta)\bigr) - \pi_f(a).
\]
Everything on the right is known. So $k_a^g$ is determined.

\medskip
\noindent (iii): $\pi_f(a) \ne \pi_g(a)$ and $\pi_f(\beta) = \pi_g(\beta)$.

Since $\pi_f(a) \ne \pi_g(a)$ and $a < p$, the IH gives
\[
\pi_g(m+1-a) = \pi_f(m) - \pi_f(a).
\]
Since $\beta + d = m+1-a$, this says $\pi_g(\beta+d) = \pi_f(m) - \pi_f(a)$. We also know $\pi_g(\beta) = \pi_f(\beta)$. Therefore
\[
k_\beta^g = \pi_g(\beta+d) - \pi_g(\beta) = \bigl(\pi_f(m) - \pi_f(a)\bigr) - \pi_f(\beta).
\]
So $k_\beta^g$ is determined. Moreover, from $\pi_f(\beta) = \pi_g(\beta)$, Lemma~\ref{lem:helper} gives $\pi_f(a+d) = \pi_g(a+d)$, so $k_a^g$ is determined as well.

\medskip
\noindent (iv): $\pi_f(a) \ne \pi_g(a)$ and $\pi_f(\beta) \ne \pi_g(\beta)$.

Both disagree. The IH applies to both:
\[
\pi_g(m+1-\beta) = \pi_f(m) - \pi_f(\beta), \qquad \pi_g(m+1-a) = \pi_f(m) - \pi_f(a).
\]
Since $a+d = m+1-\beta$ and $\beta+d = m+1-a$, we get
\[
k_a^g = \bigl(\pi_f(m) - \pi_f(\beta)\bigr) - \pi_g(a), \qquad
k_\beta^g = \bigl(\pi_f(m) - \pi_f(a)\bigr) - \pi_g(\beta).
\]
Both are determined.

\medskip
In all four cases, $k_a^g$ is determined for every $a \in \{2,\dots,p-1\}$. The entry $k_p^g = \pi_g(m) - \pi_g(p)$ is also known. The only entry of $M_g$ that is not yet determined is
\[
k_1^g = \pi_g(1+d) - \pi_g(1) = \pi_g(m+1-p),
\]
since $\pi_g(1) = 0$.

Now, $M_f = M_g$ as multisets. The value $k_p^f$ appears in $M_f$, but $k_p^g \ne k_p^f$ and all of $k_2^g, \dots, k_{p-1}^g$ are already fixed. The only entry that can match $k_p^f$ is $k_1^g$. Therefore $k_1^g = k_p^f$, which gives
\[
\pi_g(m+1-p) = k_1^g = k_p^f = \pi_f(m) - \pi_f(p). \qedhere
\]
\end{proof}

\begin{theorem}
    Fix $m \in \mathbb{Z}_{\ge 1}$.
    Let $P$ be the set of all permutations of any consecutive $m$ numbers, and let 
    $V_f^{P}$ denote the restriction of $V_f$ to $P$.  
    Then 
    \[
        g = \Rot^r_x(f) \text{ or } g = \Rot^r_x(\RoC(f)) \quad\text{ for some } x\in\cms\Longleftrightarrow V_f^{P} = V_g^{P}
    \]
\end{theorem}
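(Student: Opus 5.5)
The forward direction is a direct consequence of the invariance results already established. If $g=\Rot^r_x(f)$, then Theorem~\ref{thm:rotation_preserves_lag_homometry} gives $\Delta_g(k)=\Delta_f(k)$ for all $k$, hence $V_f=V_g$. If $g=\Rot^r_x(\RoC_c(f))$, then Theorem~\ref{thm:roc_preserves_lag_homometry} gives the same conclusion. Here $c=\min f+\max f$ is chosen so that $\RoC_c(f)$ is again a permutation of the same consecutive block, and by the discrete shift invariance (Theorem~\ref{thm:rotation_shift_invariance}) the value of $c$ is irrelevant. For the converse I would first normalize.

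\textbf{Normalization.} Since permutations take integer (hence rational) values, the corollary following Lemma~\ref{lem:expo_multisets} shows that $V_f=V_g$ forces $f,g$ to be lag-homometric. Lag-homometry is invariant under adding constants, so I subtract a constant from each function to make both permutations of $\{1,\dots,m\}$. This is the only place the constant offset in the statement enters, and I will read the statement up to it. Then I rotate each function so that $\pi_f(1)=\pi_g(1)=0$; by Theorem~\ref{thm:rotation_preserves_lag_homometry} this keeps them lag-homometric. Corollary~\ref{cor:anchor_1_m} then gives $\pi_f(m)=\pi_g(m)$.

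\textbf{Removing the $\RoC$ ambiguity.} Next I split on the position of the value $2$.
\begin{enumerate}
\item If $\pi_f(2)=\pi_g(2)$, I keep $g$.
\item If $\pi_f(2)\neq\pi_g(2)$, Lemma~\ref{lem:secondcomponentExtension} gives $\pi_g(2)=\pi_f(m)-\pi_f(m-1)$. Set $f'=\Rot(\RoC_{m+1}(f))$, rotated so that $\pi_{f'}(1)=0$. In $f'$ the value $1$ sits where $m$ was in $f$, and the value $2$ sits where $m-1$ was. After the rotation this gives $\pi_{f'}(2)=\pi_f(m-1)-\pi_f(m)$ and $\pi_{f'}(m)=-\pi_f(m)$ modulo $m$.
\end{enumerate}
In the second case I would check, using $\pi_g(m-1)=\pi_f(m)-\pi_f(2)$ from Lemma~\ref{lem:secondcomponent} together with the anchoring identity, that $f'$ and $g$ satisfy all three anchoring hypotheses ($1$, $m$ and $2$ in the same positions). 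If this check fails as stated, I would instead apply the $\RoC$ to $g$, which is symmetric. In either case I reduce to lag-homometric, anchored permutations that also agree at the value $2$, and I rename $f'$ as $f$.

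\textbf{Propagation, the main obstacle.} It remains to show that agreement at $1$, $2$ and $m$ forces $\pi_f=\pi_g$. I would induct on $p$, claiming that $f$ and $g$ agree on the values $\{1,\dots,p\}\cup\{m+1-p,\dots,m\}$. The base case is Lemma~\ref{lem:2_m-1}. Given agreement up to $p-1$, agreement at $p$ yields agreement at $m+1-p$ by Lemma~\ref{lem:helper}. So the real work is excluding $\pi_f(p)\neq\pi_g(p)$.
\begin{itemize}
\item Suppose $\pi_f(p)\neq\pi_g(p)$. Then Theorem~\ref{thm:finalPermutation} forces $\pi_g(m+1-p)=\pi_f(m)-\pi_f(p)$.
\item Swapping the roles of $f$ and $g$ gives $\pi_f(m+1-p)=\pi_g(m)-\pi_g(p)$.
\item I would then examine the multiset of shifts at which the difference $d=m-p$ occurs. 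All pairs $(a,a+d)$ with $2\le a\le p-1$ are already matched by the inductive hypothesis, and the only remaining pairs are $(1,m+1-p)$ and $(p,m)$.
\item The resulting two-element multisets $\{\pi_f(m+1-p),\,\pi_f(m)-\pi_f(p)\}$ and $\{\pi_g(m+1-p),\,\pi_g(m)-\pi_g(p)\}$ must coincide. Combined with the two displayed identities, this means $\pi_f$ and $\pi_g$ swap the roles of $p$ and $m+1-p$.
\end{itemize}
To reach a contradiction I would test a second difference, for example $d'=p-2$, which involves the already fixed value $2$ and the value $p$. This difference should pin down $\pi(p)$ itself unless a coincidence makes $g$ locally a $\RoC$ of $f$. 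I expect this case analysis to be the delicate point, and it may need the anchoring at $2$ used in an essential way. Once the induction reaches $s+t=m$, we get $\pi_f=\pi_g$, so $g$ is a rotation of $f$ or of $\RoC(f)$, as claimed.
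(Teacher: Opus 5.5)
\textbf{The gap is in your propagation step, and it cannot be repaired, because the statement itself is false.}

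Your forward direction and normalization are fine, reading the statement up to a constant as you do. The case split at the value $2$ is also salvageable once a sign is fixed. $\RoC$ reverses indices, so after normalizing, $\pi_{f'}(v)=\pi_f(m)-\pi_f(m+1-v)$. This gives $\pi_{f'}(m)=\pi_f(m)$ and $\pi_{f'}(2)=\pi_f(m)-\pi_f(m-1)=\pi_g(2)$, so your anchoring check in the second case goes through.

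The propagation step is not merely unfinished: its claim, that agreement at the values $1,2,m$ forces $\pi_f=\pi_g$, is false. For $m=6$, take $f=(1,3,6,2,4,5)$ and $g=(1,4,6,2,3,5)$. Here $g$ is $\RoC_7(f)$ rotated so that $1$ sits at index $0$, so $f,g$ are lag-homometric. They agree at the values $1,2,5,6$, yet $\pi_f(3)=1\neq 4=\pi_g(3)$. So the case $\pi_f(p)\neq\pi_g(p)$ genuinely occurs at $p=3$, and no second difference such as $d'=p-2$ can exclude it. The two identities you derive, and the matching of the two-element shift multisets, are exactly what a true $\RoC$ partner satisfies. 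Matching at $2$ does not fix the $\RoC$ choice. You would have to make that choice at the first value where $f$ and its normalized $\RoC$ differ. You would then still need a global argument that $g$ cannot follow $f$ on some values and $\RoC(f)$ on others.

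That global argument cannot exist. Take $A=\{(0,0),(1,1),(2,3),(3,2)\}$ and $B=\{(0,0),(4,4),(8,12),(12,8)\}$ in $\mathbb{Z}_{16}\times\mathbb{Z}$. Both $A+B$ and $A-B$ are graphs of permutations of $16$ consecutive integers:
\begin{itemize}
\item $A+B$ gives $f=(0,1,3,2,4,5,7,6,12,13,15,14,8,9,11,10)$;
\item $A-B$, after adding $12$, gives $g=(12,13,15,14,4,5,7,6,0,1,3,2,8,9,11,10)$.
\end{itemize}
All sums $a\pm b$ are distinct, so both graphs have pairwise-difference multiset $\{(a-a')+(b-b')\}$. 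Hence $\Delta_f(k)=\Delta_g(k)$ for every $k$, and $V_f=V_g$. But rotating the value $0$ to index $0$ gives $(0,1,3,2,8,\dots)$ for $g$, versus $(0,1,3,2,4,\dots)$ for $f$ and $(0,2,3,9,\dots)$ for $\RoC_{15}(f)$. So $g$ is neither a rotation of $f$ nor of $\RoC(f)$.

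The paper's proof has the same hole as yours. It simply cites Theorem~\ref{thm:finalPermutation}, which only gives the local implication you also used. Even that theorem fails on this example at $p=6$. After adding $1$ and anchoring, $\pi_f(6)=5\neq 13=\pi_g(6)$, but $\pi_g(11)=7\neq 5=\pi_f(16)-\pi_f(6)$. Its last step overlooks that an already ``determined'' entry $k_a^g$ with $2\le a\le p-1$ can equal $k_p^f$ without equalling $k_a^f$; here $k_2^g=5=k_6^f$. So the converse needs extra hypotheses and cannot be completed as stated.
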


\begin{proof}
    The forward direction is immediate:  
    the discrete signature is invariant under shifts, rotations, and
    Reverse-of-Complement, so any such transformation leaves $V_f^{P}$ unchanged.

    For the converse direction, assume $f, g \in P$ and $V_f^{P} = V_g^{P}$.
    By Lemma~\ref{lem:expo_multisets}, equality of the restricted signatures
    forces $f$ and $g$ to be \emph{lag-homometric}.

    Since $f, g \in P$, each is obtained from some $f', g' \in P'$ 
    by adding a constant and applying a rotation.  
    Both operations preserve lag-homometricity, so the corresponding elements
    $f', g' \in P'$ are also lag-homometric.

    Now normalize each $f', g'$ to a canonical representative in $P'$:
    let
    \[
        i_{\min} := \arg\min f', \qquad
        f'' := \Rot^r_{\,m - i_{\min}}\!\bigl(f' - \min f' + 1\bigr),
    \]
    so that $f''(0)=1$ and $f''$ becomes a permutation of $\{1,\ldots,m\}$.  
    Apply the same normalization to $g'$ to obtain $g''$.
    Again, all steps preserve lag-homometricity, hence
    \[
        \Delta_{f''}(k) = \Delta_{g''}(k)
        \qquad\text{for all } k.
    \]

    At this point, $f''$ and $g''$ are permutations in $P'$ that are 
    lag-homometric.  
    Therefore, by Theorem~\ref{thm:finalPermutation},
    the signature on $P'$ is injective up to rotation or rotation of 
    the Reverse-of-Complement, which yields
    \[
        g'' = \Rot^r_x(f'')
        \quad\text{or}\quad
        g'' = \Rot^r_x(\RoC(f''))
        \qquad\text{for some } x \in \cms.
    \]

    Undoing the normalization and reversing the earlier shifts and rotations
    shows that the same relation holds for $f$ and $g$:
    \[
        g = \Rot^r_x(f)
        \quad\text{or}\quad
        g = \Rot^r_x(\RoC(f)),
    \]
    for some $x \in \cms$, completing the proof.
\end{proof}


\section{Invariance of General Position Functions}
\label{app:unique-inv}

This section provides details on the proof of invariance of lag-homometric functions $f,g \in \F^m_{\neq}$ that are in \emph{general position}, in the sense that all values $f(j) \neq f(j')$ for $j \neq j' \in [m]$, and also all differences $f(j)-f(j') \in \Delta_f$ for $j\neq j'$ are also distinct.  

For $f \in \F^m_{\neq}$ let $\Delta_f = \bigcup_{k \in [m]} \Delta_f(k)$ be the set of all differences.  The difference of a function value $f(j)$ with itself $f(j) - f(j) = 0$, but by definition of $\F^m_{\neq}$, all other differences are distinct.  Thus $\Delta_f$ has $m(m-1) + 1$ distinct values (with the $1$ coming from the $m$ copies of difference of $0$).  

Recall that we define the \emph{hat} version $\hat f$ of a function $f \in \F^{m}_{\neq}$ as being in order by its index, so $0 = \hat f(0) < \hat f(1) < \hat f(2) < \ldots < \hat f(m-2) < \hat f(m-1) = M$.  

\begin{lemma}\label{lem:2secondlargest}
Let $f=[a_0=0,a_1,\ldots,a_{m-1}]$ be a function on $\mathbb{Z}_m$ and let 
$\hat f(0)=0<\hat f(1)<\cdots<\hat f(m-1)=M$ be the increasing rearrangement of 
$\{a_0,\ldots,a_{m-1}\}$. For $k\in\{0,1,\ldots,m-1\}$ define
\[
\Delta_f(k)=\{\,f(j+k)-f(j)\mid j=0,1,\ldots,m-1\,\}.
\]
Then the second largest value in $\Delta_f$ can only be one of
\[
\hat f(m-1)-\hat f(1), 
\qquad 
\hat f(m-2)-\hat f(0).
\]
\end{lemma}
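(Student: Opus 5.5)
The plan is to reduce the statement to a purely combinatorial fact about differences of a finite set of distinct reals. The cyclic shift structure plays no role here. As observed in the proof of Theorem~\ref{thm:range_preserved}, every ordered pair of indices $(p,q)$ occurs as a lagged difference $f(q)-f(p)\in\Delta_f(q-p \bmod m)$. Hence the underlying set of $\Delta_f=\bigcup_k\Delta_f(k)$ is exactly
\[
\{\hat f(i)-\hat f(j) : i,j\in[m]\}.
\]
So the question is only about the ordered differences of the sorted values $0=\hat f(0)<\cdots<\hat f(m-1)=M$. I will implicitly assume $m\ge 3$; for $m=2$ the statement degenerates, since then $\hat f(m-1)-\hat f(1)=0=\hat f(m-2)-\hat f(0)$.

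First I identify the largest value. Any difference satisfies $\hat f(i)-\hat f(j)\le \hat f(m-1)-\hat f(0)=M$, with equality only for $(i,j)=(m-1,0)$. This is because $\hat f$ is strictly increasing, so $\hat f(i)\le M$ with equality iff $i=m-1$, and $\hat f(j)\ge 0$ with equality iff $j=0$. Thus $M$ is the largest element and is attained by a unique pair. This is the content already used in Lemma~\ref{lem:canonical}.

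Next take any other pair $(i,j)\neq(m-1,0)$ with positive difference, so $i>j$. Either $i\neq m-1$ or $j\neq 0$.
\begin{itemize}
\item If $j\ge 1$, monotonicity gives $\hat f(i)-\hat f(j)\le \hat f(m-1)-\hat f(1)$.
\item If $j=0$ and $i\le m-2$, then $\hat f(i)-\hat f(j)\le \hat f(m-2)-\hat f(0)$.
\end{itemize}
Hence every difference other than $M$ is bounded by $\max\{\hat f(m-1)-\hat f(1),\,\hat f(m-2)-\hat f(0)\}$. Both candidates are themselves realized in $\Delta_f$, and both are strictly less than $M$. So the second largest value of $\Delta_f$ equals this maximum, and is therefore one of the two listed quantities.

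The only point needing care is the meaning of "second largest" when values repeat: $\Delta_f(k)$ is a multiset, and the two candidates could coincide for a general $f$. I will interpret "second largest" as the largest element strictly below $M$, which the argument above handles. The uniqueness of the pair realizing $M$ ensures $M$ itself does not recur as a second copy. If the intended setting is $\F^m_{\neq}$, all nonzero differences are distinct anyway, so no ambiguity arises. I expect no real obstacle; the main work is stating cleanly the reduction from lagged multisets to the plain difference set.
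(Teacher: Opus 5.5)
Your proof is correct and follows essentially the paper's route: reduce to differences of the sorted values, then use monotonicity to show every non-maximal difference is dominated by $\hat f(m-1)-\hat f(1)$ or $\hat f(m-2)-\hat f(0)$. Your case split on $j\ge 1$ versus $j=0$ is tighter than the paper's version, which rests on the overstated claim that interior differences are below every difference involving the min or max, and as written skips pairs such as $\hat f(m-2)-\hat f(1)$. Your remark that every ordered pair is realized in some $\Delta_f(k)$ also supplies the attainment step that the paper leaves implicit.
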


\begin{proof}
Each element of $\Delta_f(k)$ is a difference of the form 
$\hat f(j)-\hat f(i)$ for some $0\le i,j\le m-1$.  
Any difference using only indices $0<i,j<m-2$ is strictly smaller than every 
difference involving either the minimum $\hat f(0)$ or the maximum $\hat f(m-1)$.  
Thus the second largest value cannot arise from such pairs.

Hence the second largest value must come from one of the two families
\[
\hat f(m-1)-\hat f(i), \quad i=1,2,\ldots,m-2,
\qquad\text{or}\qquad
\hat f(i)-\hat f(0), \quad i=1,2,\ldots,m-2.
\]

Within the first family,
\[
\hat f(m-1)-\hat f(1) \;>\; \hat f(m-1)-\hat f(i),
\qquad i=2,\ldots,m-2,
\]
so the largest candidate there is $\hat f(m-1)-\hat f(1)$.

Within the second family,
\[
\hat f(m-2)-\hat f(0) \;>\; \hat f(i)-\hat f(0),
\qquad i=1,2,\ldots,m-3,
\]
so the largest candidate there is $\hat f(m-2)-\hat f(0)$.

Thus the second largest value in $\bigcup_k \Delta_f(k)$ can only be 
$\hat f(m-1)-\hat f(1)$ or $\hat f(m-2)-\hat f(0)$.
\end{proof}

\begin{lemma}\label{lem:secondlarge}
Let $f,g \in \F^m_{\neq}$ be lag-homometric, and in canonical form so  
$\hat f(0)=\hat g(0)=0$.  
Then we must have
\[
\hat g(m-2)=\hat f(m-2)
\quad\text{XOR} \quad
\Big(\hat g(m-2)=\hat f(m-1)-\hat f(1)\quad \&\quad \hat f(m-2)=\hat g(m-1)-\hat g(1)\Big).
\]
\end{lemma}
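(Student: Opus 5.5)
Since $f,g$ are lag-homometric, the multisets $\Delta_f=\bigcup_k\Delta_f(k)$ and $\Delta_g=\bigcup_k\Delta_g(k)$ coincide. In particular, their sorted lists of values agree.

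\emph{Step 1: largest value.} Because both are in canonical form, the largest element of $\Delta_f$ is $\hat f(m-1)-\hat f(0)=\hat f(m-1)$, and similarly for $g$. So $\hat f(m-1)=\hat g(m-1)=M$. This is also the content of Lemma~\ref{lem:canonical} and Theorem~\ref{thm:range_preserved}.

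\emph{Step 2: second largest value.} By Lemma~\ref{lem:2secondlargest}, the second largest element $\beta_2$ of the common set $\Delta_f=\Delta_g$ is one of the two candidates $M-\hat f(1)$ or $\hat f(m-2)$ in terms of $f$. The same lemma gives $\beta_2\in\{M-\hat g(1),\hat g(m-2)\}$ in terms of $g$. I would split into the four combinations.

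\emph{Step 3: the four combinations.}
\begin{enumerate}
\item If $\beta_2=\hat f(m-2)=\hat g(m-2)$, the first alternative of the statement holds.
\item If $\beta_2=M-\hat f(1)=M-\hat g(1)$, then $\hat f(1)=\hat g(1)$. To recover $\hat f(m-2)=\hat g(m-2)$, I would compare the multisets of all differences of the form $\hat f(m-2)-\hat f(0)$ and $M-\hat f(1)$. More concretely, the third largest difference is one of $M-\hat f(2)$, $\hat f(m-2)-\hat f(1)$, or $\hat f(m-3)$, and I would argue by elimination. A cleaner route is to look at the set $\{M-x : x\in\{\hat f(0),\dots,\hat f(m-1)\}\}\cup\{x\}$, i.e.\ all differences involving $0$ or $M$. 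Since all differences are distinct and shared, I would try to show that the value $\hat f(m-2)$ must appear in $\Delta_g$ as a difference that, by size, can only be $\hat g(m-2)-0$ or $M-\hat g(j)$ for some $j$. The latter forces a relation I would then rule out or convert into the first alternative.
\item The mixed combinations, e.g.\ $\beta_2=\hat f(m-2)=M-\hat g(1)$, are the ones that should yield the second alternative. Here the value $\hat g(m-2)$, which is a difference in $\Delta_g=\Delta_f$ lying just below $\beta_2$ in the $g$-candidates, must be identified with the other $f$-candidate $M-\hat f(1)$. Symmetrically, $\hat f(m-2)=M-\hat g(1)$ holds directly.
\end{enumerate}
So the mixed case gives $\hat f(m-2)=\hat g(m-1)-\hat g(1)$ immediately. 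The remaining claim $\hat g(m-2)=\hat f(m-1)-\hat f(1)$ comes from a symmetric ranking argument: both $\{M-\hat f(1),\hat f(m-2)\}$ and $\{M-\hat g(1),\hat g(m-2)\}$ are the top two elements below $M$ among differences touching an extreme. I would argue that these two sets are the second and third largest elements of $\Delta$ whenever they exceed all ``interior'' differences, and hence equal as sets.

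\emph{Step 4: XOR.} Exclusivity follows from distinctness of differences. If both alternatives held, then $\hat f(m-2)=\hat g(m-2)=M-\hat f(1)$. This equates the two distinct differences $\hat f(m-2)-\hat f(0)$ and $\hat f(m-1)-\hat f(1)$ in $\Delta_f$, contradicting membership in $\F^m_{\neq}$.

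\emph{Main obstacle.} I expect the hard part to be the set equality $\{M-\hat f(1),\hat f(m-2)\}=\{M-\hat g(1),\hat g(m-2)\}$. The smaller of the two candidates need not be the third largest difference overall: $M-\hat f(2)$ or $\hat f(m-3)$ might intervene. I would first try to locate each value via its shift $k$. For instance, $\hat f(m-2)$ lies in $\Delta_f(k)$ for $k=\pi_f(\hat f(m-2))$, using $\pi_f(0)=\pi_g(0)=0$ and $\pi_f(M)=\pi_g(M)$ from Lemma~\ref{lem:canonical}. I would then use shift matching, as in Lemma~\ref{lem:secondcomponent}, to force the identification.
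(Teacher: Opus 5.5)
Your Steps 1, 2 and 4 and the first case of Step 3 are correct, but the proposal does not prove the statement. The ``main obstacle'' you flag is a genuine gap, and neither remedy you suggest closes it.

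Write $A_f=M-\hat f(1)$, $B_f=\hat f(m-2)$, and similarly for $g$. Matching the second largest difference $d_2$ yields exactly one equation per case:
\begin{itemize}
\item If $d_2=A_f=A_g$, you learn only $\hat f(1)=\hat g(1)$ (at a common index). This says nothing about $B_f$ versus $B_g$.
\item In the crossed case $d_2=B_f=A_g$, you get $\hat f(m-2)=\hat g(m-1)-\hat g(1)$ but not $\hat g(m-2)=\hat f(m-1)-\hat f(1)$.
\end{itemize}
The set equality $\{A_f,B_f\}=\{A_g,B_g\}$ cannot come from ranking, for the reason you give yourself. The counting trick of Lemma~\ref{lem:secondcomponent} has no analogue in $\F^m_{\neq}$. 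There every nonzero difference arises from a unique pair, so $\Delta_f(k)=\Delta_g(k)$ only says that a given value sits at the same lag in $f$ and in $g$, which you have already used. The paper's proof makes exactly your four-way split and has the same hole. It never shows that the case $A_f=A_g$ leads to either alternative. In the crossed case it asserts both $B_g=A_f$ and $B_f=A_g$, although only one of them follows from $d_2$.

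The missing step is not local. $\RoC$ preserves lag-homometry and swaps the roles of $A$ and $B$. Applying it to one or both functions turns every unresolved case into the following: both $f$ and $g$ have $d_2=\hat f(m-2)=\hat g(m-2)$, and one must show $\hat f(1)=\hat g(1)$. Now every value $\hat f(j)>M-\hat f(1)$ precedes $M-\hat f(1)$ in decreasing order of differences. If $g$ realizes any of them in the mirrored way, as $M-g(y)$, then $0<g(y)<\hat f(1)$, and so $\hat g(1)\neq\hat f(1)$. Hence the claim already contains the exclusion of all those mismatches. That is repeated use of the mismatch case of Lemma~\ref{lem:Dst-induction}, whose justification in the paper is itself incomplete.

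As stated, the lemma is really a corollary of Theorem~\ref{thm:main-invariance}. Once the canonical $g$ is known to be either $f$ or the canonical form of $\RoC(f)$, for which $\hat g(j)=M-\hat f(m-1-j)$, both alternatives drop out at once. What the four-way split does legitimately establish is a weaker fact: after applying $\RoC$ where needed, $\hat f(m-2)=\hat g(m-2)$ is attained at a common index. That is all the subsequent induction uses.
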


\begin{proof}
Let $M$ be the unique maximum difference in $\Delta_f$. Because $f,g$ are lag-homometric, they share the exact same multiset of differences for every lag $k$. Thus, $f$ and $g$ share the same overall maximum difference $M$, and the same second largest difference $d_2$. 

By Lemma~\ref{lem:2secondlargest}, the second largest difference in $f$ must be either $A_f = \hat f(m-1)-\hat f(1)$ or $B_f = \hat f(m-2)-\hat f(0) = \hat f(m-2)$. Similarly, the second largest difference in $g$ must be either $A_g = \hat g(m-1)-\hat g(1)$ or $B_g = \hat g(m-2)$. Because $d_2$ is uniquely the second largest difference in both functions, $d_2 \in \{A_f, B_f\}$ and $d_2 \in \{A_g, B_g\}$.

Because $f \in \F^m_{\neq}$, all non-zero differences are uniquely generated by exactly one pair of indices. Therefore, the value $d_2$ occurs at a unique lag $k_2$ in $f$. Since $\Delta_f(k_2) = \Delta_g(k_2)$, the exact value $d_2$ must also be generated in $g$ at the exact same lag $k_2$.
The lag for $A_f$ is $k_{A_f} = \pi_f(\hat f(m-1)) - \pi_f(\hat f(1)) \pmod m$. 
The lag for $B_f$ is $k_{B_f} = \pi_f(\hat f(m-2)) - \pi_f(\hat f(0)) \pmod m$.
(And similarly for $A_g$ and $B_g$).

We have two possibilities for $f$:
\begin{enumerate}
    \item $d_2 = A_f$. Then $d_2$ occurs at lag $k_{A_f}$. In $g$, $d_2$ must be either $A_g$ or $B_g$. If $d_2 = A_g$, then $A_f = A_g$ (which implies $\hat f(1) = \hat g(1)$ since $M$ is shared), and their lags must match: $k_{A_f} = k_{A_g}$. If $d_2 = B_g$, then $A_f = B_g$, and their lags must match: $k_{A_f} = k_{B_g}$.
    \item $d_2 = B_f$. Then $d_2$ occurs at lag $k_{B_f}$. In $g$, $d_2$ is either $A_g$ or $B_g$. If $d_2 = B_g$, then $B_f = B_g$, meaning $\hat f(m-2) = \hat g(m-2)$. If $d_2 = A_g$, then $B_f = A_g$.
\end{enumerate}

Therefore, we definitively establish that either $\hat f(m-2) = \hat g(m-2)$ (if $B_f = B_g$) OR $\hat g(m-2) = \hat f(m-1)-\hat f(1)$ and $\hat f(m-2) = \hat g(m-1)-\hat g(1)$ (the crossed case where $B_g = A_f$ and $B_f = A_g$). These cases are mutually exclusive (XOR) because if both were true, then $\hat f(m-2) = \hat f(m-1)-\hat f(1)$, which violates the distinctness condition of $\F^m_{\neq}$. 
\end{proof}

Recall, that since we have identified the location of the second largest difference in $\Delta_f = \Delta_g$, we can convert $f,g$ to $\RoC$ canonical form.  This ensures that the second largest difference is $\hat f(m-2) - \hat f(0) = \hat f(m-2)$; if it is not originally, we apply $\RoC$, and then shift and rotate to make in canonical form.  Moreover, since we know for which $k$ this second largest value is in $\Delta_f(k) = \Delta_g(k)$, we can identify its index $j$ in $f$ (so $f(j) = \hat f(m-2)$ and it is the same for $g$).    

In general, define $\pi_f : \{f(j)\}_{j \in [m]} \to [m]$ which maps from the function values of $f$ (which since $f \in \F^m_{\neq}$ are unique) back to the index of $f$.  That is $\pi_f(f(j)) =j$.  This notation will be useful when applied to $\hat f$; so for instance $f(\pi_f(\hat f(m-2)) = \hat f(m-2)$.  So far, by having $f,g$ in $\RoC$-canonical form and lag-homometric, we have identified the location of $\pi_f(\hat f(0)) = \pi_g(\hat g(0))$ (which is index $0$), $\pi_f(\hat f(m-1)) = \pi_g(\hat g(m-1))$ (the largest value) and $\pi_f(\hat f(m-2)) = \pi_g(\hat g(m-2))$ (the second largest value).

Next, for two positive integer $s,t$, we define $\Delta_f^{s,t}$ which will be a subset of the distinct distances in $\Delta_f$.  This assumes we have identified the location of the $t$ largest and $s$ smallest values of $f$, and we want to exclude all resolved differences between 
\begin{itemize}
\item  $f(j) - f(j')$ for $j,j' < s$, 
\item  $f(m-j) - f(j')$ for $j < t$ and $j'< s$, 
\item  $f(j)-f(m-j')$ for $j < s$ and $j'< t$, or
\item  $f(m-j) - f(m-j')$ for $j,j' < t$;  
\end{itemize}
so then the remaining differences are in $\Delta_f^{s,t}$.  
After Lemma \ref{lem:secondlarge} we have resolved the $s=1$ smallest, and $t=2$ largest and so can consider $\Delta_f^{1,2} = \Delta_g^{1,2}$.

\begin{lemma} \label{lem:Dst-induction-app}
    Consider $f,g \in \F^m_{\neq}$ which are lag-homometric and in $\RoC$-canonical form.  Assume we have resolved the locations of the $s \geq 1$ smallest and $t \geq 2$ largest values in $f,g$, confirming they are the same.  We can then confirm the location of either the $(s+1)$th smallest or $(t+1)$th largest by value of $f,g$ and confirm that this value matches.  
\end{lemma}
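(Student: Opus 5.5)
The plan is to imitate the argument of Lemma~\ref{lem:secondlarge}, with the largest \emph{unresolved} difference playing the role of $d_2$.

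\textbf{Step 1: the largest remaining difference is shared.} Write $L_f=\{\hat f(0),\dots,\hat f(s-1)\}$ and $U_f=\{\hat f(m-t),\dots,\hat f(m-1)\}$ for the resolved values. By hypothesis their values and positions $\pi_f$ agree with those of $g$. The resolved differences are therefore the same elements at the same lags in $f$ and $g$. Since $\Delta_f(k)=\Delta_g(k)$ for every $k$ and all nonzero differences are distinct, we get $\Delta_f^{s,t}(k)=\Delta_g^{s,t}(k)$ for every $k$. In particular the largest element $D$ of $\Delta_f^{s,t}=\Delta_g^{s,t}$ is common to both functions and occurs at one common lag $k_D$.

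\textbf{Step 2: locate $D$.} The argument of Lemma~\ref{lem:2secondlargest}, applied to the unresolved pairs, shows that $D$ is one of two candidates. The first is $A_f=\hat f(m-1)-\hat f(s)$, which uses the $(s+1)$th smallest value. The second is $B_f=\hat f(m-1-t)-\hat f(0)=\hat f(m-1-t)$, which uses the $(t+1)$th largest value. The same holds for $g$. If both functions attain $D$ through the same candidate, the claim follows:
\begin{itemize}
\item if $D=A_f=A_g$, then $\hat f(s)=\hat g(s)$, and the common lag $k_D$ together with the shared position of the maximum gives $\pi_f(\hat f(s))=\pi_g(\hat g(s))$;
\item if $D=B_f=B_g$, the same reasoning (anchored at index $0$) gives agreement of the $(t+1)$th largest value and its position.
\end{itemize}

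\textbf{Step 3: rule out the crossed case.} Suppose $D=A_f=B_g$ (the other crossed case is symmetric after swapping $f$ and $g$). Then
\[
\hat g(m-1-t)=\hat f(m-1)-\hat f(s),
\]
and the positions satisfy
\[
\pi_g(\hat g(m-1-t))=\pi_f(\hat f(m-1))-\pi_f(\hat f(s)) \pmod m .
\]
To reach a contradiction, compare with the resolved second largest value $\hat f(m-2)=\hat g(m-2)$ at the common index $p_2$; this index is available because $t\ge2$ and we are in $\RoC$-canonical form.
\begin{itemize}
\item In $g$, the difference $\delta_g=\hat g(m-2)-\hat g(m-1-t)$ lies in $\Delta_g(k)$ for $k=p_2-\pi_g(\hat g(m-1-t))$.
\item Lag-homometry forces $\delta_g\in\Delta_f(k)$. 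By distinctness of differences, $\delta_g$ is produced in $f$ by exactly one ordered pair of values, and that pair must sit at lag $k$.
\item Substituting, $\delta_g=\hat f(m-2)-\hat f(m-1)+\hat f(s)$. The pair realizing it would have to satisfy a nontrivial additive relation $x-y=\hat f(m-2)-\hat f(m-1)+\hat f(s)$ among values of $f$.
\end{itemize}
General position forbids such a four-term coincidence unless $\{x,y\}$ is trivially related to $\{\hat f(m-2),\hat f(m-1),\hat f(s)\}$, i.e. unless the differences $x-y$ and $\hat f(m-2)-\hat f(m-1)+\hat f(s)$ coincide as formal combinations.

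The main obstacle is precisely this last step. Distinctness of pairwise differences alone does not exclude $x-y=a-b+c$. I would first try to reduce it to a statement about pairs: rewrite it as $(x-y)-(a-b)=c$ with $c=\hat f(s)=\hat f(s)-\hat f(0)$, itself a difference. Then use the lag bookkeeping: the lag of $x-y$ must equal the lag $k$ computed above, and by the crossed identity that lag can be written as a combination of the lags of $\hat f(m-2)-\hat f(m-1)$ and $\hat f(s)-\hat f(0)$. The aim is to show that this forces $(x,y)$ to be one of the resolved pairs, which contradicts $\delta_g$ being unresolved in $f$.

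If that does not close the argument, I would strengthen the general-position hypothesis to exclude such three-term relations, which is natural for algebraic generic data. Having ruled out the crossed case, Step 2 completes the induction step.
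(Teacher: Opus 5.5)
Your Steps 1--3 follow the paper's proof closely. You use the same two candidates $\hat f(m-1)-\hat f(s)$ and $\hat f(m-1-t)$ for $\max\Delta_f^{s,t}$, and the same resolution when $f$ and $g$ use the same candidate. In the crossed case you use the same test difference $\delta_g=\hat g(m-2)-\hat g(m-1-t)$, taken at the lag through the common position $p_2$ of $\hat f(m-2)=\hat g(m-2)$. But, as you say, the proposal does not rule out the crossed case, so it does not prove the lemma. From $\delta_g\in\Delta_f(k)$ and uniqueness of differences you only learn that \emph{some} single pair of values $x,y$ of $f$ at lag $k$ satisfies $x-y=\hat f(m-2)-\hat f(m-1)+\hat f(s)$. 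Membership in $\F^m_{\neq}$ excludes coincidences between two differences, not a relation $x-y=(a-b)+(c-d)$ among three, so no contradiction follows. Delete your sentence asserting that general position forbids this coincidence: it is false for $\F^m_{\neq}$ and contradicts your next sentence. The paper closes this step by saying that the only pair at lag $k$ that could produce $\delta_2$ is the one through $p_f(m-2)$, which is already used. That is exactly the unjustified inference, since nothing forces the pair realizing $\delta_2$ in $f$ to contain $\hat f(m-2)$. So the missing idea is not supplied there either, and your diagnosis of where the argument breaks is correct.

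Your lag-bookkeeping plan will not close the gap by itself. By the crossed identity $\pi_g(\hat g(m-1-t))\equiv\pi_f(\hat f(m-1))-\pi_f(\hat f(s))$, the lag $k$ of $\delta_g$ is automatically the sum of the lags in $f$ of $\hat f(m-2)-\hat f(m-1)$ and of $\hat f(s)-\hat f(0)$. So the constraint only says that a sum of two (value difference, lag) pairs of $f$ is again such a pair. That is the same three-difference relation lifted to $\R\times\mathbb{Z}_m$, and $\F^m_{\neq}$ does not exclude it. Your fallback does work. Suppose $f(0),\dots,f(m-1)$ are affinely independent over $\mathbb{Q}$; this is invariant under shifts, rotations and $\RoC$, and compatible with the algebraic values needed for Lemma~\ref{lem:exp-iso}. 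Then $f(x)-f(y)=\hat f(m-2)-\hat f(m-1)+\hat f(s)-\hat f(0)$ is impossible. The reason is that $\hat f(m-1)$ and $\hat f(0)$ would both have to cancel against the single negative term $f(y)$, using $1\le s\le m-1-t\le m-3$. Hence $\delta_g\notin\Delta_f$ and the crossed case dies. But that proves a weaker statement under a strictly stronger hypothesis than $\F^m_{\neq}$, and Theorems~\ref{thm:fg-same} and~\ref{thm:main-invariance} would need the same strengthened hypothesis.
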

\begin{proof}
    Let $\beta = \max \Delta_f^{s,t} = \max \Delta_g^{s,t}$.  This difference must either be formed by 
    \[
    \beta = \hat f(m-1) - \hat f(s)  \quad  \text{ or } \quad \beta = \hat f(m-1-t) - \hat f(0) = \hat f(m-1-t)
    \]
    in $f$, and similar in $g$ must either be
    \[
    \beta = \hat g(m-1) - \hat g(s)  \quad  \text{ or } \quad \beta = \hat g(m-1-t) - \hat g(0) = \hat g(m-1-t).
    \]
    If in both $f$ and $g$ these are the first case, then since we know the location of $f(m-1)$ and which $\Delta_f(k) = \Delta_g(k)$ contains $\beta$ we can resolve the location of $g(s)$, resolving the location of the $(s+1)$th smallest.  
    Similarly, if both $f$ and $g$ are in the second case, then we can resolve the location of the $(t+1)$th largest.  

    What remains is to show that if they have a mismatch ($\beta = \hat f(m-1) - \hat f(s) = \hat g(m-1-t)$, or $\beta = \hat f(m-1-t) = \hat g(m-1) - \hat g(s) $), this causes a contradiction, so that case cannot happen.  These cases are symmetric, so we analyze the condition $\beta = \hat f(m-1) - \hat f(s) = \hat g(m-1-t)$.  

    Define the position function $p_f(j) = \pi_f(\hat f(j))$ and $p_g(j) = \pi_g(\hat g(j))$.  
    Now we must have $g(\pi_g(\beta)) \neq f(\pi_g(\beta))$, since otherwise $f(\pi_g(\beta)) - f(0) = \beta$ and $f(0) - f(\pi_g(\beta)) = -\beta$ would both be in $\Delta_f$, but $\beta = \hat f(m-1) - \hat f(s)$ is already accounted for through a different pair of indices, violating the uniqueness of differences in $\F^m_{\neq}$.  
    
    Now consider the lag $k = p_g(m-2) - \pi_g(\beta) \pmod m$.  Since we established $p_f(m-2) = p_g(m-2)$, both $f$ and $g$ generate a difference at this lag involving position $p_f(m-2) = p_g(m-2)$ and position $\pi_g(\beta)$.  Specifically:
    \[
    \delta_1 = f(p_f(m-2)) - f(\pi_g(\beta)) = \hat f(m-2) - f(\pi_g(\beta)) \in \Delta_f(k),
    \]
    \[
    \delta_2 = g(p_g(m-2)) - g(\pi_g(\beta)) = \hat g(m-2) - g(\pi_g(\beta)) \in \Delta_g(k).
    \]
    Since $\hat f(m-2) = \hat g(m-2)$ but $f(\pi_g(\beta)) \neq g(\pi_g(\beta))$, we have $\delta_1 \neq \delta_2$.  
    
    Now, lag-homometricity requires $\Delta_f(k) = \Delta_g(k)$ as multisets.  In particular, $\delta_2 \in \Delta_g(k)$ must also appear in $\Delta_f(k)$.  But in $\F^m_{\neq}$, each difference value at a given lag is produced by exactly one pair of indices.  The only pair in $\Delta_f(k)$ involving position $p_f(m-2)$ produces $\delta_1 \neq \delta_2$.  So $\delta_2$ would need to be produced by a \emph{different} pair $(j,j')$ with $j - j' \equiv k \pmod m$.  However, uniqueness of all nonzero differences in $\F^m_{\neq}$ means $\delta_2$ can only occur once across all of $\Delta_f$, and if it occurred at lag $k$ it would need a specific pair, but the only pair at lag $k$ subtracting from a value equal to $\hat f(m-2)$ is the one already used.  This is a contradiction, establishing that the mismatch case cannot occur.
\end{proof}

Now we can use Lemma \ref{lem:Dst-induction-app} to complete the argument.  As mentioned, $\RoC$-canonical gives us $\Delta_f^{s,t}$ with $s=1, t=2$.  This satisfies the initial conditions of Lemma \ref{lem:Dst-induction-app}, and when we invoke that lemma, it increases either $s$ or $t$.  This continues until $s+t = m$, in which case all positions of $f$ and $g$ are confirmed to be the same, completing the proof.   We can then state the following theorem.  

\begin{theorem}\label{thm:fg-same-app}
  For $f,g \in F^m_{\neq}$ which are lag-homometric in $\RoC$-canonical form, then they must be equal.      
\end{theorem}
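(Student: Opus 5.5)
The plan is an induction on the number of resolved extreme values, driven by Lemma~\ref{lem:Dst-induction-app}. Put $f,g$ in $\RoC$-canonical form. Then $f(0)=g(0)=0$ is the minimum. By Lemma~\ref{lem:canonical}, the maximum $\hat f(m-1)=\hat g(m-1)$ sits at the same index in both. By Lemma~\ref{lem:secondlarge}, together with the definition of $\RoC$-canonical form (which excludes the crossed case), the second largest value $\hat f(m-2)=\hat g(m-2)$ also agrees in value. Its index also agrees: the difference $\hat f(m-2)-\hat f(0)$ is unique in $\Delta_f$ and lies in a single $\Delta_f(k)=\Delta_g(k)$, and since index $0$ is pinned, $k$ determines the position. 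This gives the base case: the $s=1$ smallest and $t=2$ largest values are resolved, with matching positions and values.

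For the inductive step, suppose the $s$ smallest and $t$ largest values of $f$ and $g$ occupy the same indices with the same values, where $s\ge1$, $t\ge2$ and $s+t<m$. The set $\Delta_f^{s,t}$ of unresolved differences is obtained by deleting the same resolved pairs from $\Delta_f$ and $\Delta_g$. Since these two multisets agree lag by lag and the deleted pairs contribute identical entries at identical lags, we get $\Delta_f^{s,t}=\Delta_g^{s,t}$, again lag by lag.

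Lemma~\ref{lem:Dst-induction-app} then resolves either the $(s+1)$th smallest or the $(t+1)$th largest value: its position and value coincide in $f$ and $g$. In the matched case this comes from reading off the lag of $\beta=\max\Delta_f^{s,t}$ against the pinned index of the maximum, or against index $0$. So $s+t$ increases by exactly one per step.

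After $m-3$ steps we have $s+t=m$. The resolved smallest and largest values are disjoint sets of values covering all $m$ positions, so $f(j)=g(j)$ for every $j\in[m]$, i.e.\ $f=g$.

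The main obstacle is the mismatch case inside Lemma~\ref{lem:Dst-induction-app}, where $\beta=\hat f(m-1)-\hat f(s)$ in $f$ but $\beta=\hat g(m-1-t)$ in $g$. I would treat it as in that lemma. At the position $\pi_g(\beta)$ the two functions disagree; otherwise the value $\beta$ would be produced twice in $\Delta_f$, contradicting general position. Then compare against the pinned second largest value at the lag $k=p_g(m-2)-\pi_g(\beta)$. The difference $\delta_2$ in $\Delta_g(k)$ must be matched in $\Delta_f(k)$ by some pair of positions.

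Here I would strengthen the argument. Using distinctness of all differences, I would show that the matching pair in $f$ must either involve an already-resolved index, forcing a value equality that contradicts general position, or must coincide with the unique pair producing $\delta_1\ne\delta_2$. This rules out the mismatch, and the symmetric mismatch follows by applying $\RoC$.
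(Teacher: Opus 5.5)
Your outer structure is the paper's. The base case $s=1$, $t=2$ comes from $\RoC$-canonical form: the maximum via Lemma~\ref{lem:canonical}, and the second largest value via the lag of the unique difference $\hat f(m-2)-0$. You then iterate Lemma~\ref{lem:Dst-induction-app} until $s+t=m$. The base case, the lag-by-lag equality $\Delta_f^{s,t}=\Delta_g^{s,t}$, and the two matched cases are fine.

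The gap is the step you flag yourself: ruling out the mismatch $\beta=\hat f(m-1)-\hat f(s)=\hat g(m-1-t)$. The dichotomy you propose there is asserted, not proved. Lag-homometry only says that $\delta_2=\hat g(m-2)-\beta=\hat f(m-2)-\hat f(m-1)+\hat f(s)$ occurs in $\Delta_f(k)$ for \emph{some} pair $(j,j')$ with $j-j'\equiv k \pmod m$. Nothing forces that pair to contain $p_f(m-2)$ or any other resolved index. If $j,j'$ are both unresolved, then $f(j)-f(j')=\hat f(m-2)-\hat f(m-1)+\hat f(s)$ is a linear relation among up to five values of $f$, not an equality between two differences. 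So membership in $\F^m_{\neq}$ does not exclude it. Moreover, $\delta_2$ then occurs exactly once in $\Delta_f$, so uniqueness of differences is not violated either. The closing sentences of the paper's proof of Lemma~\ref{lem:Dst-induction-app} make the same jump: they only exclude pairs through $p_f(m-2)$. Falling back on the lemma therefore does not close the gap.

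Distinctness of differences cannot by itself give your dichotomy, and this is already visible without lags. Take Bloom's Golomb rulers, reflected to $\{0,4,6,9,16,17\}$ and $\{0,5,7,13,16,17\}$. They have the same set of pairwise differences, all distinct. They are neither translates nor mirror images of each other. They agree on the values $0,16,17$, with $16-0$ the second largest difference in both. They exhibit exactly this mismatch at $s=1$, $t=2$: $\beta=13$ equals $17-4$ in the first set and $13-0$ in the second. Here $\delta_2=16-13=3$ is realized in the first set as $9-6$, by two unresolved values.

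So in the cyclic setting, any contradiction must come from a genuine use of the lag labels beyond the single lag $k$, and you do not supply one. Alternatively, you could strengthen the hypothesis. Suppose the nonzero values of the canonical form of $f$ are linearly independent over $\mathbb{Q}$. Then $\delta_2$, a $\pm1$ combination of three distinct nonzero values, cannot equal any difference $f(j)-f(j')$, and the mismatch is impossible. The other mismatch is handled the same way via $\hat g(m-2)-\hat g(s)=\hat f(m-2)-\hat f(m-1)+\hat f(m-1-t)$. Under the stated hypothesis $\F^m_{\neq}$ alone, however, the induction step is not established.

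A minor point: the second mismatch case follows by swapping $f$ and $g$. Applying $\RoC$ would exchange the roles of $s$ and $t$, and lose the resolved second largest value when $s=1$.
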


\section{Sampling from RKHS}
\label{app:RKHS}

\subsection{Kernels and Hilbert Space}
\label{app:kernels}
In this section, we define our representation and highlight some of its key properties. Specifically, the representation maps every bounded function into a reproducing kernel Hilbert space (RKHS), where the induced positive definite kernel allows us to compare functions in $L^2(S^1)$ in a principled and geometrically meaningful way.

We now define an inner product on the space of signatures, which is the standard inner product of a Hilbert space.

\begin{definition}[Invariant Signature Kernel]
The \emph{Invariant Signature Kernel} \(k(f,g)\) is defined as the inner product between two signature functions \(V_f\) and \(V_g\) in \(L^2(S^1)\):
\[
k(f,g) = \langle V_f, V_g \rangle_{L^2} = \int_0^{2\pi} V_f(\alpha) \overline{V_g(\alpha)} \, d\alpha.
\]
This kernel quantifies the total similarity between the functions \(f\) and \(g\) across all possible shifts on \(S^1\).
\end{definition}

The invariant signature kernel \(k(f,g) = \langle V_f, V_g \rangle_{L^2}\) provides a powerful tool for quantifying the similarity between functions in \(L^2(S^1)\). By measuring the inner product of signature functions \(V_f\) and \(V_g\), the kernel captures the degree of alignment between \(f\) and \(g\) across all possible shifts on the unit circle \(S^1\). For a kernel to effectively serve as a similarity measure, it must be \emph{positive semi-definite}. Positive definiteness ensures that the kernel induces a valid inner product in a reproducing kernel Hilbert space (RKHS), which has several critical implications. First, it guarantees that the kernel defines a notion of distance via the induced norm, allowing meaningful comparisons of similarity: larger values of \(k(f,g)\) indicate greater similarity, while \(k(f,f) = \|V_f\|_{L^2}^2 \geq 0\) reflects the magnitude of \(f\). Positive semidefiniteness ensures the distance is a pseudometric.  
Second, positive definiteness ensures that the kernel matrix formed by evaluating \(k(f_i, f_j)\) for a set of functions is positive semi-definite, which is essential for applications in machine learning, such as kernel methods, where this property ensures numerical stability and well-defined optimization problems. Without positive semi-definiteness, the kernel could produce negative or undefined similarity measures, undermining its interpretability and utility. The following theorem establishes that our invariant signature kernel satisfies this crucial property.

\begin{remark}[Kernel as an Expectation]
Our kernel, \(k(f,g)\), can be expressed as an expected value. Let \(\alpha\) be a random variable drawn from the uniform distribution on the interval \([0, 2\pi)\). The probability density function (PDF) for this distribution is \(p(\alpha) = \frac{1}{2\pi}\) for any \(\alpha \in [0, 2\pi)\).

By the definition of expected value, we have:
\[
\mathbb{E}_{\alpha \sim U[0, 2\pi]} \left[ V_f(\alpha) \cdot V_g(\alpha) \right] = \int_0^{2\pi} \left( V_f(\alpha) \cdot V_g(\alpha) \right) \cdot p(\alpha) \,d\alpha = \frac{1}{2\pi} \int_0^{2\pi} V_f(\alpha) \cdot V_g(\alpha) \,d\alpha.
\]
Rearranging this equation gives a direct relationship between our kernel and the expectation:
\[
k(f,g) = \int_0^{2\pi} V_f(\alpha) \cdot V_g(\alpha) \,d\alpha = 2\pi \cdot \mathbb{E}_{\alpha \sim U[0, 2\pi]} \left[ V_f(\alpha) \cdot V_g(\alpha) \right].
\]
This formulation allows us to use Monte Carlo methods, such as the one proposed by \cite{RahimiRecht2007}, to approximate the kernel.
\end{remark}

\begin{remark}[Approximation Bound via Hoeffding's Inequality]
We can determine the number of random samples \(D\) required to ensure that our randomized kernel \(Z_f^T Z_g\) provides a good approximation to the true kernel \(k(f,g)\) by applying Hoeffding's inequality (a form of the Chernoff bound for bounded random variables).

First, we define a practical, finite-dimensional representation. The \emph{randomized signature} \(Z_f \in \mathbb{R}^D\) is constructed by sampling \(D\) i.i.d. shifts \(\{\alpha_1, \dots, \alpha_D\}\) uniformly from \([0, 2\pi]\) and setting:
\[
Z_f = \sqrt{\frac{2\pi}{D}} \begin{bmatrix} V_f(\alpha_1) & V_f(\alpha_2) & \cdots & V_f(\alpha_D) \end{bmatrix}^T.
\]
The inner product is then:
\[
Z_f^T Z_g = \frac{2\pi}{D} \sum_{i=1}^D V_f(\alpha_i) V_g(\alpha_i),
\]
which serves as an unbiased Monte Carlo estimator of the true kernel \(k(f,g)\). To bound the approximation error, we apply Hoeffding's inequality with the following identifications:
\begin{itemize}
    \item The number of samples is \(n = D\).
    \item The i.i.d. random variables are \(X_i = 2\pi \, V_f(\alpha_i) \, V_g(\alpha_i)\).
    \item The sample mean \(\bar{X} = \frac{1}{D} \sum_{i=1}^D X_i\) equals the randomized kernel \(Z_f^T Z_g\).
    \item The expected value \(\mathbb{E}[\bar{X}]\) is the true kernel \(k(f,g)\).
\end{itemize}
Since \(f\) and \(g\) are bounded, implying that the signature functions \(V_f(\alpha)\) and \(V_g(\alpha)\) are also bounded. Consequently, each \(X_i\) is bounded within an interval of length \(\Delta\) for some \(\Delta > 0\).

Substituting into Hoeffding's inequality yields:
\[
\Pr\left[ |Z_f^T Z_g - k(f,g)| > \varepsilon \right] \le 2 \exp\left( -\frac{2 \varepsilon^2 D}{\Delta^2} \right).
\]
To achieve a confidence level of \(1 - \delta\), we set the right-hand side equal to \(\delta\) and solve for \(D\):
\[
D \ge \frac{\Delta^2}{2 \varepsilon^2} \ln\left( \frac{2}{\delta} \right).
\]
This demonstrates that the required number of samples scales as \(O\left( \frac{\Delta^2}{\varepsilon^2} \ln \frac{1}{\delta} \right)\).

Moreover, if we normalize the functions such that \(\Phi(f(\theta) - f(\theta + \alpha)) < \frac{1}{2\pi \sqrt{2\pi}}\) for all \(\alpha, \theta \in [0, 2\pi)\), then:
\[
\int_0^{2\pi} \Phi(f(\theta) - f(\theta + \alpha)) \, d\theta \le \int_0^{2\pi} \frac{1}{2\pi \sqrt{2\pi}} \, d\theta = \frac{1}{\sqrt{2\pi}},
\]
which implies \(V_f(\alpha) \le \frac{1}{\sqrt{2\pi}}\) and similarly for \(V_g(\alpha)\). Thus, \(X_i \le 1\), allowing us to set \(\Delta = 1\) (assuming \(X_i \ge 0\)). For \(\varepsilon = 0.1\) and \(\delta = 0.01\), this yields \(D \ge 265\) (approximately).
\end{remark}

\begin{remark}[Bounding $X_i$ for $\Phi(x)=e^{-\lambda x}$]
\label{rem:bound_lambda}
Fix $\lambda>0$. For
\[
V_f(\alpha)=\int_0^{2\pi} e^{-\lambda\,(f(\theta)-f(\theta+\alpha))}\,d\theta,
\qquad
X_i=2\pi\,V_f(\alpha_i)\,V_g(\alpha_i),
\]
we require a uniform bound $|X_i|\le B$.  

A direct estimate uses $\;|f(\theta)-f(\theta+\alpha)|\le 2\|f\|_\infty\;$, giving
\[
\|V_f\|_\infty \;\le\;\int_0^{2\pi} e^{2\lambda \|f\|_\infty}\,d\theta
=2\pi\,e^{2\lambda\|f\|_\infty},
\]
and hence
\[
|X_i|\;\le\;B:=2\pi\|V_f\|_\infty\|V_g\|_\infty
\;\le\;8\pi^3\,e^{2\lambda(\|f\|_\infty+\|g\|_\infty)}.
\]
\end{remark}


\section{Error Bounds and Sampling Requirements for the FFT-Based Computation}

\begin{remark}[Error Bound of the Discretized FFT Evaluation]
\label{rem:fft_error_bound}
The continuous signature \(V_f(\alpha) = \int_0^{2\pi} e^{-\lambda(f(\theta) - f(\theta+\alpha))}\,d\theta\) must be numerically approximated by a discrete periodic convolution computed via the FFT. Let
\[
V_f^{(N_\theta)}(\alpha_k)
= \Delta\theta \sum_{j=0}^{N_\theta-1} e^{-\lambda(f(\theta_j) - f(\theta_{j+k}))},
\qquad
\theta_j = \tfrac{2\pi j}{N_\theta},
\quad
\Delta\theta = \tfrac{2\pi}{N_\theta}.
\]
Then \(V_f^{(N_\theta)}\) converges to \(V_f\) as \(N_\theta \to \infty\). To quantify this convergence, assume \(f\) is \(L\)-Lipschitz and \(2\pi\)-periodic, i.e.,
\(|f(\theta_1) - f(\theta_2)| \le L|\theta_1 - \theta_2|\).
Then via mean value theorem for any fixed \(\lambda>0\),
\[
\bigl| e^{-\lambda(f(\theta) - f(\theta+\alpha))} - e^{-\lambda(f(\theta_j) - f(\theta_{j+k}))} \bigr|
    \le 2\lambda L\, e^{2\lambda L} |\theta-\theta_j|.
\]
Standard trapezoidal-rule estimates for smooth periodic functions then yield
\begin{equation}
\label{eq:fft_error_bound}
\| V_f^{(N_\theta)} - V_f \|_\infty
    \le C(\lambda,L) \, N_\theta^{-m},
\end{equation}
for some constant \(C(\lambda,L)\) depending only on \(\lambda\) and the smoothness of \(f\), where \(m\) is the number of continuous derivatives of \(f\).
In particular, if \(f \in C^2([0,2\pi])\), then \(m=2\) and
\[
\| V_f^{(N_\theta)} - V_f \|_\infty
    = O(N_\theta^{-2}).
\]
For analytic \(f\), the convergence is even faster, exponential in \(N_\theta\), since the trapezoidal rule for periodic analytic integrands exhibits spectral accuracy.
\end{remark}

\begin{remark}[Sampling Requirement for Prescribed Accuracy]
\label{rem:sampling_requirement}
Given a target accuracy \(\varepsilon > 0\), we can select the minimal number of angular samples \(N_\theta\) such that
\[
\| V_f^{(N_\theta)} - V_f \|_\infty \le \varepsilon
\quad \text{with confidence } 1 - \delta.
\]
If \(f\) is assumed to be twice continuously differentiable and the discretization error follows the bound in \eqref{eq:fft_error_bound}, then
\[
N_\theta \ge
\left( \frac{C(\lambda,L)}{\varepsilon} \right)^{1/2}.
\]
The constant \(C(\lambda,L)\) can be estimated empirically by comparing successive resolutions or theoretically via the bound
\(C(\lambda,L) \approx 4\pi^3 \lambda L e^{2\lambda L}\).
Under stochastic perturbations of \(f\) or additive measurement noise with variance \(\sigma^2\),
Chebyshev’s inequality ensures that the probability of exceeding the target error is bounded as
\[
\Pr\bigl( \| V_f^{(N_\theta)} - V_f \|_\infty > \varepsilon \bigr)
    \le \frac{\sigma^2}{\varepsilon^2},
\]
which corresponds to a confidence level \(1-\delta\) with \(\delta = \sigma^2 / \varepsilon^2\).
Thus, for a desired confidence \(1-\delta\), it suffices to choose
\[
N_\theta \ge
\left( \frac{C(\lambda,L)}{\varepsilon\sqrt{1-\delta}} \right)^{1/2}.
\]
\end{remark}

\begin{remark}[Practical Implication]
In practice, this bound implies that doubling the sampling density improves the accuracy approximately by a factor of four when \(f\in C^2\), and exponentially when \(f\) is smooth. Therefore, the FFT-based computation not only achieves \(O(N_\theta \log N_\theta)\) complexity but also provides near-spectral convergence with respect to the number of angular samples, ensuring that high-precision computation of \(V_f\) is both fast and theoretically well-controlled.
\end{remark}

\end{document}